\keywords{Limited access pattern,expressive power,variable substitution,composition}
\newcommand{\oar}{\mathit{oar}}
\newcommand{\fcomp}{\circ}
\newcommand{\VarUniv}{\mathbb{V}} 
\newcommand{\var}{\mathit{vars}}
\newcommand{\flio}{\mathrm{FLIF}^{\rm io}}
\newcommand{\Sch}{\mathcal{S}}
\newcommand{\names}{\mathit{Names}}
\newcommand{\In}{I}
\newcommand{\Out}{O}
\newcommand{\arity}{\mathit{ar}}
\newcommand{\dom}{\mathbf{dom}}
\newcommand{\ar}{\arity{}}
\newcommand{\iar}{\mathit{iar}}
\newcommand{\FV}{\mathit{fvars}} 
\newcommand{\inst}{D}
\newcommand{\adom}{\mathrm{adom}}
\newcommand{\vex}[1]{${\VarSet} #1$-executable}
\newcommand{\VarSet}{\mathcal{V}} 
\newcommand{\sat}[3]{#1, #3 \models #2} 
\newcommand{\satD}[2]{\inst, #2 \models #1}
\newcommand{\nsatD}[2]{\inst, #2 \not \models #1}
\newcommand{\symdif}{\mathbin{\triangle}}
\newcommand{\nuin}{\nu_{\rm in}}
\newcommand{\nuout}{\nu_{\rm out}}
\newcommand{\semm}[2]{\llbracket #1 \rrbracket_{#2}}
\newcommand{\sem}[1]{\semm{#1}{\inst}}
\newcommand{\semV}[1]{\semm{#1}{\inst}^{\SVars}}
\newcommand{\conc}{\cdot}
\renewcommand{\setminus}{-} 
\newcommand{\comp}{\ensuremath \mathbin{;}}
\newcommand{\cons}{c}
\newcommand{\set}[2]{(#1 \mathop{:=} #2)}
\newcommand{\eq}[2]{(#1 = #2)}
\newcommand{\EvalV}[4]{\mathit{Eval}_{#1,#2}(#3,#4)}
\newcommand{\EvalL}[3]{\mathit{Eval}_{#1}(#2,#3)}
\newcommand{\EvalLV}[4]{\mathit{Eval}_{#1}^{#2}(#3,#4)}
\newcommand{\disj}[2]{#1 \cap #2 = \emptyset}
\newcommand{\access}{\overset{\rm access}{\bowtie}}
\newcommand{\MyIn}{\mathit{In}}
\newcommand{\compl}[1]{\overline{#1}}
\newcommand{\restr}[2]{#1|_{#2}}
\newcommand{\intersection}[1]{}
\newcommand{\SVars}{\mathbb{V}}
\newcommand{\fo}{\mathrm{FO}}
\newcommand{\FO}[1]{\fo[#1]}
\theoremstyle{plain} 
\begin{document}

\title[Executable FO \& FLIF]{Executable First-Order Queries in\texorpdfstring{\\}{} the Logic of Information Flows{\rsuper*}}
\titlecomment{{\lsuper*}This paper is the combined, extended, and fully revised
journal version of two papers presented at ICDT 2020 and ICDT 2021
\cite{flifexfo,AamerIO}.}

\thanks{
This work was partially supported by FWO project G0D9616N 
and by the Flanders AI Research Program.  Heba Aamer was supported 
by the Special Research Fund (BOF) (BOF19OWB16) while at Hasselt
University.  
Jan Van den Bussche is partially supported by the National 
Natural Science Foundations of China (61972455).}	


\author[H.~Aamer]{Heba Aamer\lmcsorcid{0000-0003-0460-8534}}[a]
\author[B.~Bogaerts]{Bart Bogaerts\lmcsorcid{0000-0003-3460-4251}}[a]
\author[D.~Surinx]{Dimitri Surinx\lmcsorcid{0000-0003-0718-8854}}[b]
\author[E.~Ternovska]{Eugenia Ternovska\lmcsorcid{0000-0003-0751-4031}}[c]
\author[J.~Van den Bussche]{Jan Van den Bussche\lmcsorcid{0000-0003-0072-3252}}[b]

\address{Vrije Universiteit Brussel, Belgium}	
\email{heba.mohamed@vub.be,bart.bogaerts@vub.be}  

\address{Hasselt University, Belgium}	
\email{surinxd@gmail.com,jan.vandenbussche@uhasselt.be}  

\address{Simon Fraser University, Canada}	
\email{ter@sfu.ca} 






\begin{abstract}
The logic of information flows (LIF) has recently been proposed as
a general framework in the field of knowledge representation.  In
this framework, tasks of procedural nature can still be modeled
in a declarative, logic-based fashion.  In this paper, we focus
on the task of query processing under limited access patterns, a
well-studied problem in the database literature.  We show that
LIF is well-suited for modeling this task.  Toward this goal, we
introduce a variant of LIF called ``forward'' LIF (FLIF), in a
first-order setting.  FLIF takes a novel graph-navigational
approach; it is an XPath-like language that nevertheless turns
out to be equivalent to the ``executable'' fragment of first-order 
logic defined by Nash and Lud\"ascher.  One can also classify 
the variables in FLIF expressions as inputs and outputs.  
Expressions where inputs and outputs are disjoint, 
referred to as io-disjoint FLIF expressions, allow a 
particularly transparent translation into 
algebraic query plans that respect the access limitations.  
Finally, we show that general FLIF expressions 
can always be put into io-disjoint form.  
\end{abstract}

\maketitle

\section*{Introduction}

An information source is said to have a limited access pattern if
it can only be accessed by providing values for a specified
subset of the attributes; the source will then respond with
tuples giving values for the remaining attributes.  A typical
example is a restricted telephone directory $\rm D(name;tel)$
that will show the phone numbers for a given name, but not the
other way around.  For another example, the public bus company 
may provide its weekdays schedule as a relation $\rm
Route(stop,\allowbreak interval;\allowbreak time,\allowbreak
line,\allowbreak next,\allowbreak duration)$ that, given a bus
stop and a time interval, outputs bus lines that stop there at a
time within the interval, together with the duration to the next
stop.  Note how we use a semicolon to separate the attributes
required to access the information source from the rest of the
attributes.

The topic of querying information sources with limited access
patterns was put on the research agenda in the mid 1990s
\cite{rsu_bindingpatterns}, and has
been intensively investigated since then, with recent work until
at least 2018
\cite{ylgcu_capabilities_lim_med,flms_qo_limited,
dgl_rec_int,lilimited,mhf_cont_int,nl_accesspatterns,nl_access_constraints,
cali_query_limit,cali_cont_limit,cali_opt_accpat,bgs_relevance_access,
bbb_access_constraints,blt_query_access_constraints,benedikt_planproofs_tods,
cali_deepweb_amw,cali_access_comput}.
The research is motivated by diverse applications, such as query
processing using indices, information integration, or querying
the Deep Web.  A review of the field was given by Benedikt et
al.\ \cite[Chapter 3.12]{benedikt_book}.

In this paper, we offer a fresh perspective on querying with
limited access patterns, based on the Logic of Information Flows
(LIF)\@.  This framework has been recently introduced in the
field of knowledge representation \cite{lif_amw,lif_frocos}.  The
general aim of LIF is to model how information propagates in
complex systems.  LIF allows machine-independent
characterizations of computation; in particular, it allows tasks
of a procedural nature to be modeled in a declarative fashion.  

In the full setting, LIF is a rich family of logics with
higher-order features.  The present paper is self-contained,
however, and we introduce here a lightweight, first-order
fragment of LIF, which we call \emph{forward} LIF (FLIF).  
Our goal then is to show that FLIF is suitable to query 
information sources with limited access patterns.

Specifically, we offer the following insights and contributions:
\begin{enumerate}
  \item
    We offer a new perspective on databases with access
    limitations, by viewing them as a graph. The nodes of the
    graph are valuations; the edges denote access to information
    sources.  The start node of an edge provides values to input
    variables, and the end node provides values to output
    variables.
  \item

    Our perspective opens the door to using a graph query
    language to query databases with access limitations.
    Standard navigational graph query languages
    \cite{nsparql,rafragments_ins,gxpath,nav_with_tc,
    angles_survey} have a logical foundation in Tarski's algebra
    of binary relations
    \cite{tarski_relcalc,maddux_originra,pratt_relcalc,catemarx_sigmodrecord}.
    However, in our situation, nodes in a graph are not abstract
    elements, but valuations that give values to variables.  

  \item

    Interestingly, LIF, in its first-order version, can be
    understood exactly as the desired extension of Tarski's algebra to
    binary relations of valuations.  LIF is a \emph{dynamic}
    logic: like first-order dynamic logic \cite{dynamiclogicbook}
    or dynamic predicate logic \cite{dplogic}, expressions of LIF
    are not satisfied by single valuations, but by \emph{pairs} of
    valuations. Such pairs represent transitions of information.
    However, LIF is very general
    and has operators, such as converse, or cylindrification,
    which do not rhyme with the limited access to information
    sources that we want to target in this work.
    Therefore, in this paper, we introduce FLIF, an instantiation
    of the LIF framework where information can only flow forward.
    Like navigational graph query languages,
    FLIF expressions define sets of pairs of valuations so that
    there is a path in the graph from the first valuation
    of the pair to the second.

  \item

    We show that FLIF is equivalent in expressive power to
    executable FO, an elegant syntactic fragment of first-order
    logic introduced by Nash and Lud\"ascher
    \cite{nl_accesspatterns}.  Formulas of executable FO can be
    evaluated over information sources in such a way that the
    limited access patterns are respected.  Furthermore, the
    syntactical restrictions are not very severe and become
    looser the more free variables are declared as inputs.

  \item

    Our equivalence result between FLIF and executable FO is
    interesting since FLIF is a simple compositional language,
    built from atomic expressions using just three navigational
    operators: composition, union, and difference.  These
    operators allow {one} to build paths, explore alternatives, and
    exclude paths. The atomic expressions are information
    accesses, tests, or variable assignments.  Thus, FLIF is a
    very different language from executable FO, where the
    classical first-order constructs (disjunction, conjunction, negation,
    quantification) are syntactically restricted to be ordered so
    as to respect the
    access limitations, and cannot simply be combined
    orthogonally.  FLIF, which directly navigates through the
    graph, is also different from other approaches in the
    literature where first the ``accessible part'' (up to some
    depth) of the database is retrieved, after which an arbitrary
    query can be evaluated on this part.

  \item

    We also specialize our result to FLIF expressions that are
    \emph{io-disjoint}.  This is a property
    coming from our companion paper
    where we analyze input and output sensitivity in LIF
    expressions \cite{lif-tocl}.  An expression $\alpha$ is
    io-disjoint if, whenever $\alpha$ can reach a
    valuation $\nu_{\rm out}$ from a valuation $\nu_{\rm in}$,
    the values of the variables in $\nu_{\rm out}$ depend only on the
    values of variables in $\nu_{\rm in}$ that have not changed in
    $\nu_{\rm out}$.  For io-disjoint expressions, the single
    valuation $\nu_{\rm out}$ contains all the relevant
    information: in this sense, the io-disjoint fragment of FLIF
    can be given a static (single-valuation) semantics as opposed
    to the dynamic semantics of full FLIF.

  \item

    We show three results on io-disjoint FLIF. First, when
    translating FLIF to executable FO, a
    more economical translation is possible if the FLIF
    expression is io-disjoint. Here, by ``economical'', we mean
    that fewer variables are needed in the FO formula, and the FO
    formula is closer in syntax to the FLIF expression.

  \item

    Second, we show that io-disjoint FLIF expressions can be
    translated into \emph{plans} in a particularly simple and
    transparent manner.  Plans are a standard way of formalizing
    query processing with limited access patterns
    \cite{benedikt_book}.  In such plans, database relations can
    only be accessed by joining them on their input attributes
    with a relation that is either given as input or has already
    been computed.  Apart from that, plans can use the usual
    relational algebra operations.  That executable FO can be
    translated into plans is well known, so, by the equivalence
    with FLIF, the same holds for FLIF\@.  However, the resulting
    plans can be rather complex, just like the classical
    translation from relational calculus to relational algebra
    \cite{ahv_book} can produce rather ugly algebra expressions
    in general. So, our result is that for io-disjoint FLIF, very
    simple plans can be produced.  The plans we generate do not
    need the renaming operator, and use only natural joins (no cartesian products or theta-joins).
    
  \item

    Third, we show that, actually, any FLIF expression can be
    simulated by an io-disjoint one.  The simulation requires
    auxiliary variables and variable renamings, and the
    correctness proof is quite intricate.  We see this
    result mainly as an expressiveness result, not as
    suggesting a practical way to evaluate arbitrary FLIF expressions.
    Indeed, these can be evaluated rather directly as is, since
    FLIF is an algebraic language in itself.
\end{enumerate}

This paper is further organized as follows.  We begin with 
some preliminaries in Section~\ref{secpre}.  Section~\ref{seclif} 
introduces the language FLIF\@.  In Section~\ref{secmodel}, 
we recall the basic setting of executable FO on databases 
with limited access patterns; furthermore, 
we prove the equivalence between FLIF and executable FO.  
In Section~\ref{secio}, we formally define the io-disjoint
fragment.
Then, in Section~\ref{seceval}, we give a translation
from that fragment to executable FO which improves upon the 
translation from FLIF from Section~\ref{secmodel}.
In Section~\ref{seceval}, we also give a translation from 
FLIF to its io-disjoint fragment.  In Section~\ref{sec:proofs2}, 
we give the correctness proofs of the translation
theorems from Sections~\ref{secmodel} and~\ref{seceval}.  
Section~\ref{secplan} discusses evaluation plans.  
Finally, we discuss related work and then conclude
in Sections \ref{secrel} and \ref{seconcl} respectively.  

\section{Preliminaries}\label{secpre}
Relational database schemas are commonly formalized as finite
relational vocabularies, i.e., finite collections of relation
names, each name with an associated arity (a natural number).
To model limited
access patterns, we additionally specify an \emph{input
arity} for each name.  For example, if $R$ has arity five and
input arity two, this means that we can only access $R$ by giving
input values, say $a_1$ and $a_2$, for the first two arguments; $R$
will then respond with all tuples $(x_1,x_2,x_3,x_4,x_5)$ in $R$
where $x_1=a_1$ and $x_2=a_2$.

Thus, formally, we define a \emph{database schema}
as a triple $\Sch=(\names,\ar,\iar)$, 
where $\names$ is a set of relation names; $\ar$ assigns a
natural number $\ar(R)$ to each name $R$ in $\names$, called the
arity of $R$; and $\iar$ similarly assigns an input arity to each
$R$, such that $\iar(R)\leq\ar(R)$.  In what follows, we use 
$\oar(M)$ (output arity) for $\ar(M) - \iar(M)$.

\begin{rem}\label{remarkmultiple}
In the literature, a more general notion of schema is often used,
allowing, for each relation name, several possible sets of
input arguments; each such set is called
an access method.  In this paper, we stick to the simplest setting
where there is only one access method per relation, consisting of
the first $k$ arguments, where $k$ is set by the input arity.
All subtleties and difficulties already show up in this setting.  
Nevertheless, our definitions and results can be easily generalized to the
setting with multiple access methods per relation.
\end{rem}

The notion of database instance remains the standard one.
Formally, we fix a countably infinite universe $\dom$ of atomic
data elements, also called \emph{constants}.  Now an \emph{instance}
$\inst$ of a schema $\Sch$ assigns to each relation name $R$ an
$\ar(R)$-ary relation $\inst(R)$ on $\dom$.  We say that $D$ is
\emph{finite} if every relation $D(R)$ is finite.  The
\emph{active domain} of $D$, denoted by $\adom(D)$, is the set of
all constants appearing in the relations of $D$.

The syntax and semantics of first-order logic (FO, relational
calculus) over $\Sch$ is well known \cite{ahv_book}.  
The set of free variables of an FO formula $\varphi$ is 
denoted by $\FV(\varphi)$.
Moreover, in formulas, we allow constants only in equalities 
of the form $x=c$, 
where $x$ is a variable and $c$ is a constant.  As we mentioned 
earlier, in writing relation atoms, we find it clearer to separate 
input arguments from output arguments by a semicolon.  Thus, we 
write relation atoms in the form $R(\bar{x}; \bar{y})$, where 
$\bar{x}$ and $\bar{y}$ are tuples of variables such that the 
length of $\bar{x}$ is $\iar(R)$ and the length of $\bar{y}$ is 
$\oar(R)$.  
For example, the relation atom $R(x,z;y,y,z)$
indicates that $R$ is a relation name with
$\ar(R)=5$ and $\iar(R)=2$; consequently, $\oar(R)=3$.

We use the ``natural'' semantics \cite{ahv_book} and let
variables in formulas range over the whole of $\dom$.  
Formally, an $X$-\emph{valuation} is a valuation 
defined on a set $X$ of variables, 
and precisely, it is a mapping $\nu : X \to \dom$.  
We will often not specify the set $X$ of variables a valuation 
is defined on when it is clear from context.
It is convenient to be able to apply valuations 
also to constants, agreeing that $\nu(c)=c$ for any valuation 
$\nu$ and any $c \in \dom$.  
Moreover, in general, for a 
valuation $\nu$, a variable $x$, and a constant $c$, we 
use $\nu[x:=c]$ for the valuation that is the same as 
$\nu$ except that $x$ is mapped to $c$.  
Additionally, we say that two valuations $\nu_1$ and $\nu_2$ agree on 
(outside) a set of variables $X$ when $\nu_1(x) = \nu_2(x)$ for 
every variable $x \in X$ ($x \not \in X$).  
Finally, given an instance $D$ of $\Sch$, 
an FO formula $\varphi$ over $\Sch$, and a valuation $\nu$ 
defined on $\FV(\varphi)$, the definition of when $\varphi$ 
is satisfied by $D$ and $\nu$, denoted by $\sat D\varphi\nu$, 
is standard.

\section{Forward LIF} \label{seclif}

In this section, we introduce the language
FLIF\@.\footnote{Pronounced as ``eff-lif''.}
The language itself is a form of dynamic logic.  
Indeed, the semantics of any FLIF expression is
defined as a set of \emph{pairs} of valuations. 
The operators are an algebraization of first-order logic 
connectives.  Although FLIF is a dynamic algebraic form of 
first-order logic, it is notable that it lacks 
quantification operators, which makes it
especially simple.

\paragraph{Syntax and semantics of FLIF: atomic expressions}
The central idea is to view a database instance as a graph.
The nodes of the graph are all possible valuations on 
some set of variables (hence the graph is infinite.)
The edges in the graph are labeled with \emph{atomic FLIF
expressions}.  
Some of the edges are merely tests (i.e., self-loops), 
while other edges represent a change in the state.

Syntactically, over a schema $\Sch$ and a set of variables $\SVars$, 
there are five kinds of
atomic expressions $\tau$, given by the following grammar:
\[
\tau ::= R(\bar x;\bar y) \mid (x=y) \mid (x=c) \mid (x:=y)
\mid (x:=c)
\]
Here, $R(\bar x;\bar y)$ is a relation atom over $\Sch$ 
as in first-order logic with $\bar x$ and $\bar y$ being 
tuples of variables in $\SVars$, 
$x$ and $y$ are variables from $\SVars$, and $c$ is a constant. 
The atomic expressions $(x = y)$ and $(x = c)$
are equality tests, while the expressions $(x := y)$ and $(x := c)$
are assignment expressions. 
From the grammar, we see that any atomic expression $\tau$ 
is defined such that 
$\var(\tau) \subseteq \SVars$ where $\var(\tau)$ is the set of 
variables used in $\tau$.

Semantically, given an instance $D$ of $\Sch$, a set of variables $\SVars$,
and an atomic expression $\tau$ over $\Sch$ and $\SVars$,
we define the set of $\tau$-labeled edges in the graph view 
of $D$ as a set $\semV\tau$ of ordered pairs of $\SVars$-valuations, 
as follows.
\begin{defi}\label{def:flif1}
\ 
\begin{enumerate}
\item
$\semV {R(\bar x;\bar y)}$ is the set of all pairs
$(\nu_1,\nu_2)$ of $\SVars$-valuations such that the concatenation
$\nu_1(\bar x)\conc\nu_2(\bar y)$ belongs to $D(R)$, and $\nu_1$
and $\nu_2$ agree outside the variables in $\bar y$.  

\item
$\semV {(x=y)}$ is the set of all identical pairs $(\nu, \nu)$ of 
$\SVars$-valuation such that $\nu(x)=\nu(y)$.
\item
Likewise, 
$\semV {(x=c)}$ is the set of all identical pairs $(\nu, \nu)$ of 
$\SVars$-valuation such that $\nu(x)=c$.
\item
$\semV {(x:=y)}$ is the set of all pairs
$(\nu_1,\nu_2)$ of $\SVars$-valuations such that $\nu_2 =
\nu_1[x:=\nu_1(y)]$.
Thus, $\nu_2(x)=\nu_1(y)$ and $\nu_2$ agrees with $\nu_1$ on all other
variables.
\item
Similarly,
$\semV {(x:=c)}$ is the set of all pairs
$(\nu_1,\nu_2)$ of $\SVars$-valuations such that $\nu_2 = \nu_1[x:=c]$.
\end{enumerate}
\end{defi}
Note that each $\semV\tau$, being a set of ordered pairs of valuations,
is a \emph{binary relation on valuations}.
When $\SVars$ is understood, we will feel free to omit
the superscript in $\semV{\tau}$.
\begin{exa}\label{ex:bustrain1}
Consider a set of variables $\SVars = \{x, y, z\}$ and a schema $\Sch$ 
with two binary relation names $B$ and $T$, both of input arity one.  
In the rest of the example, assume 
that $\dom \supseteq \{1,2,3,4,5\}$ and 
that we have an instance $\inst$ of $\Sch$ that 
assigns the relation names to the following binary relations:
\[\inst(B) = \{(1,2), (1,3), (2,3), (3,5)\} \text{ and }
\inst(T) = \{(1,4), (3,5)\}.\]
Intuitively, you could think of $B$ and $T$ as relations of 
source-destination pairs of stations that could be reached 
by bus ($B$) or train ($T$) respectively.

Figure~\ref{fig:dbExample} shows a tiny fragment of the graph view
of $D$.  It shows only three valuations and all labeled edges 
between these three valuations.  We depict valuations by 
three consecutive squares with the first being the value of $x$, the
second being the value of $y$, and the third being the value of $z$. 
\begin{figure}
    \centering
    \includegraphics[width=0.92\textwidth]{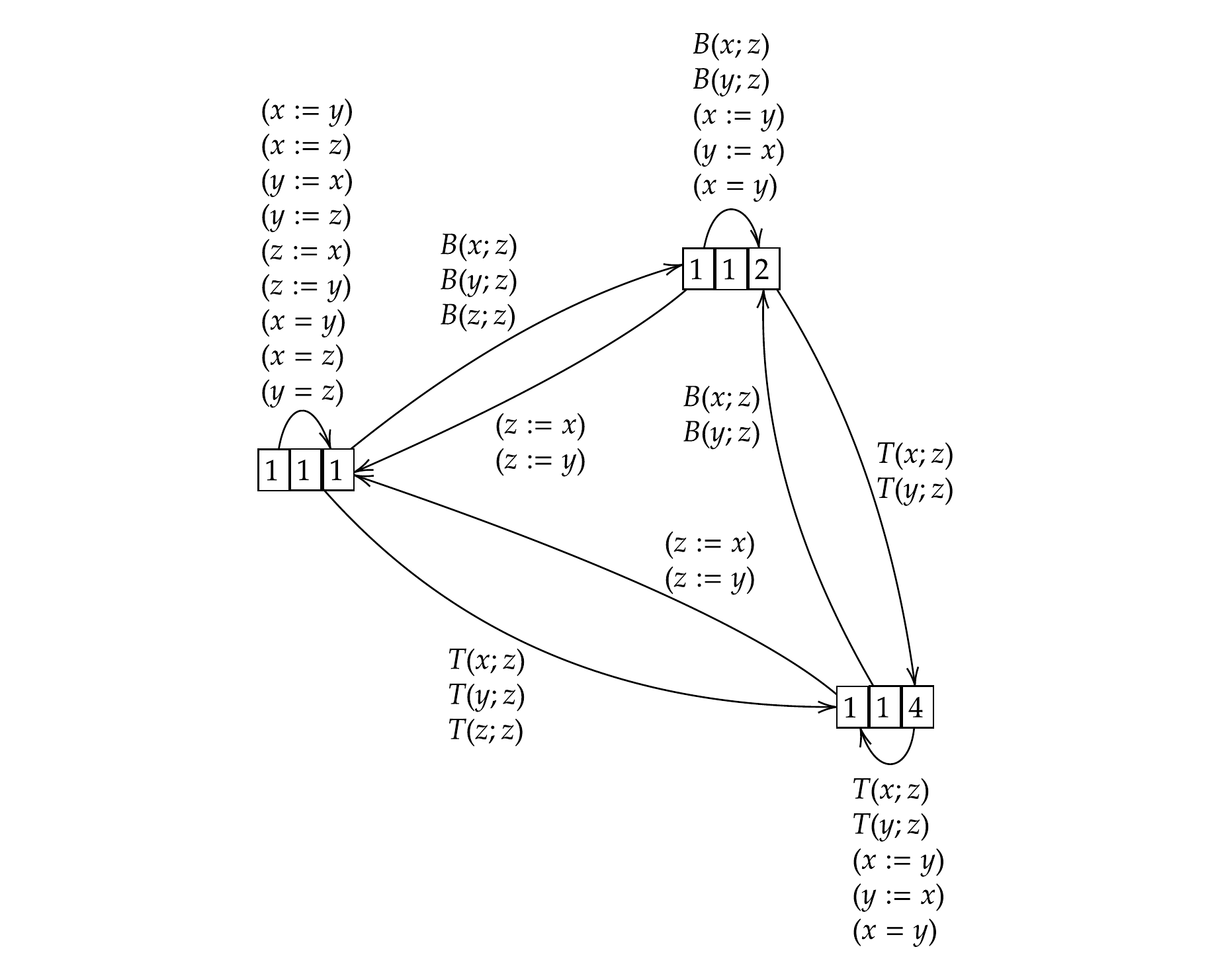}
    \caption{Part of graph view of the database considered
    in Example~\ref{ex:bustrain1}.}
    \label{fig:dbExample}
\end{figure}

For another illustration of the same graph view, let us consider the 
following atomic expressions: $B(x;x)$, $B(x;y)$, $T(y;z)$, $(x:=z)$, 
and $(x=z)$.  Figure~\ref{fig:tableview} depicts the entire 
binary relations on valuations $\semV{\tau}$ for these five 
atomic expressions $\tau$.
For each of these examples, we give a table below that shows its 
semantics (i.e., pairs of $\SVars$-valuations).  In this depiction, 
`$*$' means that the value of the variable could be anything in 
the domain, i.e., the variable in that valuation is not restricted 
to a specific value.  
Furthermore, when, in some pair, we put `$-$' in both slots for 
some variable $u$, we mean that the value for $u$ could be anything
on condition that it is the same on the left and right.
\begin{figure}
    \centering
\[
\begin{array}{ccc|ccc}
\multicolumn{6}{c}{\semV{B(x;x)}} \\
\multicolumn{3}{c}{\nu_1} & 
\multicolumn{3}{c}{\nu_2} \\
\toprule
x & y & z & x & y & z \\
\midrule
1 & - & - & 2 & - & - \\
1 & - & - & 3 & - & - \\
2 & - & - & 3 & - & - \\
3 & - & - & 5 & - & - \\
\bottomrule
\end{array}
\qquad
\begin{array}{ccc|ccc}
\multicolumn{6}{c}{\semV{B(x;y)}} \\
\multicolumn{3}{c}{\nu_1} & 
\multicolumn{3}{c}{\nu_2} \\
\toprule
x & y & z & x & y & z \\
\midrule
1 & * & - & 1 & 2 & - \\
1 & * & - & 1 & 3 & - \\
2 & * & - & 2 & 3 & - \\
3 & * & - & 3 & 5 & - \\
\bottomrule
\end{array}
\]
\[
\begin{array}{ccc|ccc}
\multicolumn{6}{c}{\semV{T(y;z)}} \\
\multicolumn{3}{c}{\nu_1} & 
\multicolumn{3}{c}{\nu_2} \\
\toprule
x & y & z & x & y & z \\
\midrule
- & 1 & * & - & 1 & 4 \\
- & 3 & * & - & 3 & 5 \\
\bottomrule
\end{array}
\]
\[
\begin{array}{ccc|ccc}
\multicolumn{6}{c}{\semV{(x:=z)}} \\
\multicolumn{3}{c}{\nu_1} & 
\multicolumn{3}{c}{\nu_2} \\
\toprule
x & y & z & x & y & z \\
\midrule
* & - & 1 & 1 & - & 1 \\
* & - & 2 & 2 & - & 2 \\
* & - & 3 & 3 & - & 3 \\
* & - & 4 & 4 & - & 4 \\
* & - & 5 & 5 & - & 5 \\
\bottomrule
\end{array}
\qquad
\begin{array}{ccc|ccc}
\multicolumn{6}{c}{\semV{(x=z)}} \\
\multicolumn{3}{c}{\nu_1} & 
\multicolumn{3}{c}{\nu_2} \\
\toprule
x & y & z & x & y & z \\
\midrule
1 & - & 1 & 1 & - & 1 \\
2 & - & 2 & 2 & - & 2 \\
3 & - & 3 & 3 & - & 3 \\
4 & - & 4 & 4 & - & 4 \\
5 & - & 5 & 5 & - & 5 \\
\bottomrule
\end{array}
\]
    \caption{Table view for the expressions considered
    in Example~\ref{ex:bustrain1}.}
    \label{fig:tableview}
\end{figure}
\end{exa}

\paragraph{Syntax and semantics of FLIF: operators}
The syntax of all FLIF expressions $\alpha$ (still over schema $\Sch$
and set of variables $\SVars$) is now given by the
following grammar:
\[\alpha ::= \tau \mid \alpha\comp \alpha \mid \alpha \cup \alpha
\intersection{\mid \alpha \cap \alpha }\mid \alpha - \alpha \]
Here, $\tau$ ranges over atomic expressions over $\Sch$ and $\SVars$,
as defined above.  The semantics of the composition operator `$;$' is
defined as follows:
\[\semV {\alpha_1 \comp \alpha_2} = \{(\nu_1,\nu_2) \mid \exists
\nu : (\nu_1,\nu) \in \semV{\alpha_1} \text{ and } 
(\nu,\nu_2) \in \semV{\alpha_2}\} \]
Note that we are simply taking the standard composition of 
two binary relations on valuations.
Similarly, the semantics of the set operations are standard union
\intersection{intersection }and set difference on binary relations on valuations.
\begin{exa}
Continuing Example~\ref{ex:bustrain1}, consider the expression 
$B(x;y) \comp T(y;x)$.
Intuitively, {this expression} takes $x$ as input and retrieves the possible values 
for $x$ and $y$ such that
\begin{enumerate}
\item you can go from station $x$ to station $y$ by a bus, 
and moreover,
\item you can go from station $y$ to \emph{a possibly different}
station $x$ by a train.
\end{enumerate}

The next table of pairs of valuations shows the semantics of that 
FLIF expression, i.e., $\semV{B(x;y) \comp T(y;x)}$.
\[
\begin{array}{ccc|ccc}
\multicolumn{3}{c}{\nu_1} & 
\multicolumn{3}{c}{\nu_2}  \\
\toprule
x & y & z & x & y & z \\
\midrule
1 & * & - & 5 & 3 & - \\
2 & * & - & 5 & 3 & - \\
\bottomrule
\vspace*{0.45cm}
\end{array}
\]
\end{exa}

We see that FLIF expressions describe paths in the graph, in the
form of source--target pairs.  Composition is used
to navigate through the graph, and to conjoin paths.
Paths can be branched using union, 
and excluded using set difference.

\begin{rem} Sometimes, in writing FLIF expressions, we omit 
parentheses around (sub)expressions involving composition since 
it is an associative operator.  Also, we give precedence to 
composition over the set operations. 
\end{rem}
\begin{exa} \label{friends1}
Consider a simple Facebook abstraction
with a single binary relation $F$ of input arity one.  When given 
a person as input, $F$ returns all their friends.
We assume that this relation is symmetric.

To illustrate the dynamic nature of FLIF, 
over just a single variable $\SVars=\{x\}$,
the expression $F(x;x); F(x;x); F(x;x)$ describes
all pairs $(\nu_1, \nu_2)$ such that there is a path
of length three from $\nu_1(x)$ to $\nu_2(x)$.

For another example, suppose, for an input person $x$ 
(say, a famous person),
we want to find all people who are friends with
at least two friends of $x$.  Formally, we want to navigate from
a valuation $\nu_1$ giving a value for $x$, to all valuations
$\nu_2$ giving values to variables $y_1$,
$y_2$, and $z$, such that
\begin{itemize}
\item
$\nu_2(y_1)$ and $\nu_2(y_2)$ are both friends with $\nu_1(x)$;
\item
$\nu_2(z)$ is friends with both $\nu_2(y_1)$ and $\nu_2(y_2)$;
and
\item
$\nu_2(y_1) \neq \nu_2(y_2)$.
\end{itemize}
This can be done by the FLIF expression 
$\alpha - (\alpha \comp (y_1=y_2))$, where
$\alpha$ is the expression
\[F(x;y_1)\comp F(x;y_2) \comp F(y_1;z) \comp F(y_2;z_1) \comp (z=z_1).\]
Note that using the extra variable $z_1$ is needed, since using 
$F(y_2;z)$ instead would result in overwriting the value of the variable 
$z$ set by the variable $y_1$.

Without the use of the extra variable, we could alternatively define 
$\alpha$ by the intersection $\alpha_1 \cap \alpha_2$,
where $\alpha_i$ is the expression
$F(x;y_1)\comp F(x;y_2) \comp F(y_i;z)$. 
We are using the intersection operator here, 
which is formally not part of FLIF as defined, 
but easily expressible as 
$\alpha_1 - (\alpha_1 - \alpha_2)$.
\end{exa}

\begin{rem}
In the above example, it would be more efficient to
simply write $\alpha \comp (y_1 \neq y_2)$.  For simplicity, we have
not added nonequality tests in FLIF as they are formally redundant
in the presence of set difference, but they can easily be added
in practice.  The purpose of this paper is to introduce the formal 
foundation of FLIF; clearly, a practical language based on FLIF
will include arithmetic comparisons and operations. 
\qed
\end{rem}

\paragraph{The evaluation problem for FLIF expressions}
Given that FLIF expressions navigate paths in the graph view
of a database instance $\inst$, {the} natural use of FLIF is to 
provide an input valuation $\nuin$ to an expression $\alpha$,
and ask for all output valuations $\nuout$ such that 
$(\nuin, \nuout) \in \sem\alpha$.  Formally, we define:
\begin{defi}[The evaluation problem $\EvalLV \alpha \SVars D \nuin$ for FLIF expression $\alpha$ over $\SVars$]
\label{defevalalphaV}
Given a database instance $D$ and a $\SVars$-valuation $\nuin$, 
the task is to compute the set 
\[\EvalLV \alpha \SVars D \nuin = \{\nuout \mid (\nuin,\nuout)
\in \semV\alpha\}.\]
\end{defi}

Assume we have effective access to the relations $R$ of $\inst$,
in the following sense.  If $R$ has input arity $i$ and 
output arity $o$, and given an $i$-tuple $t_1$, we can effectively 
retrieve the set of $o$-tuples $t_2$ such that $t_1 \cdot t_2$
belongs to $\inst(R)$.  Moreover, this set is assumed to be finite.
Assuming such effective access, which is needed for the evaluation
of atomic expressions of the form $R(\bar x; \bar y)$, it is now
obvious how more complex expressions can be evaluated.
Indeed, other atomic expressions are just assignments or tests,
and operations of FLIF are standard operations on binary relations.
In Section~\ref{secplan}, we will give an explicit description of this 
evaluation algorithm, for the ``io-disjoint'' fragment of FLIF,
in terms of relational algebra plans.  Nevertheless, the obvious
evaluation algorithm described informally above can always be applied,
also for FLIF expressions that are not io-disjoint.

\begin{exa}\label{friends3}
Recall the expression $F(x;x) \comp F(x;x) \comp F(x;x)$
from Example~\ref{friends1} over $\SVars = \{x\}$.
On input a valuation $\nuin$ on $\{x\}$, the evaluation will 
return all valuations $\nuout$ on $\{x\}$ such that there is
a path of length three from $\nuin(x)$ to $\nuout(x)$.

Next recall the expression 
$F(x;y_1)\comp F(x;y_2)\comp F(y_1;z) \comp F(y_2;z_1) \comp (z=z_1)$. 
On input valuation $\nuin$ on $\SVars = \{x,y_1,y_2,z,z_1\}$, the
evaluation will return all $\SVars$-valuations $\nuout$ such that
the tuples
$(\nuin(x),\nuout(y_1))$, $(\nuin(x),\nuout(y_2))$,
$(\nuout(y_1),\nuout(z))$, $(\nuout(y_2),\nuout(z))$
belong to relation $F$, and moreover $\nuout(z_1)=\nuout(z)$
and $\nuout(x)=\nuin(x)$.
Note in particular that the values
provided by $\nuin$ for $y_1$, $y_2$, $z$, and $z_1$
are irrelevant; only the input value $\nuin(x)$ counts.
Similarly, recall the expression
\[F(x;y_1)\comp F(x;y_2)\comp F(y_1;z) \cap 
F(x;y_1)\comp F(x;y_2)\comp F(y_2;z)\]
over $\SVars = \{x,y_1,y_2,z\}$.  On input a 
$\SVars$-valuation $\nuin$, the evaluation will return
all $\SVars$-valuations $\nuout$ such that the tuples
$(\nuin(x),\nuout(y_1))$, $(\nuin(x),\nuout(y_2))$,
$(\nuout(y_1),\nuout(z))$, $(\nuout(y_2),\nuout(z))$
belong to the relation $F$, and moreover $\nuout(x)=\nuin(x)$.

In contrast, consider the expression
\[F(x;y_1)\comp F(y_1;z) \cap F(x;y_2)\comp F(y_2;z).\]
Now on input a valuation $\nuin$ on $\SVars=\{x,y_1,y_2,z\}$,
the evaluation will return all $\SVars$-valuations $\nuout$
such that the tuples
$(\nuin(x),\nuin(y_1))$, $(\nuin(x),\nuin(y_2))$,
$(\nuin(y_1),\nuout(z))$, $(\nuin(y_2),\nuout(z))$
belong to the relation $F$, and moreover, $\nuout$ and $\nuin$
agree on $\{x,y_1,y_2\}$.  So for this expression, 
not just $\nuin(x)$, but also $\nuin(y_1)$ and $\nuin(y_2)$
are important values for the evaluation problem.
This behavior can be traced back to Definition~\ref{def:flif1},
which requires $\nu_1(y_2)=\nu_2(y_2)$ for any pair 
$(\nu_1, \nu_2) \in \semV{F(x;y_1)}$ as well as $\semV{F(y_1;z)}$.
Similarly, $\nu_1(y_1)=\nu_2(y_1)$ for any pair 
$(\nu_1, \nu_2) \in \semV{F(x;y_2)}$ as well as $\semV{F(y_2;z)}$.
\end{exa}

\section{Executable FO} \label{secmodel}
Let us recall the language known as executable FO (cf.~the Introduction).
Executability of formulas is a syntactic notion.  In the literature,
a lot of work has focused on the problem of trying to rewrite 
arbitrary FO formulas into executable form {\cite{nl_accesspatterns,rsu_bindingpatterns,lilimited,nl_access_constraints,cali_query_limit,blt_query_access_constraints,benedikt_planproofs_tods,
cali_deepweb_amw}}.
However, in this paper, we are focusing instead on using executable
FO as a gauge for accessing the expressiveness of our new language FLIF.
(Indeed, we will show that FLIF and executable FO are equivalent.)
Hence, in this paper, we work only with executable FO formulas and 
not with arbitrary FO formulas.

The notion of when a formula is executable is defined relative to
a set of variables $\VarSet$, which specifies the variables for which
input values are already given.  
Beware (in line with established work in the area \cite{lilimited,nl_accesspatterns}) that the notion of executability here is syntactic, and dependent on how subformulas are ordered within the formula.
One may think of the notion of executability discussed in 
this paper as a ``left-to-right'' executability, which shall be clear from the following examples.
Indeed, we begin with a few examples.

\begin{exa}\hfill
\begin{itemize}
\item
Let $\varphi$ be 
the formula $R(x;y)$. As mentioned above, this notation
makes clear that the input arity of $R$ is one. If we provide an
input value for $x$, then the database will give us all $y$ values 
such that $R(x,y)$ holds.  Indeed, $\varphi$ will turn out to be
$\{x\}$-executable.  Giving a value for the first argument of $R$
is mandatory, so $\varphi$ is neither
$\emptyset$-executable nor $\{y\}$-executable.
However, it is certainly allowed to provide input values for both $x$
and $y$; in that case we are merely testing if
$R(x,y)$ holds for the given pair $(x,y)$.
Thus, $\varphi$ is also $\{x,y\}$-executable.
In general, a $\VarSet$-executable formula will also be
$\VarSet'$-executable for any $\VarSet'\supseteq\VarSet$.
\item
Also, the formula $\exists y \, R(x;y)$ is $\{x\}$-executable.
In contrast, the formula $\exists x \, R(x;y)$ is not, because
even if a value for $x$ is given as input, it will be ignored due to the
existential quantification.  In fact, the latter formula is not
$\VarSet$-executable for any $\VarSet$.
\item
The formula $R(x;y) \land S(y;z)$ is $\{x\}$-executable,
intuitively because each $y$ returned by the formula $R(x;y)$ can
be fed into the formula $S(y;z)$, which is $\{y\}$-executable in itself.
In contrast, the \emph{semantically equivalent}
formula $S(y;z) \land R(x;y)$ is not $\{x\}$-executable,
because we need a value for $y$ to execute the formula $S(y;z)$.
However, the entire formula is $\{y, x\}$-executable.
\item
The formula $R(x;y) \lor S(x;z)$ is not $\{x\}$-executable,
because any $y$ returned by $R(x;y)$ would already satisfy the
formula, leaving the variable $z$ unconstrained.
This would lead to an infinite number of satisfying valuations.
The formula is neither $\{x,z\}$-executable;
if $S(x,z)$ holds for the given values for $x$ and $z$,
then $y$ is left unconstrained.  Of course, the formula is
$\{x,y,z\}$-executable.
\item
For a similar reason, $\neg R(x;y)$ is only $\VarSet$-executable for
$\VarSet$ containing $x$ and $y$.
\end{itemize}
\end{exa}


\paragraph{$\VarSet$-executable Formulas}
We now define, formally, for any set of variables $\VarSet$, the set of $\VarSet$-executable 
formulas are defined as follows.  Our definition closely follows
the original definition by Nash and Lud\"{a}scher {\cite{nl_accesspatterns}}; we only add equalities and constants to the language.
	\begin{itemize}
		\item An equality $x=y$, for variables $x$ and
		$y$, is \vex{} if at least one of $x$ and $y$ belongs to $\VarSet$.
		\item An equality $x=c$, for a variable $x$ and
		a constant $c$, is always \vex{}.
		\item A relation atom $R(\bar{x};\bar{y})$ is \vex{}
		if $X \subseteq \VarSet$, where $X$ is the set of
		variables from $\bar x$.
		\item A negation $\lnot \varphi$ is \vex{} if $\varphi$ is, and moreover $\FV(\varphi) \subseteq \VarSet$.
		\item A conjunction $\varphi \land \psi$ is
		\vex{} if $\varphi$ is, and moreover
		$\psi$ is \vex{\cup \FV(\varphi)}.
		\item A disjunction $\varphi \lor \psi$ is \vex{} if both $\varphi$ and $\psi$ are, and moreover $\FV(\varphi) \symdif \FV(\psi) \subseteq \VarSet$. Here, $\symdif$ denotes symmetric difference.
		\item An existential quantification $\exists x \, \varphi$ is \vex{} if $\varphi$ is \vex{\setminus \{x\}}.
\end{itemize}
Note that universal quantification is not part of the syntax of
executable FO.

\begin{exa}\label{friends2}
Recall the query considered in Example~\ref{friends1}, asking
for all triples $(y_1, y_2, z)$ such that, for some input $x$,
we have $F(x; y_1)$, $F(x; y_2)$, $F(y_1; z)$, $F(y_2; z)$,
and $y_1$ and $y_2$ are different.  The natural FO formula 
for this query is indeed $\{x\}$-executable:
\[F(x;y_1) \wedge F(x;y_2) \wedge F(y_1;z) \wedge F(y_2;z) \wedge \lnot (y_1=y_2).\]

Note that the above executable FO formula and FLIF expression from
Example~\ref{friends1} are quite similar in their structure.  The main 
difference is the use of the extra variable $z_1$ which was explained 
in Example~\ref{friends1}.
\end{exa}

\begin{rem} 
Continuing Remark~\ref{remarkmultiple}, in an extended setting where multiple access patterns are possible for the same relation, the simple syntax we use both in FLIF and in executable FO needs to be changed. Instead of relation atoms of the form $R(\bar x; \bar y)$ we would use adornments, which is a standard syntax in the literature on access limitations.  For example, if a ternary relation $R$ can be accessed by giving inputs to the first two arguments, or to the first and the third, then both $R^{iio}(x,y,z)$ and $R^{ioi}(x,y,z)$ would be allowed relation atoms.
\end{rem}

Given an FO formula $\varphi$ and a finite set
of variables $\VarSet$ such that $\varphi$ is $\VarSet$-executable,
we describe the following task:
\begin{defi}[The evaluation problem $\EvalV \varphi \VarSet D \nuin$ for $\varphi$
with input variables $\VarSet$] 
\label{defevalexfo}
Given a database instance $D$ and a valuation $\nuin$ on $\VarSet$,
compute the set of all valuations $\nu$ on $\VarSet \cup
\FV(\varphi)$ such that $\nuin \subseteq \nu$ and $\sat D\varphi \nu$.
\end{defi}

As mentioned in the Introduction, this problem is
known to be solvable by a relational algebra plan respecting the
access patterns.  In particular, if $D$ is finite, the output is always 
finite: each valuation $\nu$ in the output can be shown to 
take only values in $\adom(D) \cup \nuin(\VarSet)$.\footnote{
Actually, a stronger property can be shown:
only values that are ``accessible'' from $\nuin$ in $D$ can be
taken \cite{benedikt_book}, and if this accessible set is finite,
the output of the evaluation problem is finite.}

\subsection{From Executable FO to FLIF}

After introducing FLIF and executable FO, we observe that
executable FO formulas translate rather nicely to FLIF expression
as given by the following Theorem.
\begin{thm} \label{exfo2flif}
Let $\varphi$ be a $\VarSet$-executable formula over a schema $\Sch$.
There exists an FLIF expression $\alpha$ over $\Sch$ and a set of variables
$\SVars \supseteq \FV(\varphi) \cup \VarSet$ such that 
for every $D$, $\VarSet$-valuation $\nuin$, 
and $\SVars$-valuation $\nuin'$ with $\nuin' \supseteq \nuin$, we have
\[
\EvalV \varphi \VarSet D \nuin= 
\{ \nuout|_{\FV(\varphi) \cup \VarSet} \mid (\nuin', \nuout) \in \semV{\alpha}\}.
\]
\end{thm}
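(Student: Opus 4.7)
The plan is to prove the theorem by structural induction on the defining grammar of $\VarSet$-executability, building the FLIF expression $\alpha$ together with the carrier set $\SVars$ case by case. The key invariant that the induction must preserve is exactly the displayed equality, including the crucial independence in the choice of extension $\nuin'$ of $\nuin$; this invariance disciplines how the auxiliary variables in $\SVars - (\FV(\varphi) \cup \VarSet)$ are introduced and touched by $\alpha$.

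The base cases are almost mechanical. An equality $x = c$ maps to the test $(x = c)$ if $x \in \VarSet$ and to the assignment $(x := c)$ otherwise; for $x = y$ with, WLOG, $x \in \VarSet$, the translation is the test $(x = y)$ when $y \in \VarSet$ and the assignment $(y := x)$ otherwise. For a relation atom $R(\bar x; \bar y)$ with $\var(\bar x) \subseteq \VarSet$, I would not use the FLIF atom $R(\bar x; \bar y)$ directly: by Definition~\ref{def:flif1} it freely overwrites every variable of $\bar y$, which mismatches the FO reading whenever $\bar y$ has repetitions, overlaps with $\bar x$, or contains variables of $\VarSet$. Instead I would introduce a tuple $\bar y'$ of fresh auxiliary variables of matching length and translate the atom as $R(\bar x; \bar y') \comp \beta_1 \comp \cdots \comp \beta_k$, where each $\beta_i$ is the equality test $(y'_i = y_j)$ when $y_j$ is already bound (because it lies in $\VarSet \cup \var(\bar x)$ or coincides with an earlier entry of $\bar y$) and the assignment $(y_j := y'_i)$ otherwise.

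For the inductive cases the constructions are natural. A conjunction $\varphi_1 \land \varphi_2$ becomes $\alpha_1 \comp \alpha_2$, since the condition ``$\varphi_2$ is $(\VarSet \cup \FV(\varphi_1))$-executable'' ensures that the outputs of $\alpha_1$ on $\FV(\varphi_1) - \VarSet$ are exactly the additional inputs needed by $\alpha_2$. A disjunction $\varphi_1 \lor \varphi_2$ becomes $\alpha_1 \cup \alpha_2$: the symmetric-difference condition $\FV(\varphi_1) \symdif \FV(\varphi_2) \subseteq \VarSet$ forces every variable that is free in only one disjunct to lie already in $\VarSet$, so both branches produce valuations on the same target domain $\FV(\varphi_1) \cup \FV(\varphi_2) \cup \VarSet$. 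An existential $\exists x \varphi$ can simply reuse $\alpha_\varphi$, because the restriction to $\FV(\exists x \varphi) \cup \VarSet$ on the right-hand side automatically drops any value $\alpha_\varphi$ produced for $x$; when $x \in \VarSet$, one first renames $x$ to a fresh auxiliary copy throughout $\alpha_\varphi$ to guarantee that the original $\nuin(x)$ is preserved in $\nuout$.

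The main obstacle will be the negation case $\lnot \varphi$ with $\FV(\varphi) \subseteq \VarSet$. The naive candidate $(x_0 = x_0) - \alpha_\varphi$ is wrong, because $\alpha_\varphi$ may pair $\nuin'$ with some $\nuout \neq \nuin'$ that differs on auxiliary variables, whereas set difference from the identity only removes identity pairs. I would resolve this by strengthening the induction hypothesis: in addition to the stated equality, require that every pair $(\nuin', \nuout) \in \semV{\alpha}$ satisfies $\nuout(z) = \nuin'(z)$ for all $z \notin \FV(\varphi) - \VarSet$. The base and inductive constructions can be arranged to respect this by renaming existentially handled and ``scratch'' variables to fresh copies at every step, so that whenever $\FV(\varphi) \subseteq \VarSet$ the expression $\alpha_\varphi$ becomes test-like, that is, a subset of the identity on $\SVars$-valuations. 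Under this stronger invariant, negation is correctly given by $(x_0 = x_0) - \alpha_\varphi$ for any $x_0 \in \VarSet$, and the required independence of $\nuin'$ on auxiliary variables follows uniformly from the same discipline in every case.
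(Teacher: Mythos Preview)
Your overall inductive plan and most of your constructions match the paper's (composition for $\land$, union for $\lor$, reusing the subexpression for $\exists$), and your base-case translations are in fact a bit more careful than the paper's. The genuine gap is in the negation case, and it lies precisely in the strengthened invariant you propose.

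You want to require that every $(\nuin',\nuout)\in\semV{\alpha_\varphi}$ satisfies $\nuout(z)=\nuin'(z)$ for all $z\notin\FV(\varphi)-\VarSet$, so that when $\FV(\varphi)\subseteq\VarSet$ the expression $\alpha_\varphi$ is a subset of the identity and $(x_0=x_0)-\alpha_\varphi$ works. But your own relation-atom translation violates this invariant: the expression $R(\bar x;\bar y')\comp\beta_1\comp\cdots\comp\beta_k$ overwrites the fresh scratch variables $y'_i$, and these new values persist in $\nuout$. Since the $y'_i$ are auxiliary (hence lie in the set $\{z : z\notin\FV(\varphi)-\VarSet\}$), your invariant fails already at a single atom with $\bar x,\bar y\subseteq\VarSet$. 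The phrase ``renaming \ldots\ to fresh copies at every step'' does not repair this: once a scratch variable has been written, its original $\nuin'$-value is gone, and any attempt to save it first (e.g.\ $(y''_i:=y'_i)$) just shifts the problem to $y''_i$. Consequently, with $\alpha_\varphi$ not contained in the identity, the pair $(\nuin',\nuin')$ will \emph{never} be removed by $(x_0=x_0)-\alpha_\varphi$ whenever $\nuin'$ happens to disagree with the scratch values $\alpha_\varphi$ would produce, regardless of whether $\varphi$ holds.

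The paper does not try to make $\alpha_\varphi$ test-like. Instead it normalizes the auxiliary slots on \emph{both} sides of the difference: with $\xi$ the composition of all $(z:=c)$ for $z\in\var(\alpha_{\varphi_1})-\VarSet$ (adding one dummy if that set is empty), the translation of $\lnot\varphi_1$ is $\xi - (\alpha_{\varphi_1}\comp\xi)$. Since $\FV(\varphi_1)\subseteq\VarSet$, any $\nuout$ produced by $\alpha_{\varphi_1}$ agrees with $\nuin'$ outside the reset set, so after composing with $\xi$ both sides land on the single valuation $\nuin'[\,z:=c\,]$; the difference is nonempty exactly when $\alpha_{\varphi_1}$ has no output from $\nuin'$. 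That normalization trick is the missing idea in your proposal.
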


\begin{exa}
Before giving the proof, we give a few examples.
Note that in all the following examples, we only consider sets of input variables
$\VarSet$ with $\VarSet \subseteq \FV(\varphi)$.
\begin{itemize}
\item
Suppose $\varphi$ is $R(x;y)$ with input variable $x$.  
Then, as expected, $\alpha$ can be taken to be $R(x;y)$.
Suppose we have the same formula with $\VarSet = \{x, y\}$.
Intuitively, the formula asks for outputs $(u)$ where $u$ equals $y$.
Hence, $\alpha$ can be taken to be $R(x;u); (u=y)$.
Note that the FLIF expression $R(x;y)$ is not a correct translation 
since the value of $y$ may change from the value given by $\VarSet$.
\item
Now, consider $T(x;x,y)$, again with input variable $x$.
Intuitively, the formula asks for outputs $(u,y)$ where $u$ equals $x$.
Hence, a suitable FLIF translation is $T(x;u,y)\comp(u=x)$.
Note that the FLIF expression $T(x;x,y)$ is not semantically equivalent since
the value of $x$ is changeable due to the dynamic semantics of FLIF.
\item
If $\varphi$ is $R(x;y) \land S(y;z)$, still with input variable
$x$, we can take $R(x;y) \comp S(y;z)$ for $\alpha$.  The same
expression also serves for the formula $\exists y\, \varphi$.
\item
Suppose $\varphi$ is $R(x;x) \lor S(y;)$ with $\VarSet=\{x,y\}$.  For
$\VarSet \cap \FV(R(x;x))$, we translate $R(x;x)$ to $R(x;u)\comp(x=u)$.
Similarly, $S(y;)$ is translated to $S(y;)$.  Then, the final $\alpha$ 
can be taken to be $R(x;u)\comp(x=u) \cup S(y;)$.
\item
A new trick must be used for negation.  For example,
if $\varphi$ is $\neg R(x;y)$ with $\VarSet=\{x,y\}$, then $\alpha$
can be taken to be $(u:=42) \, - \, R(x;u)\comp(u=y)\comp(u:=42)$.
Composing each side of `$\setminus$' with the same dummy assignment to $u$
is required since the value of the $u$ in the second operand should
not affect the result of the needed negation.
\end{itemize}
\end{exa}
\begin{proof}[Proof Sketch of Theorem~\ref{exfo2flif}]
We only describe the translation; its correctness is proven in 
Section~\ref{proofexfo2flif}.

If $\varphi$ is a relation atom $R(\bar x;\bar y)$, then $\alpha$
is $R(\bar x;\bar z) \comp \xi$, where $\bar z$ is 
obtained from $\bar y$ by replacing each variable from $\VarSet$ by a 
fresh variable.  The expression $\xi$ consists of the composition 
of all equalities $(y_i = z_i)$ where $y_i$ is a variable from 
$\bar y$ that is in $\VarSet$ and $z_i$ is the corresponding fresh
variable.

If $\varphi$ is $x=y$, then $\alpha$ is $(x=y)$.

If $\varphi$ is $x=c$, then $\alpha$ is $(x=c)$.

If $\varphi$ is $\varphi_1 \land \varphi_2$, then by induction 
we have an expression $\alpha_1$ for $\varphi_1$ and 
$\VarSet \cap \FV(\varphi_1)$, and 
an expression $\alpha_2$ for $\varphi_2$ and 
$(\VarSet \cup \FV(\varphi_1)) \cap \FV(\varphi_2)$. 
Now $\alpha$ can be taken to be $\alpha_1 \comp \alpha_2$.

If $\varphi$ is $\exists x \, \varphi_1$, then without loss of 
generality we may assume that $x \notin \VarSet$.
By induction, we have an expression $\alpha_1$ for $\varphi_1$ and 
$\VarSet$.  This expression also works for $\varphi$.

If $\varphi$ is $\varphi_1 \lor \varphi_2$, then by induction we
have an expression $\alpha_i$ for $\varphi_i$ and $\VarSet$, 
for $i=1,2$.  Now $\alpha$ can be taken to be $\alpha_1 \cup \alpha_2$.

Finally, if $\varphi$ is $\neg \varphi_1$, then by induction we
have an expression $\alpha_1$ for $\varphi_1$ and $\VarSet$.  Fix an
arbitrary constant $c$, and let
$\xi$ be the composition of all expressions $(z:=c)$ for $z\in
\var(\alpha_1) - \VarSet$.
(If that set is empty, we add an extra fresh variable.)
Then $\alpha$ can be taken to be $\xi \; - \; \alpha_1\comp\xi$. 
\end{proof}

\subsection{From FLIF to executable FO}

The previous translation shows that FLIF is expressive enough, 
in the sense that executable FO formulas can be translated into FLIF
expressions such that they evaluate to the same set of valuations
starting from the same assignment. 
It turns out that the converse translation is also possible, 
so, FLIF exactly matches executable FO in expressive power.  

Actually, two distinct translations from FLIF to executable FO are 
possible:
\begin{enumerate}
    \item A somewhat rough translation, which translates every
    FLIF expression on a set of variables $\SVars$ to an
    equivalent $\SVars$-executable formula that uses thrice 
    the number of variables in $\SVars$; 
    \item A much finer translation, which often results in 
    $\VarSet$-executable formulas with a much smaller set $\VarSet$ 
    than the entire $\SVars$.  
    This set $\VarSet$ will consist of the ``input variables'' of 
    the given FLIF expression.  We will this idea further 
    in Sections~\ref{secio} and~\ref{seceval}.
\end{enumerate}
Next, we proceed with the rough translation.
Assume $\SVars=\{x_1,\ldots,x_n\}$.  Since the semantics of FLIF expressions
on $\SVars$ involves pairs of $\SVars$-valuations, we 
introduce a copy $\SVars_y = \{y_1,\ldots,y_n\}$ disjoint from $\SVars$.
For clarity, we also write $\SVars_x$ for $\SVars$.
By $\FO{k}$ we denote the fragment of first-order logic that uses 
only $k$ distinct variables~\cite{libkin_fmt}.

The following proposition is a variant of a result shown in our
companion paper on LIF~\cite[Proposition 7.9]{lif-tocl}.  That result
is for a larger language LIF, but it does not talk about executability.
\begin{prop}\label{prop:FLIF-BoundedFO}
Let $\Sch$ be a schema, and $\SVars_x$ a set of $n$ variables.
Then, for every FLIF expression $\alpha$ over $\Sch$ and $\SVars_x$, 
there exists a 
$\SVars_x$-executable $\FO{3n}$ formula $\varphi_\alpha$ over $\Sch$
with free variables in $\SVars_x\cup \SVars_y$ such that
\[
(\nu_1,\nu_2)\in \semV{\alpha} \quad \leftrightarrow \quad 
\inst{},(\nu_1\cup\nu_2')\models \varphi_\alpha,
\]
where $\nu_2'$ is the $\SVars_y$-valuation such that $\nu_2'(y_i) = \nu_2(x_i)$ 
for $i=1,\ldots,n$.
\end{prop}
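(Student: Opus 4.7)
My plan is a structural induction on $\alpha$ with a substantially strengthened inductive hypothesis. Introduce a third set of fresh variables $\SVars_z=\{z_1,\ldots,z_n\}$ disjoint from $\SVars_x\cup\SVars_y$, and for each variable $x_k$ of $\SVars_x$ write $x_k^{(x)}=x_k$, $x_k^{(y)}=y_k$, $x_k^{(z)}=z_k$ for its three copies. For any two distinct indices $i,j\in\{x,y,z\}$ and any FLIF expression $\alpha$, I would construct a formula $\varphi_\alpha^{i,j}$ satisfying: (a) it lies in $\FO{3n}$, using only variables in $\SVars_x\cup\SVars_y\cup\SVars_z$; (b) $\FV(\varphi_\alpha^{i,j})\subseteq\SVars_i\cup\SVars_j$, together with the auxiliary invariant $\SVars_j\subseteq\FV(\varphi_\alpha^{i,j})$; (c) $\varphi_\alpha^{i,j}$ is $\SVars_i$-executable; and (d) $(\nu_1,\nu_2)\in\semV{\alpha}$ iff $\inst,(\nu_1^{(i)}\cup\nu_2^{(j)})\models\varphi_\alpha^{i,j}$, where $\nu_1^{(i)}(x_k^{(i)})=\nu_1(x_k)$ and analogously for $\nu_2^{(j)}$. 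The proposition follows by specialising to $(i,j)=(x,y)$.

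For the atomic bases, writing $\bar u^{(\ell)}$ for the renaming of a variable tuple $\bar u$ into $\SVars_\ell$, I would use $R(\bar u^{(i)};\bar w^{(j)})\land\bigwedge_{x_k\notin\bar w}(x_k^{(i)}=x_k^{(j)})$ for $R(\bar u;\bar w)$; $(u^{(i)}=w^{(i)})\land\bigwedge_k(x_k^{(i)}=x_k^{(j)})$ for $(u=w)$ and similarly for $(u=c)$; and $(u^{(j)}=w^{(i)})\land\bigwedge_{x_k\neq u}(x_k^{(i)}=x_k^{(j)})$ for $(u:=w)$ and similarly for $(u:=c)$. Each of these formulas is $\SVars_i$-executable because the ``characterising'' conjunct and every copy-equality contain an $\SVars_i$-variable, and a direct inspection confirms that every variable of $\SVars_j$ occurs free in at least one conjunct, so the auxiliary invariant holds.

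The inductive steps are defined by $\varphi_{\alpha_1\cup\alpha_2}^{i,j}:=\varphi_{\alpha_1}^{i,j}\lor\varphi_{\alpha_2}^{i,j}$, $\varphi_{\alpha_1-\alpha_2}^{i,j}:=\varphi_{\alpha_1}^{i,j}\land\lnot\varphi_{\alpha_2}^{i,j}$, and, for the third index $k\in\{x,y,z\}\setminus\{i,j\}$,
\[
\varphi_{\alpha_1\comp\alpha_2}^{i,j}\;:=\;\exists\SVars_k\,\bigl(\varphi_{\alpha_1}^{i,k}\land\varphi_{\alpha_2}^{k,j}\bigr).
\]
The three-set rolling keeps the total variable pool at $3n$, and semantic correctness of each case is a direct unfolding of the clauses of Definition~\ref{def:flif1}. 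For executability of $\lor$, the symmetric-difference condition $\FV(\varphi_{\alpha_1}^{i,j})\symdif\FV(\varphi_{\alpha_2}^{i,j})\subseteq\SVars_i$ is automatic from (b), because both free-variable sets contain $\SVars_j$ and lie in $\SVars_i\cup\SVars_j$. For the difference case, the negation $\lnot\varphi_{\alpha_2}^{i,j}$ demands $\FV(\varphi_{\alpha_2}^{i,j})\subseteq\SVars_i\cup\FV(\varphi_{\alpha_1}^{i,j})$, which again follows from $\SVars_j\subseteq\FV(\varphi_{\alpha_1}^{i,j})$. For composition, the IH gives that $\varphi_{\alpha_2}^{k,j}$ is $\SVars_k$-executable and that $\SVars_k\subseteq\FV(\varphi_{\alpha_1}^{i,k})$, so the conjunction is $\SVars_i$-executable; prefixing $\exists\SVars_k$ variable-by-variable preserves $\SVars_i$-executability since none of the bound variables lies in $\SVars_i$.

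The main obstacle, I expect, is not the semantic matching (a mechanical unwinding of Definition~\ref{def:flif1}) but threading the rather rigid executability rules for $\lor$, $\lnot$, and $\exists$ through the induction. Without the auxiliary invariant $\SVars_j\subseteq\FV(\varphi_\alpha^{i,j})$, the symmetric-difference clause for $\lor$ and the full-freeness requirement for $\lnot$ would both fail as soon as an $\SVars_j$-variable happened to go missing from some subformula. Building that invariant into the atomic cases, at the modest cost of a handful of trivially-true copy-equalities, is the single non-obvious ingredient; once it is in place the rest of the argument is routine.
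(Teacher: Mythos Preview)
Your proposal is correct and follows essentially the same approach as the paper: introduce a third variable copy $\SVars_z$, construct formulas $\varphi_\alpha^{i,j}$ for all ordered pairs of distinct indices by structural induction, and rotate through the three copies to handle composition within $\FO{3n}$. Your explicit auxiliary invariant $\SVars_j\subseteq\FV(\varphi_\alpha^{i,j})\subseteq\SVars_i\cup\SVars_j$ is in fact a slightly more careful formulation than the paper's, which asserts the equality $\FV(\varphi_\alpha^{uv})=\SVars_u\cup\SVars_v$ (an equality that can fail for an atom $R(\bar a;\bar b)$ when some output variable does not occur among the inputs); your weaker invariant is exactly what the executability checks for $\lor$, $\lnot$, and composition actually require.
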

\begin{proof}
 The proof is by induction on the structure of $\alpha$.
 First, we introduce a third copy $\SVars_z= \{z_1,\ldots,z_n\}$ of $\SVars$.
 Moreover, for every $u,v\in \{x,y,z\}$ we define $\rho_{uv}$ as follows:
\begin{align*}
   \rho_{uv} :\SVars_u \rightarrow \SVars_v: u_i \mapsto v_i
\end{align*}
Using these functions, we can translate a valuation $\nu$ on $\SVars=\SVars_x$ 
to a corresponding valuation on $\SVars_u$ with $u \in \{y,z\}$. 
Clearly, the function composition in $\nu \circ \rho_{ux}$ does this job.

In the first part of the proof, we actually show a stronger statement by induction, 
namely that for each $\alpha$ and for every $u\neq v \in \{x,y,z\}$ 
there is a formula $\varphi^{uv}_\alpha$ in $\FO{\SVars_x\cup \SVars_y\cup \SVars_z}$
with set of free variables equal to $\SVars_u\cup \SVars_v$
such that for every  $\inst{}$,
\[
(\nu_1,\nu_2)\in \semV{\alpha}
\quad \leftrightarrow \quad  \inst{},(\nu_1\circ \rho_{ux}\cup\nu_2\circ \rho_{vx})\models \varphi^{uv}_\alpha.
\]
Since the notations $x$, $y$, $z$, $u$ and $v$ are taken, 
we use notations $a$, $b$ and $c$ for variables and $d$ for constants.
\begin{itemize}
  \item $\alpha = R(\overline{a};\overline{b})$.
  Take $\varphi^{uv}_\alpha$ to be 
  $R(\rho_{xu}(\overline{a});\rho_{xv}(\overline{b}))\land 
  \bigwedge_{c \not \in \overline{b}} \rho_{xu}(c)= \rho_{xv}(c)$.
  \item $\alpha = (a = b)$.  Take $\varphi^{uv}_\alpha$ to be 
  $\rho_{xu}(a)= \rho_{xu}(b) \land 
  \bigwedge_{c \in \SVars_x} \rho_{xu}(c)= \rho_{xv}(c)$.
  \item $\alpha = (a = d)$.  
  Take $\varphi^{uv}_\alpha$ to be 
  $\rho_{xu}(a)= d \land 
  \bigwedge_{c \in \SVars_x} \rho_{xu}(c)= \rho_{xv}(c)$.
  \item $\alpha = (a := b)$.  
  Take $\varphi^{uv}_\alpha$ to be 
  $\rho_{xv}(a)= \rho_{xu}(b) \land 
  \bigwedge_{c \in \SVars_x\setminus\{a\}} \rho_{xu}(c)= \rho_{xv}(c)$.
  \item $\alpha = (a := d)$.  
  Take $\varphi^{uv}_\alpha$ to be 
  $\rho_{xv}(a)= d \land 
  \bigwedge_{c \in \SVars_x\setminus\{a\}} \rho_{xu}(c)= \rho_{xv}(c)$.
  \item $\alpha = \alpha_1 \cup \alpha_2$.
  Take $\varphi_\alpha^{uv}$ to be 
  $\varphi^{uv}_{\alpha_1} \lor \varphi^{uv}_{\alpha_2}$.
  \item $\alpha = \alpha_1 \setminus \alpha_2$.
  Take $\varphi_\alpha^{uv}$ to be 
  $\varphi_{\alpha_1}^{uv}\land\neg \varphi_{\alpha_2}^{uv}$.
  \item $\alpha = \alpha_1\comp{}\alpha_2$. 
  Let $w \in \{x,y,z\}\setminus \{u,v\}$. 
  Take $\varphi_\alpha^{uv}$ to be
  $\exists w_1\ldots\exists w_n\;(\varphi_{\alpha_1}^{uw}\land \varphi_{\alpha_2}^{wv})$.
\end{itemize}

In the rest of the proof, we verify that $\varphi^{uv}_\alpha$ is indeed
$\SVars_u$-executable.  As for the atomic FLIF expressions, this is clear.

In case $\alpha = \alpha_1 \cup \alpha_2$, we know by induction that both
$\varphi^{uv}_{\alpha_1}$ and $\varphi^{uv}_{\alpha_2}$ are 
$\SVars_u$-executable.  
For $\varphi^{uv}_{\alpha}$ to be $\SVars_u$-executable, 
it must be the case that
$\FV(\varphi^{uv}_{\alpha_1}) \symdif \FV(\varphi^{uv}_{\alpha_2}) \subseteq 
\SVars_u$ which is trivial since
$\FV(\varphi^{uv}_{\alpha_1}) = \SVars_u \cup \SVars_v =
\FV(\varphi^{uv}_{\alpha_2})$, so $\FV(\varphi^{uv}_{\alpha_1}) \symdif \FV(\varphi^{uv}_{\alpha_2}) = \emptyset$.

Now, consider the case $\alpha = \alpha_1 \setminus \alpha_2$.  We know
by induction that both $\varphi^{uv}_{\alpha_1}$ and $\varphi^{uv}_{\alpha_2}$ 
are $\SVars_u$-executable.  
For $\varphi^{uv}_{\alpha}$ to be $\SVars_u$-executable, 
it must be the case that
$\FV(\varphi^{uv}_{\alpha_2}) \subseteq \FV(\varphi^{uv}_{\alpha_1})$ 
which is true since
$\FV(\varphi^{uv}_{\alpha_1}) = \SVars_u \cup \SVars_v =
\FV(\varphi^{uv}_{\alpha_2})$.

Finally, consider the case $\alpha = \alpha_1 \comp \alpha_2$.  We know
by induction that $\varphi^{uw}_{\alpha_1}$ is $\SVars_u$-executable 
and $\varphi^{wv}_{\alpha_2}$ is $\SVars_w$-executable.  It is clear
that $\SVars_w \subseteq \FV(\varphi^{uw}_{\alpha_1})$, consequently,
the formula $\varphi_{\alpha_1}^{uw}\land \varphi_{\alpha_2}^{wv}$ is
$\SVars_u$-executable.  Hence, since $\SVars_u$ and $\SVars_u$ are
disjoint, the same formula is $(\SVars_u \setminus \SVars_w)$-executable 
which is sufficient to show
that $\varphi^{uv}_{\alpha}$ itself is $\SVars_u$-executable.
\qedhere
\end{proof}%

Although the previous translations show that FLIF and executable FO
are effectively equivalent in expressive power, 
the translation from FLIF to executable FO overlooks some of 
the interesting relations between both formalisms and moreover, 
it uses lots of variables unnecessarily.  This is best shown by
example.
\begin{exa}\label{ex:translation2}
Consider the FLIF expression 
$\alpha = R(x_1; x_1) \comp R(x_1; x_1) \comp R(x_1; x_1)$ where 
$\SVars_x = \{x_1\}$.
According to the procedure given in the proof of 
Proposition~\ref{prop:FLIF-BoundedFO}, 
the resultant $\varphi_\alpha$ would be 
\[
\exists z_1
\Bigl[ R(x_1;z_1) \land
\exists x_1
\bigl[ R(z_1;x_1) \land R(x_1;y_1) \bigr] \Bigr].
\]

In contrast, consider the FLIF expression 
$\alpha = R(x_1; x_2) \comp R(x_2; x_3) \comp R(x_3; x_4)$ where 
$\SVars_x = \{x_1, x_2, x_3, x_4\}$.
Now $\varphi_\alpha$ would be 
\begin{multline*}
\exists z_1 \exists z_2 \exists z_3 \exists z_4 
\Bigl[ R(x_1;z_2) \land (z_1=x_1) \land (z_3=x_3) \land (z_4=x_4) \land \\
\exists x_1 \exists x_2 \exists x_3 \exists x_4 
\bigl[ R(z_2;x_3) \land (x_1=z_1) \land (x_2=z_2) \land (x_4=z_4) \land \\
R(x_3;y_4) \land (y_1=x_1) \land (y_2=x_2) \land (y_3=x_3) \bigr] \Bigr].
\end{multline*}
However, it is clear that taking $\varphi_\alpha$ to be 
$R(x_1; x_2) \land R(x_2; x_3) \land R(x_3; x_4)$ would work fine,
in the sense that given an arbitrary $\{x_1\}$-valuation $\nu$ and any 
$\SVars_x$-valuation $\nu'$ that is an extension of $\nu$ 
(i.e., $\nu' \supseteq \nu$), $\alpha$ and $\varphi_\alpha$
would evaluate to the same set of $\SVars_x$-valuations as
stated below (where $D$ below is an arbitrary instance):
\[
\EvalV {\varphi_\alpha} {\{x_1\}} D \nu
=
\EvalLV \alpha {\SVars_x} D {\nu'}
\]
This shows that the values provided for $\{x_2, x_3, x_4\}$ in $\nu'$
to evaluate the expression $\alpha$ are not important since their
values would be overwritten regardless of what $\nu'$ sets them to.
Stated differently, variables $x_2$, $x_3$, $x_4$ are \emph{outputs} of 
$\alpha$, but not \emph{inputs}; the only input variables for $\alpha$
is $x_1$.
\end{exa}

In the next section, we develop the notions of input and output 
variables of FLIF expressions more formally.  Then in 
Section~\ref{secimproved}, we will give an improved translation 
from FLIF to executable FO taking inputs into account.

\section{Inputs and outputs of forward LIF} \label{secio}

In this section, we introduce inputs and outputs of 
FLIF expressions.  In every expression, we can identify
the \emph{input} and the \emph{output} variables.
Intuitively, the output variables are those that can change
value along the execution path; the input variables are those whose 
values at the beginning of the path are needed in order to know the
possible values for the output variables.  These intuitions will
be formalized below.  We first give some examples. \\
\begin{exa} \label{exio} \hfill
\begin{itemize}
\item
In both expressions given for $\alpha$ from Example~\ref{friends1}, 
the only input variable is $x$, and the other variables are output
variables.
\item
FLIF, in general, allows expressions where a variable is both
input and output.  For example, consider the relation $\mathit{Swap}$
of input arity two that holds of quadruples of the form $(a,b,b,a)$ for
$a$ and $b$ in the $\dom$,
so the values of the first two arguments are swapped in the second
two.  Then, using the expression $\mathit{Swap}(x,y;x,y)$ 
would result in having the values of $x$ and $y$ swapped.
Formally, this expression defines all pairs of valuations $(\nu_1,\nu_2)$
such that $\nu_2(x)=\nu_1(y)$ and $\nu_2(y)=\nu_1(x)$
(and $\nu_2$ agrees with $\nu_1$ on all other variables).
\intersection{\item
Consider the expression $R(x;y_1) \cap S(x;y_2)$.  Then not only
$x$, but also $y_1$ and $y_2$ are input variables.  Indeed,
the expression $R(x;y_1)$ will pair an input valuation $\nu_1$
with an output valuation $\nu_2$ that sets $y_1$ such that
$R(\nu_1(x),\nu_2(y_1))$ holds, but $\nu_2$ will have the same
value as $\nu_1$ on any other variable.  In particular,
$\nu_2(y_2)=\nu_1(y_2)$.  The expression $S(x;y_2)$ has a similar
behavior, but with $y_1$ and $y_2$ interchanged.  Thus,
the intersection expression checks two conditions on the
input valuation; formally, it defines all identical pairs
$(\nu,\nu)$ for which $R(\nu(x),\nu(y_1))$ and
$S(\nu(x),\nu(y_2))$ hold.  Since the expression only tests
conditions, it has no output variables.}
\item
On the other hand, for the expression $R(x;y_1) \cup S(x;y_2)$,
the output variables are
$y_1$ and $y_2$. Indeed, consider an input
valuation $\nu_1$ with $\nu_1(x)=a$. The expression
pairs $\nu_1$ either with a valuation
giving a new value for $y_1$, or with a valuation giving a new value
for $y_2$.  However, $y_1$ and $y_2$ are
also input variables (together with $x$).  Indeed, when pairing
$\nu_1$ with a valuation $\nu_2$ that sets $y_2$ to some $b$ for
which $S(a,b)$ holds, we must know the value of $\nu_1(y_1)$ so as
to preserve it in $\nu_2$.  A similar argument holds for $y_2$.
\qed
\end{itemize}
\end{exa}
The semantic properties that we gave above as intuitions for 
the notions of inputs and outputs are undecidable in general 
(see related work Section~\ref{secrel}).
Here, we will work with syntactic approximations.

\begin{defi}
For any FLIF expression $\alpha$, its sets $\In(\alpha)$ and 
$\Out(\alpha)$ of input and output variables are defined in Table~\ref{tabio}. 
\end{defi} 
Note that previously we have used $\var(\alpha)$ to denote
the set of all variables occurring in the expression $\alpha$.  Since FLIF has no explicit quantification, this
 is precisely the union of $I(\alpha)$ and $O(\alpha)$.
From now on, we will also refer to this set as the \emph{free variables}.

\begin{table}
\caption{Input and output variables of FLIF expressions.  In the
case of $R(\bar x;\bar y)$, the set $X$ is the set of variables
in $\bar x$, and the set $Y$ is the set of variables in $\bar
y$. Recall that $\symdif$ is symmetric difference.}
\label{tabio}
\centering
\begin{tabular}{c|l|l}
\toprule
$\alpha$ & $\In(\alpha)$ & $\Out(\alpha)$ \\
\toprule
$R(\bar{x};\bar{y})$ & $X$ & $Y$ \\
\midrule
$\eq{x}{y}$ & $\{x, y\}$ & $\emptyset$ \\
\midrule
$(x:=y)$ & $\{y\}$ & $\{x\}$ \\
\midrule
$\eq{x}{\cons}$ & $\{x\}$ & $\emptyset$ \\
\midrule
$\set{x}{\cons}$ & $\emptyset$ & $\{x\}$ \\
\midrule
$\alpha_1 ; \alpha_2$ &
$\In(\alpha_1) \cup (\In(\alpha_2) \setminus \Out(\alpha_1))$ &
$\Out(\alpha_1) \cup \Out(\alpha_2)$ \\
\midrule
$\alpha_1 \cup \alpha_2$ &
$\In(\alpha_1) \cup \In(\alpha_2) \cup (\Out(\alpha_1) \symdif \Out(\alpha_2))$ & $\Out(\alpha_1) \cup \Out(\alpha_2)$ \\
\midrule
$\alpha_1 \setminus \alpha_2$ &
$\In(\alpha_1) \cup \In(\alpha_2) \cup (\Out(\alpha_1) \symdif
\Out(\alpha_2))$ & $\Out(\alpha_1)$ \\
\bottomrule
\end{tabular}
\end{table}

Next we establish three propositions 
that show that our definition of inputs and outputs, which is 
purely syntactic, reflects actual properties of the semantics.
(See Section~\ref{secrel} on related work for their proofs.)

The first proposition confirms an intuitive property and
can be straightforwardly verified by induction.  

\begin{prop}[Inertia property] \label{propinertia}
If $(\nu_1,\nu_2) \in \sem\alpha$ then $\nu_2$ agrees with
$\nu_1$ outside $O(\alpha)$.
\end{prop}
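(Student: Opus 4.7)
The plan is a straightforward induction on the structure of $\alpha$, which fits the statement perfectly because both $\sem{\alpha}$ and $O(\alpha)$ are defined compositionally on exactly the same grammar. The key observation that makes each case trivial is that the definition of $O(\alpha)$ in Table~\ref{tabio} mirrors precisely the ``fresh assignment zone'' permitted by the semantics given in Definition~\ref{def:flif1} and the operator clauses.

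First I would handle the five atomic cases by reading them straight off Definition~\ref{def:flif1}. For $R(\bar x;\bar y)$, the clause explicitly says that $\nu_1$ and $\nu_2$ agree outside the variables of $\bar y$, which is exactly $O(\alpha)=Y$. For $(x=y)$ and $(x=c)$, the semantics forces $\nu_1=\nu_2$, which agree everywhere, and $O(\alpha)=\emptyset$. For $(x:=y)$ and $(x:=c)$, we have $\nu_2=\nu_1[x:=\cdot]$, so they agree off of $\{x\}=O(\alpha)$.

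Next I would do the three inductive cases. For composition $\alpha_1\comp\alpha_2$, I pick the intermediate valuation $\nu$ with $(\nu_1,\nu)\in\sem{\alpha_1}$ and $(\nu,\nu_2)\in\sem{\alpha_2}$; the induction hypothesis gives that $\nu$ agrees with $\nu_1$ off $O(\alpha_1)$ and $\nu_2$ agrees with $\nu$ off $O(\alpha_2)$, so by transitivity $\nu_2$ agrees with $\nu_1$ off $O(\alpha_1)\cup O(\alpha_2)$, matching the table. For union, $(\nu_1,\nu_2)$ comes from one of the two summands, and in either case the induction hypothesis yields agreement off $O(\alpha_i)\subseteq O(\alpha_1)\cup O(\alpha_2)$. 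For difference $\alpha_1\setminus\alpha_2$, we have $(\nu_1,\nu_2)\in\sem{\alpha_1}$, so induction gives agreement off $O(\alpha_1)$, which is exactly $O(\alpha_1\setminus\alpha_2)$ by Table~\ref{tabio}; note that $\alpha_2$ plays no role here, which is precisely why its outputs do not appear on the right-hand side of the table entry.

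Honestly, there is no real obstacle: the one pedagogically interesting point is the asymmetry in the difference case, where $O$ is the output of the first operand only while $I$ collects inputs from both operands together with the symmetric difference of outputs. Making this point explicitly is worthwhile so the reader sees that the inertia proposition reflects just one half of the design of Table~\ref{tabio}; the companion propositions alluded to after it will use the other ingredients (in particular the $\Out(\alpha_1)\symdif\Out(\alpha_2)$ term in $I$).
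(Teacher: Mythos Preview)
Your proposal is correct and follows essentially the same route as the paper: the paper only remarks that the result ``can be straightforwardly verified by induction'' (deferring the details to the companion paper), and the structural induction you spell out---atomic cases read off Definition~\ref{def:flif1}, composition via a witness $\nu$ and transitivity, union by case split, difference by $\sem\alpha\subseteq\sem{\alpha_1}$---is exactly that straightforward argument. Your observation about the asymmetry in the difference case is a nice expository touch but not part of the paper's treatment.
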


The second proposition confirms, as announced earlier, that the
semantics of expressions depends only on the free variables;
outside $\var(\alpha)$, the binary relation $\sem \alpha$ is
\emph{cylindrical}, i.e., contains all possible data elements.\footnote{This terminology is borrowed from cylindrical set algebra \cite{il_cylindric,vdb_tarski}.}
An illustration of this was already given in Figure~\ref{fig:tableview}, using the asterisk indications.

\begin{prop}[Free variable property] \label{propfv}
Let $(\nu_1,\nu_2) \in
\sem\alpha$ and let $\nu_1'$ and $\nu_2'$ be valuations such that
\begin{itemize}
\item
$\nu_1'$ agrees with $\nu_1$ on $\var(\alpha)$, and
\item
$\nu_2'$ agrees with $\nu_2$ on $\var(\alpha)$, and agrees with $\nu_1'$
outside $\var(\alpha)$.
\end{itemize}
Then also $(\nu_1',\nu_2') \in \sem\alpha$.
\end{prop}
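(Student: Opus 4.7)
The plan is to proceed by structural induction on $\alpha$, using Proposition~\ref{propinertia} in a crucial way in the non-atomic cases. Note first that the third hypothesis on $(\nu_1',\nu_2')$ (agreement outside $\var(\alpha)$) is exactly the inertia property for the target pair, which is consistent with what Proposition~\ref{propinertia} guarantees for the original pair $(\nu_1,\nu_2) \in \sem\alpha$.

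For each of the five atomic expressions, I would inspect Definition~\ref{def:flif1}. The defining conditions always have the form ``some constraint on the values of $\nu_1$ and $\nu_2$ at variables in $\var(\tau)$'' together with ``$\nu_1$ and $\nu_2$ agree off some subset of $\var(\tau)$'' (off $Y$ for $R(\bar x;\bar y)$; everywhere for equality tests; off $\{x\}$ for assignments). The value constraints transfer immediately to $(\nu_1',\nu_2')$ since these coincide with $\nu_1,\nu_2$ on $\var(\tau)$. The agreement constraint is checked by splitting variables into those inside $\var(\tau)$ (where the primed valuations inherit agreement from the unprimed ones) and those outside $\var(\tau)$ (where $\nu_1' = \nu_2'$ is supplied directly by the third hypothesis).

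For the union case $\alpha_1 \cup \alpha_2$, I would case-split on which $\alpha_i$ contains the witnessing pair and apply the induction hypothesis to that operand. The only subtle point is verifying the hypothesis ``$\nu_2'$ agrees with $\nu_1'$ outside $\var(\alpha_i)$'': for $v \notin \var(\alpha)$ this is given; for $v \in \var(\alpha_{3-i}) \setminus \var(\alpha_i)$, apply Proposition~\ref{propinertia} to $(\nu_1,\nu_2) \in \semV{\alpha_i}$ (noting that $v \notin \var(\alpha_i) \supseteq O(\alpha_i)$) and transfer the resulting equality through the primed/unprimed coincidences on $\var(\alpha)$. The composition case $\alpha_1 \comp \alpha_2$ is handled by fixing a witness $\nu$ and defining an intermediate $\nu'$ that agrees with $\nu$ on $\var(\alpha)$ and with $\nu_1'$ elsewhere; applying the induction hypothesis to the pairs $(\nu_1,\nu) \in \semV{\alpha_1}$ with target $(\nu_1',\nu')$, and $(\nu,\nu_2) \in \semV{\alpha_2}$ with target $(\nu',\nu_2')$, yields the desired membership. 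The necessary agreement conditions are again discharged by the same inertia-plus-restriction argument.

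The main obstacle, I expect, is the difference case $\alpha_1 \setminus \alpha_2$. The positive part $(\nu_1',\nu_2') \in \semV{\alpha_1}$ follows from the induction hypothesis exactly as in the union case. For the negative part, I would suppose, for contradiction, that $(\nu_1',\nu_2') \in \semV{\alpha_2}$ and apply the induction hypothesis \emph{in reverse}, with $(\nu_1',\nu_2')$ playing the role of the original pair and $(\nu_1,\nu_2)$ playing the role of the target; this would give $(\nu_1,\nu_2) \in \semV{\alpha_2}$, contradicting the assumption. The delicate step is checking that $\nu_2$ agrees with $\nu_1$ outside $\var(\alpha_2)$: for $v \notin \var(\alpha)$ I invoke Proposition~\ref{propinertia} on $(\nu_1,\nu_2) \in \semV{\alpha_1}$; for $v \in \var(\alpha_1) \setminus \var(\alpha_2)$, Proposition~\ref{propinertia} applied to the \emph{primed} pair $(\nu_1',\nu_2') \in \semV{\alpha_2}$ yields $\nu_1'(v) = \nu_2'(v)$, and then the first two hypotheses transfer this equality to $\nu_1(v) = \nu_2(v)$. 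This combined use of Proposition~\ref{propinertia} on two different pairs, for two different subexpressions, is the heart of the argument.
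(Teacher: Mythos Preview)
Your proposal is correct. The paper itself does not give a proof of this proposition in the body text; it defers to the companion paper \cite{lif-tocl} (see Section~\ref{secrel}). The LaTeX source does, however, contain a commented-out proof, and it is instructive to compare.

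That proof takes a slightly different route from yours. It first reformulates the statement: writing $\nu[\mu]$ for the valuation that overwrites $\nu$ by a partial valuation $\mu$, the property becomes ``if $(\nu_1,\nu_2)\in\sem\alpha$ then $(\nu_1[\mu],\nu_2[\mu])\in\sem\alpha$ for every $\mu$ defined on the complement of $\var(\alpha)$.'' A short helper lemma (using inertia) then upgrades this to allow $\mu$ defined on any \emph{subset} of the complement of $\var(\alpha)$. With that in hand, the inductive steps become almost mechanical: since $\var(\alpha_i)\subseteq\var(\alpha)$, the complement of $\var(\alpha)$ is a subset of the complement of $\var(\alpha_i)$, so the same $\mu$ can be pushed through each subexpression without any case-splitting on ``variables in $\var(\alpha)\setminus\var(\alpha_i)$.'' In the difference case, the contrapositive is obtained by substituting back $\nu_1|_{\overline{\var(\alpha)}}$.

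Your approach works directly with the original formulation and pays for it with the explicit case analysis on variables in $\var(\alpha_{3-i})\setminus\var(\alpha_i)$ (union) and $\var(\alpha_1)\setminus\var(\alpha_2)$ (difference), each time invoking inertia on the appropriate pair. This is perfectly sound, and arguably more transparent about where inertia is actually needed; the paper's reformulation hides that dependence inside the helper lemma. The trade-off is that the paper's version scales more gracefully if one were to add further operators.
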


The third proposition is the most important one.  
It confirms that the values for the input 
variables determine the values for the output variables. 

\begin{prop}[Input-output determinacy] \label{propindet}
Let $(\nu_1,\nu_2) \in \sem\alpha$ and let $\nu_1'$ be a
valuation that agrees with $\nu_1$ on $I(\alpha)$.  Then there
exists a valuation $\nu_2'$ that agrees with $\nu_2$ on
$O(\alpha)$, such that $(\nu_1',\nu_2') \in \sem\alpha$.
\end{prop}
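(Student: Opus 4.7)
The proof proceeds by structural induction on $\alpha$, treating each of the five atomic cases and the three operator cases. The argument relies on Propositions~\ref{propinertia} and~\ref{propfv} as well, so really all three statements are established by a simultaneous induction. For the atomic expressions the witness $\nu_2'$ is essentially forced. For a relation atom $R(\bar x;\bar y)$, take $\nu_2'$ to agree with $\nu_2$ on the variables in $\bar y$ and with $\nu_1'$ elsewhere; since $\nu_1'(\bar x)=\nu_1(\bar x)$ by agreement on $\In(\alpha)$, the tuple $\nu_1'(\bar x)\conc\nu_2'(\bar y)$ still lies in $D(R)$. For the two equality tests set $\nu_2'=\nu_1'$, which still passes the test because $\nu_1'$ agrees with $\nu_1$ on the test variables. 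For the assignments, take $\nu_2'=\nu_1'[x:=\nu_1'(y)]$ or $\nu_1'[x:=c]$ respectively.

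For composition $\alpha_1\comp\alpha_2$, let $\nu$ be an intermediate valuation witnessing $(\nu_1,\nu_2)\in\sem\alpha$. The induction hypothesis on $\alpha_1$ gives $\nu'$ with $(\nu_1',\nu')\in\sem{\alpha_1}$ and $\nu'$ agreeing with $\nu$ on $\Out(\alpha_1)$. To invoke the induction hypothesis on $\alpha_2$ one needs $\nu'$ to agree with $\nu$ also on $\In(\alpha_2)\setminus\Out(\alpha_1)$; this follows because that set lies in $\In(\alpha)$ and inertia for $\alpha_1$ preserves its value in both the old and new pairs. A second induction step plus a small inertia-for-$\alpha_2$ check on $\Out(\alpha_1)\setminus\Out(\alpha_2)$ then provides a $\nu_2'$ agreeing with $\nu_2$ on all of $\Out(\alpha)=\Out(\alpha_1)\cup\Out(\alpha_2)$. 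For union $\alpha_1\cup\alpha_2$, assume WLOG $(\nu_1,\nu_2)\in\sem{\alpha_1}$; induction yields $\nu_2'$ with $\nu_2'=\nu_2$ on $\Out(\alpha_1)$ and $(\nu_1',\nu_2')\in\sem{\alpha_1\cup\alpha_2}$. The extra agreement on $\Out(\alpha_2)\setminus\Out(\alpha_1)$ comes for free: inertia gives $\nu_2=\nu_1$ and $\nu_2'=\nu_1'$ on that set, and because it lies in $\Out(\alpha_1)\symdif\Out(\alpha_2)\subseteq\In(\alpha)$ one has $\nu_1'=\nu_1$, hence $\nu_2'=\nu_2$ there.

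The substantial work is in the set-difference case $\alpha_1\setminus\alpha_2$. Induction on $\alpha_1$ yields $\nu_2'$ with $\nu_2'=\nu_2$ on $\Out(\alpha_1)=\Out(\alpha)$ and $(\nu_1',\nu_2')\in\sem{\alpha_1}$; the delicate part is ruling out $(\nu_1',\nu_2')\in\sem{\alpha_2}$. Suppose for contradiction this membership holds. Inertia for $\alpha_2$ forces $\nu_2'=\nu_1'$ outside $\Out(\alpha_2)$, and combined with $\nu_2'=\nu_2$ on $\Out(\alpha_1)$ and $\nu_1'=\nu_1$ on $\Out(\alpha_1)\setminus\Out(\alpha_2)\subseteq\In(\alpha)$, this forces $\nu_2=\nu_1$ on $\Out(\alpha_1)\setminus\Out(\alpha_2)$; together with inertia for $\alpha_1$, one obtains $\nu_2=\nu_1$ on the entire complement of $\Out(\alpha_2)$. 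Now apply the induction hypothesis to $\alpha_2$ in the reverse direction, starting from $(\nu_1',\nu_2')\in\sem{\alpha_2}$ with new input $\nu_1$ (valid because $\nu_1=\nu_1'$ on $\In(\alpha_2)\subseteq\In(\alpha)$), to obtain $\tilde\nu_2$ with $(\nu_1,\tilde\nu_2)\in\sem{\alpha_2}$ and $\tilde\nu_2=\nu_2'$ on $\Out(\alpha_2)$. A short bookkeeping check shows $\tilde\nu_2=\nu_2$ on every variable: on $\Out(\alpha_2)\cap\Out(\alpha_1)$ both equal $\nu_2$; on $\Out(\alpha_2)\setminus\Out(\alpha_1)\subseteq\In(\alpha)$ both reduce to $\nu_1$; outside $\Out(\alpha_2)$ both equal $\nu_1$, by inertia for $\alpha_2$ applied to $(\nu_1,\tilde\nu_2)$ together with the equality just derived. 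Hence $(\nu_1,\nu_2)=(\nu_1,\tilde\nu_2)\in\sem{\alpha_2}$, contradicting $(\nu_1,\nu_2)\notin\sem{\alpha_2}$. The principal obstacle is precisely this propagation of equalities: the reverse determinacy for $\alpha_2$ only guarantees agreement on $\Out(\alpha_2)$, so the crux is to leverage inertia for both $\alpha_1$ and $\alpha_2$, together with the fact that $\In(\alpha)$ absorbs $\Out(\alpha_1)\symdif\Out(\alpha_2)$, to pin $\tilde\nu_2$ down to $\nu_2$ everywhere.
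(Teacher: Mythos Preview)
Your proof is correct, but it takes a different route from the paper's. The paper does not prove Proposition~\ref{propindet} directly; instead it first establishes that input--output determinacy is equivalent to the \emph{alternative form} (Lemma~\ref{altidlemma}): if $(\nu_1,\nu_2)\in\sem\alpha$ and $\nu_1'$ agrees with $\nu_1$ both on $I(\alpha)$ \emph{and} outside $O(\alpha)$, then $(\nu_1',\nu_2)\in\sem\alpha$ with the \emph{same} $\nu_2$. The equivalence is shown using the free-variable property, and then the alternative form is proved by a straightforward structural induction. The payoff is dramatic in the difference case: since $\nu_2$ never changes, one simply observes that $\overline{O(\alpha)}\cup I(\alpha)=\overline{O(\alpha_1)}\cup\overline{O(\alpha_2)}\cup I(\alpha_1)\cup I(\alpha_2)$, so the hypothesis on $\nu_1'$ immediately fits the inductive hypotheses for both $\alpha_1$ and $\alpha_2$, and the contradiction $(\nu_1,\nu_2)\in\sem{\alpha_2}$ follows in two lines.

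Your direct induction on the original statement is perfectly valid, but it forces you to track a moving target $\nu_2'$, which is why your difference case needs the ``reverse'' application of the induction hypothesis on $\alpha_2$ and the careful three-part bookkeeping to pin $\tilde\nu_2$ down to $\nu_2$. The paper's detour through the alternative form trades a one-time equivalence proof for a much cleaner induction; your approach avoids that detour at the cost of a more intricate difference case. Both are sound; the paper's is arguably more modular, since the alternative form is also reused elsewhere (e.g., in the proof of Theorem~\ref{flio2exfo}).
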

By inertia, we can see that the valuation
$\nu'_2$ given by the above proposition is unique.
Moreover, using the free variable property, we showed
the input-output determinacy property is equivalent to the following 
alternative form.
\begin{lem}[Input-output determinacy, alternative form]
\label{altidlemma}
Let $(\nu_1,\nu_2) \in \sem\alpha$ and let $\nu_1'$ be a
valuation that agrees with $\nu_1$ on $I(\alpha)$ as well as
outside $O(\alpha)$.  Then also $(\nu_1',\nu_2) \in \sem\alpha$.
\end{lem}
Intuitively, it is easier to work with the alternative form
since we have to consider only three valuations instead of
four in the original form.
\begin{exa}
Let us denote the expression $R(x;y) \cup S(x;z)$ by $\alpha$.
The definitions in Table~\ref{tabio} yield that $O(\alpha)=\{y,z\}$
and $I(\alpha)=\{x,y,z\}$.  Having $y$ and $z$ as input
variables may at first sight seem counterintuitive.
To see semantically why, say, $y$ is an input variable
for $\alpha$, consider an instance $D$ where
$S$ contains the pair $(1,3)$.  Consider the
valuation $\nu_1=\{(x,1),(y,2),(z,0)\}$, and let
$\nu_2=\nu_1[z:=3]$.  Clearly $(\nu_1,\nu_2) \in \sem{S(x;z)}
\subseteq \sem\alpha$.
However, if we change the value of $y$ in $\nu_1$, letting
$\nu_1'=\nu_1[y:=4]$, then $(\nu_1',\nu_2)$ neither belongs
to $\sem{S(x;z)}$ nor to $\sem{R(x;y)}$ (due to inertia).
Thus, input-output determinacy would be
violated if $y$ would not belong to $I(\alpha)$.
\end{exa}



We are now in a position to formulate a new version of the FLIF 
evaluation problem that takes the inputs into consideration.  
Given an expression $\alpha$, we consider the following
task:\footnote{For a valuation $\nu$ on a set of variables $X$
(possibly all variables), and a subset $Y$ of $X$, we use
$\nu|_Y$ to denote the restriction of $\nu$ to $Y$, 
i.e., $\nu|_Y$ is the function from the variables in $Y$ 
to $\dom$ that agrees with $\nu$ on $Y$.}

\begin{defi}[The evaluation problem $\EvalL \alpha D \nuin$ for $\alpha$]
\label{defevalalphainput}
Given a database instance $D$ and a valuation $\nuin$ on
$I(\alpha)$, the task is to compute the set 
\[\EvalL \alpha D \nuin = \{\nuout|_{\var(\alpha)} \mid \exists
\nuin' : \nuin \subseteq \nuin' \text{ and } (\nuin',\nuout)
\in \sem\alpha\}.\]
\end{defi}
By inertia and input-output determinacy, the choice of $\nuin'$ in the above
definition of the output does not matter.
We show this formally in the next Remark.
\begin{rem}
The above definition improves on Definition~\ref{defevalalphaV}
in that it is formally independent of the encompassing
universe $\SVars$ of variables; it intrinsically only depends 
on the input and output variables of $\alpha$.  Indeed, formally, 
given any FLIF expression $\alpha$ on $\SVars$, and any 
$\SVars$-valuation $\nu$, it is not hard to see that 
the following equivalence holds:
\[
\EvalL \alpha D {\nu|_{I(\alpha)}} = 
\{\nu'|_{\var(\alpha)} \mid \nu' \in \EvalLV \alpha \SVars D \nu \}
\]
\end{rem}
\begin{proof}
It suffices to show the `$\subseteq$' direction; the other direction
is clear from the definitions.  Let $\nu$ be a $\SVars$-valuation.
Suppose that there exists an arbitrary $\SVars$-valuation $\nuin$
such that $\nuin \supseteq \nu|_{I(\alpha)}$ (i.e., $\nuin = \nu$ 
on $I(\alpha)$) and $(\nuin, \nuout) \in \semV \alpha$.
We want to show that there exists a valuation $\nuout'$
such that $(\nu, \nuout') \in \semV \alpha$ and $\nuout = \nuout'$ 
on $\var(\alpha)$.

From the facts that $\nu = \nuin$ on $I(\alpha)$ and that
$(\nuin, \nuout) \in \semV \alpha$, it follows by input-output
determinacy that there exists a valuation $\nuout'$ such that
$(\nu, \nuout') \in \semV \alpha$ and $\nuout' = \nuout$ on $O(\alpha)$.
It remains to verify that $\nuout' = \nuout$ on $I(\alpha) - O(\alpha)$.
Indeed, this is true since $\nuout = \nuin = \nu = \nuout'$ on 
$I(\alpha) - O(\alpha)$, where the first and third equalities hold 
because of inertia and having both $(\nuin, \nuout)$ and 
$(\nu, \nuout')$ in $\semV \alpha$.  The middle equality 
follows from the fact that $\nu = \nuin$ on $I(\alpha)$.
\end{proof}

Consider an FLIF expression $\alpha$ for which the set $O(\alpha)$ is
disjoint from $I(\alpha)$. Then any pair $(\nu_1,\nu_2) \in
\sem\alpha$ satisfies that $\nu_1$ and $\nu_2$ are equal on
$I(\alpha)$.  Put differently, every $\nuout \in \EvalL \alpha D
\nuin$ is equal to $\nuin$ on $I(\alpha)$; all that the
evaluation does is expand the input valuation with output
values for the new output variables.  This makes the evaluation
process for expressions $\alpha$ where $I(\beta)\cap O(\beta)=\emptyset$,
for every subexpression $\beta$ of $\alpha$ (including $\alpha$
itself), very transparent in which input slots remain intact while
output slots are being filled.  We call such expressions
\emph{io-disjoint}.

\begin{exa} \label{exiod}
Continuing Example~\ref{friends1} (friends), the
expression $F(x;x)$ is obviously not io-disjoint.  Evaluating
this expression will overwrite the variable $x$ with a friend of the
person originally stored in $x$.  In contrast, both expressions given 
for $\alpha$ in Example~\ref{friends1} are io-disjoint.
Also the expression 
$R(x_1; x_2) \comp R(x_2; x_3) \comp R(x_3; x_4)$ 
from Example~\ref{ex:translation2} is io-disjoint.
Finally, the expression $R(x;y_1) \cup S(x;y2)$ already seen 
in Example~\ref{exio} is not io-disjoint.
\qed
\end{exa}

Formally, we have the following useful property, which follows from
inertia and input-output determinacy.
\begin{prop}[Identity property] \label{ip}
Let $\alpha$ be an io-disjoint expression and let $D$ be an
instance.  If $(\nu_1,\nu_2) \in \sem \alpha$, then also
$(\nu_2,\nu_2) \in \sem \alpha$.
\end{prop}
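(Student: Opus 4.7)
The plan is to derive this directly from Proposition~\ref{propinertia} (inertia) and Lemma~\ref{altidlemma} (the alternative form of input--output determinacy), using only the top-level io-disjointness condition $I(\alpha)\cap O(\alpha)=\emptyset$; no induction on the structure of $\alpha$ is needed. The strategy is to exhibit $\nu_2$ itself as the ``alternative'' starting valuation $\nu_1'$ in Lemma~\ref{altidlemma}, so that the lemma immediately produces the desired pair $(\nu_2,\nu_2)\in\sem\alpha$.

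Concretely, suppose $(\nu_1,\nu_2)\in\sem\alpha$. First I would invoke inertia to conclude that $\nu_1$ and $\nu_2$ agree outside $O(\alpha)$. Since $\alpha$ is io-disjoint, $I(\alpha)\subseteq \overline{O(\alpha)}$, and hence $\nu_1$ and $\nu_2$ also agree on $I(\alpha)$. In other words, setting $\nu_1':=\nu_2$, we have $\nu_1'=\nu_1$ both on $I(\alpha)$ and outside $O(\alpha)$, which are exactly the hypotheses required by the alternative form of input--output determinacy (Lemma~\ref{altidlemma}). Applying that lemma to $(\nu_1,\nu_2)$ and this $\nu_1'$ yields $(\nu_2,\nu_2)\in\sem\alpha$, as desired.

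There is essentially no obstacle here: the content of the proposition is really just that inertia plus io-disjointness makes any output valuation $\nu_2$ ``look like'' a legal input valuation (it already carries the correct $I(\alpha)$-values), so that one can reuse the determinacy lemma with $\nu_2$ in place of $\nu_1$. The only point worth flagging is that we need the alternative form of determinacy (requiring agreement on $I(\alpha)$ \emph{and} outside $O(\alpha)$) rather than the plain form; the stronger hypothesis is automatic here precisely because of inertia, so the match is clean.
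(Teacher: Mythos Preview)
Your proof is correct and matches the paper's own approach: the paper simply remarks that the identity property follows from inertia and input--output determinacy, and you have spelled out exactly that argument, invoking the alternative form (Lemma~\ref{altidlemma}) with $\nu_1':=\nu_2$.
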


Intuitively, the identity property holds because, if in $\nu_1$
the output slots would accidentally already hold a correct
combination of output values, then there will exist an evaluation
of $\alpha$ that merely confirms these values.
This property can be interpreted to say that io-disjoint expressions
can be given a ``static'' semantics; we could say that a single
valuation $\nu$ satisfies $\alpha$ when $(\nu,\nu)$ belongs to
the dynamic semantics. This brings io-disjoint expressions closer
to the conventional static semantics (single valuations) of first-order
logic.  Indeed, this will be confirmed in the next Section.

\begin{exa}
The identity property clearly need not hold for expressions that
are not io-disjoint.  For example, continuing the friends
example, for the expression $F(x;x)$, a person need not be a
friend of themselves.
\end{exa}

The following proposition 
makes it easier to check if an expression is io-disjoint:
\begin{prop} \label{checkio}
The following alternative definition of io-disjointness is equivalent to the
definition given above:
\begin{itemize}
\item
An atomic expression
$R(\bar x;\bar y)$ is io-disjoint if $X\cap Y=\emptyset$, where
$X$ is the set of variables in $\bar x$, and
$Y$ is the set of variables in $\bar y$.
\item
Atomic expressions of the form $(x=y)$, $(x=c)$, $(x:=y)$ or $(x:=c)$ are
io-disjoint.
\item
A composition $\alpha_1\comp\alpha_2$ is io-disjoint if $\alpha_1$
and $\alpha_2$ are, and moreover $I(\alpha_1)\cap
O(\alpha_2)=\emptyset$.
\item
A union $\alpha_1 \cup \alpha_2$ is io-disjoint if $\alpha_1$ and
$\alpha_2$ are, and moreover $O(\alpha_1)=O(\alpha_2)$.
\intersection{\item
An intersection $\alpha_1 \cap \alpha_2$ is io-disjoint if $\alpha_1$ and
$\alpha_2$ are.}
\item
A difference $\alpha_1 - \alpha_2$ is io-disjoint if $\alpha_1$ and
$\alpha_2$ are, and moreover $O(\alpha_1)\subseteq O(\alpha_2)$.
\end{itemize}
\end{prop}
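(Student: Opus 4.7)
The plan is to prove both directions by structural induction on $\alpha$, noting that both the original definition and the alternative one propagate only through subexpressions. More precisely, for the original definition, io-disjointness of $\alpha$ is the conjunction of io-disjointness for all strict subexpressions together with the condition $I(\alpha)\cap O(\alpha)=\emptyset$. For the alternative definition, io-disjointness of $\alpha$ is the conjunction of io-disjointness for the immediate subexpressions together with the local condition given in the statement. So it suffices to show, assuming the immediate subexpressions $\alpha_i$ satisfy $I(\alpha_i)\cap O(\alpha_i)=\emptyset$, that the local condition on $\alpha$ is equivalent to $I(\alpha)\cap O(\alpha)=\emptyset$.

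For the atomic cases this is essentially a read-off from Table~\ref{tabio}: for $R(\bar x;\bar y)$ we have $I(\alpha)\cap O(\alpha)=X\cap Y$; the other atomic forms have $O(\alpha)=\emptyset$ or $O(\alpha)=\{x\}$ with $I(\alpha)$ disjoint from $\{x\}$ by construction, giving the unconditional io-disjointness stated.

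The inductive cases reduce to small set-algebra computations. For $\alpha=\alpha_1\comp\alpha_2$, expanding $I(\alpha)\cap O(\alpha)$ and discarding the summands $I(\alpha_i)\cap O(\alpha_i)$ (empty by induction) and $(I(\alpha_2)\setminus O(\alpha_1))\cap O(\alpha_1)$ (empty by set theory), one is left exactly with $I(\alpha_1)\cap O(\alpha_2)$, so the local condition is necessary and sufficient. For $\alpha=\alpha_1\cup\alpha_2$, a similar expansion leaves $(I(\alpha_1)\cap O(\alpha_2))\cup(I(\alpha_2)\cap O(\alpha_1))\cup(O(\alpha_1)\symdif O(\alpha_2))$, which is empty iff $O(\alpha_1)=O(\alpha_2)$ (once equality of outputs holds, the other two intersections collapse into $I(\alpha_i)\cap O(\alpha_i)$, which vanish by induction). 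For $\alpha=\alpha_1-\alpha_2$, intersecting $I(\alpha)$ with $O(\alpha)=O(\alpha_1)$ yields $(I(\alpha_2)\cap O(\alpha_1))\cup(O(\alpha_1)\setminus O(\alpha_2))$; this vanishes iff $O(\alpha_1)\subseteq O(\alpha_2)$, because that inclusion alone forces $I(\alpha_2)\cap O(\alpha_1)\subseteq I(\alpha_2)\cap O(\alpha_2)=\emptyset$ by induction, matching exactly the local condition stated.

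I do not foresee a real obstacle: once the two definitions are framed as structural predicates with the same pattern of recursion, the proof is a case-by-case Boolean/set-algebra check using Table~\ref{tabio}. The only mildly subtle point is the union and difference cases, where one has to notice that the symmetric difference summand in $I(\alpha)$ interacts with $O(\alpha)$ in such a way that the local condition ($O(\alpha_1)=O(\alpha_2)$, respectively $O(\alpha_1)\subseteq O(\alpha_2)$) is precisely what is needed to kill it, and simultaneously forces the cross terms $I(\alpha_i)\cap O(\alpha_j)$ to collapse into $I(\alpha_i)\cap O(\alpha_i)$, which are empty by the inductive hypothesis.
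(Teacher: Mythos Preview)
Your proposal is correct and is the natural approach; the paper actually states Proposition~\ref{checkio} without proof, so there is nothing to compare against beyond noting that your structural-induction argument via Table~\ref{tabio} is exactly the routine verification the authors evidently had in mind. One tiny quibble: in the atomic case $(x:=y)$ you say $I(\alpha)$ is disjoint from $\{x\}$ ``by construction,'' but this fails when $x$ and $y$ are literally the same variable; this is harmless (and arguably an edge case the proposition itself glosses over), but strictly speaking you should either exclude $(x:=x)$ or remark that it is the identity and can be treated separately.
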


The fragment of io-disjoint expressions is denoted by $\flio$.
In the next section, we are going to show that $\flio$ is 
expressive enough, in the sense that FLIF expressions can be 
simulated by $\flio$ expressions that have the same set of input
variables. Furthermore, we will give the improved translation from $\flio$
to executable FO, which takes inputs into account.

\section{Io-disjoint FLIF} \label{seceval}
We begin this section by showing that any FLIF expression can be 
converted to an io-disjoint one.
We will first discuss the problem and its complications by means of 
illustrative examples.  After that, we formulate the precise theorem 
and give a constructive method to rewrite FLIF expressions into
io-disjoint ones.

\subsection{From FLIF to io-disjoint FLIF}

In this section, we are discussing a possible approach to
translate general FLIF expressions into io-disjoint ones that 
simulate the original expressions; we also discuss what
``simulate'' can mean.  For instance, we will see that we have to use extra 
variables in order to get io-disjointness.
An appropriate notion of simulation will then involve renaming of output variables.

For example, we rewrite $R(x;x)$ to
$R(x;y)$ and declare that the output value for $x$ can now be found
in slot $y$ instead.  This simple idea, however, is complicated
when handling the different operators of FLIF\@.  These
complications are discussed next.

\subsubsection{{Complications of Translation}}

When applying {the simple renaming} approach to the composition of two expressions, we must
be careful, as an output of the first expression can be taken as
input in the second expression.  In that case, when renaming the
output variable of the first expression, we must apply the
renaming also to the second expression, but only on the input
side.  For example, $R(x;x) \comp S(x;x)$ is rewritten to $R(x;y)
\comp S(y;z)$.  Thus, the output $x$ of the overall expression is
renamed to $z$; the intermediate output $x$ of the first
expression is renamed to $y$, as is the input $x$ of the second
expression.

Obviously, we must also avoid variable clashes.  For example, in
$R(x;x) \comp S(y;y)$, when rewriting the subexpression $R(x;x)$,
we should not use $y$ to rename the output $x$ to, as this
variable is already in use in another subexpression.

Another subtlety arises in the rewriting of set operations.
Consider, for example, the union $R(x;y) \cup S(x;z)$.  As discussed
in Example~\ref{exiod}, this expression is not io-disjoint: the output
variables are $y$ and $z$, but these are also input variables, in
addition to $x$.
To make the expression io-disjoint, it does not
suffice to simply rename $y$ and $z$, say, to $y_1$ and $z_1$.
We can, however, add assignments to both sides in such a way to
obtain a formally io-disjoint expression: \[R(x;y_1) \comp
(z_1:=z) \; \cup \; S(x;z_1) \comp (y_1:=y). \]

The above trick must also be applied to
intermediate variables.  For example, consider $T(;) \cup
(S(;y)\comp R(y;y))$.  Note that $T$ is a nullary relation.
This expression is not io-disjoint with
$y$ being an input variable as well as an output variable.
The second term is readily rewritten
to $S(;y_1) \comp R(y_1;y_2)$ with $y_2$ the new output
variable.  Note that $y_1$ is an intermediate variable.  The
io-disjoint form becomes \[T(;) \comp (y_1:=y) \comp (y_2:=y) \;
\cup \; S(;y_1) \comp R(y_1;y_2). \]
In general, it is not obvious that one can always find
a suitable variable to set intermediate variables from the
other subexpression to.  In our proof of the theorem we prove
formally that this is always possible.

\intersection{
Expressions involving intersection can be rewritten with the aid
of composition and equality test.  For example the expression
$R(x;y) \cap S(y;y)$ becomes \[R(x;y_1) \comp S(y;y_2) \comp
(y_1=y_2). \] The overall output $y$ is renamed to $y_1$, but in the
rewriting of the subexpression $S(y;y)$ we use an intermediate
output variable $y_2$, then test that the two outputs are the same as
required by the original expression.  Note that this usage of
intermediate variables is different from that used in the
treatment of composition expressions. There, the intermediate
variable was on the output side of the first subexpression and on
the input side of the second subexpression; here, it is on the
output side of the second subexpression.

An additional complication with intersection is that outputs that
are not common to the lhs and rhs subexpressions of the
intersection lose their status of output variable
(Table~\ref{tabio}), so must remain inertial for the overall
expression.  Hence, for the rewriting to have the
desired semantics, we must add the appropriate equality tests.
For example, the expression $R(x,y;x,y) \cap S(x,z;x,z)$ has only $x$
as an output, and $x$, $y$ and $z$ as inputs. Renaming the output
$x$ to $x_1$, it can be rewritten
in io-disjoint form as
\[R(x,y;x_1,y_1) \comp (y_1=y) \comp S(x,z;x_2,z_1) \comp (z_1=z)
\comp (x_2=x_1). \]
}

A final complication occurs in the treatment of difference.
Intermediate variables used in the rewriting must be reset to the
same value in both subexpressions, since the difference operator
is sensitive to the values of all variables.
For example, let $\alpha$ be the expression
$S(;x) \comp R(x;u,x) \: - \: T(;)$.
We have $I(\alpha)=O(\alpha)=\{x,u\}$.  Suppose we
want to rename the outputs $x$ and $u$ to $x_1$ and $u_1$
respectively.  As before, the subexpression on the lhs of the
difference operator is rewritten to
$S(;x_2) \comp R(x_2;u_1,x_1)$ introducing an intermediate
variable $x_2$.  Also as before, $x_1$ and $u_1$ need to be added
to the rewriting of $T(;)$ which does not have $x$ and $u$ as
outputs.  But the new complication is that $x_2$ needs to be
reset to a common value (we use $x$ here) for the difference of
the rewritten subexpressions to have the desired semantics.  We
thus obtain the overall rewriting
\[
S(;x_2) \comp R(x_2;u_1,x_1) \comp (x_2:=x) \: - \:
T(;) \comp (x_2:=x) \comp (u_1:=u) \comp (x_1:=x).
\]

\subsubsection{Statement of the theorem}

As the overall idea behind the above examples was to rename the
output variables, our aim is clearly the following theorem, with
$\rho$ playing the role of the renaming:\footnote{We
use $g \circ f$ for standard function composition (``$g$ after $f$'').
So, in the statement of the theorem, $\nu_2 \circ \rho :
O(\alpha) \to \VarUniv : x \mapsto \nu_2(\rho(x))$.}

\begin{thm} \label{main}
Let $\alpha$ be an FLIF expression and let $\rho$ be a bijection
from $O(\alpha)$ to a set of variables disjoint from $\var(\alpha)$.  There exists an $\flio$ expression $\beta$
such that
\begin{enumerate}
\item
$I(\beta) = I(\alpha)$;
\item
$O(\beta) \supseteq \rho(O(\alpha))$; and
\item
for every instance $D$ and
every valuation $\nu_1$, we have
\[
\{\nu_2|_{O(\alpha)} \mid (\nu_1,\nu_2) \in \semV \alpha\} =
\{\nu_2 \circ \rho \mid (\nu_1,\nu_2) \in \semV \beta\}.
\]
Here, $\SVars$ is any set of variables containing $\var(\alpha)$
and $\var(\beta)$.
\end{enumerate}
\end{thm}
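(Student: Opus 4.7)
I plan to prove Theorem~\ref{main} by structural induction on $\alpha$, following the patterns illustrated in the examples preceding the theorem statement. The key lever throughout is that $\rho$'s range is disjoint from $\var(\alpha)$, providing a pool of fresh variables in which to park renamed outputs, thereby decoupling them from input slots.

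The atomic base cases are handled directly. When $\alpha = R(\bar x;\bar y)$, take $\beta = R(\bar x;\rho(\bar y))$; io-disjointness follows from Proposition~\ref{checkio} since $\rho(Y)\cap X = \emptyset$. For the pure-test atoms $(x=y)$ and $(x=c)$, $O(\alpha) = \emptyset$ so $\rho$ is vacuous and $\alpha$ is already io-disjoint. For the assignments $(x:=y)$ and $(x:=c)$, take $\beta$ to be $(\rho(x):=y)$ and $(\rho(x):=c)$ respectively.

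For composition $\alpha = \alpha_1 \comp \alpha_2$, the construction chains the $\beta_1$ obtained from the induction with a version of $\beta_2$ in which the ``hand-off'' variables (those in $O(\alpha_1)$ whose values are consumed by $\alpha_2$) have been relabeled to match the output slots produced by $\beta_1$. Concretely, I first substitute $v \mapsto \rho(v)$ inside $\alpha_2$ for every $v \in O(\alpha_1)\cap\var(\alpha_2)$, then invoke the induction hypothesis on the substituted $\alpha_2'$ with the remaining outputs renamed consistently with $\rho$. For union $\alpha_1 \cup \alpha_2$ and difference $\alpha_1 \setminus \alpha_2$, I apply induction to each operand separately and must then equalize $O(\beta_1)$ and $O(\beta_2)$ (for union) or enforce $O(\beta_1)\subseteq O(\beta_2)$ (for difference) in order to satisfy Proposition~\ref{checkio}. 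This is achieved by appending dummy assignments $(w:=s)$ to the deficient side, with $s$ a ``source'' variable carrying the intended value of $w$ (namely $\rho^{-1}(w)$ when $w\in\rho(O(\alpha))$, or a preserved $\alpha$-variable for auxiliary intermediates). For difference, any auxiliary intermediates used in $\beta_1$ but not $\beta_2$ must additionally be reset to a common value on both sides so that the subtraction behaves correctly, exactly as in the example with $S(;x)\comp R(x;u,x) - T(;)$.

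The main obstacle I anticipate is strengthening the induction hypothesis to track, for each auxiliary output variable of $\beta$ (those in $O(\beta)\setminus\rho(O(\alpha))$), which original $\alpha$-variable's value it should carry; this bookkeeping is indispensable both for choosing sources in the dummy-assignment trick and for propagating the intermediate renamings cleanly through composition. The correctness proof of condition~(3), while conceptually clear in each case, will require repeated invocation of Propositions~\ref{propinertia}, \ref{propfv}, and~\ref{propindet}, together with the identity property (Proposition~\ref{ip}) of io-disjoint expressions, to push the set-equality of output valuations through each inductive step.
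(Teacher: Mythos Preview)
Your overall architecture---structural induction, direct base cases, dummy assignments to equalize outputs for union and difference---matches the paper's. However, the induction strengthening you propose is not the right one, and your composition case has a real gap.

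First, the strengthening. You plan to track, for every auxiliary $w\in O(\beta)\setminus\rho(O(\alpha))$, ``which original $\alpha$-variable's value it carries.'' This invariant is not well-defined: an auxiliary typically holds an \emph{intermediate} computational value (e.g., the hand-off variable in a rewriting of $R(x;x)\comp S(x;x)$) that need not equal $\nu_1(v)$ or $\nu_2(v)$ for any $v\in\var(\alpha)$. Moreover, you do not actually need such information. For union, the source in $(w:=s)$ can be \emph{arbitrary}: you only need $O(\beta_1\comp\cdots)=O(\beta_2\comp\cdots)$, not matching values. For difference, you reset \emph{all} auxiliaries (those of $\beta_1$ and of $\beta_2$) to a common value on \emph{both} sides, which again requires no per-auxiliary bookkeeping. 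What the induction genuinely needs is a \emph{forbidden-variable set} $W$: the strengthened statement promises that the auxiliaries $O(\beta)\setminus\rho(O(\alpha))$ avoid $W$. This is what lets you, when recursing on $\alpha_1$ and $\alpha_2$ separately, force their auxiliary pools to be disjoint from each other and from $\var(\alpha)\cup\rho(O(\alpha))$, so that subsequent compositions, unions, and differences are well-formed and io-disjoint.

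Second, composition. Your ``substitute $v\mapsto\rho(v)$ in $\alpha_2$, then recurse'' step breaks when $v\in O(\alpha_1)\cap O(\alpha_2)\cap I(\alpha_2)$: the substitution makes $\rho(v)$ an \emph{output} of the substituted $\alpha_2'$, so the inductive renaming $\rho'$ for $\alpha_2'$ must send $\rho(v)$ to a variable outside $\var(\alpha_2')\ni\rho(v)$; hence the final output slot for $v$ is $\rho'(\rho(v))\neq\rho(v)$, violating item~(2). The paper avoids this by recursing on $\alpha_2$ \emph{unsubstituted} (obtaining $\beta_2$) and only afterwards applying a variable permutation $\theta$ to $\beta_2$ to redirect its inputs; crucially, the renaming $\rho_1$ used for $\alpha_1$ is chosen to \emph{differ} from $\rho$ precisely on the triple overlap $O(\alpha_1)\cap O(\alpha_2)\cap I(\alpha_2)$, so that the hand-off slot and the final output slot for such $v$ are distinct fresh variables.
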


In the above theorem, we must allow
$O(\beta)$ to be a superset of $\rho(O(\alpha))$ (rather than being
equal to it), because we must allow the introduction of auxiliary
(intermediate) variables.  For example, let $\alpha$ be the
expression $S(x;) - R(x;x)$.  Note that $O(\alpha)$ is empty.
Interpret $S$ as holding bus stops and $R$ as holding bus routes.  
Then $\alpha$ represents an information source with limited 
access pattern that takes as input $x$, and tests if $x$ is a 
bus stop to where the bus would not return if we would take the 
bus at $x$.  Assume, for the sake of contradiction, that
there would exist an io-disjoint expression $\beta$ as in the
theorem, but with $O(\beta)=O(\alpha)=\emptyset$.  Since
$I(\beta)$ must equal $I(\alpha)=\{x\}$, the only variable
occurring in $\beta$ is $x$.  In particular, $\beta$ can only
mention $R$ in atomic subexpressions of the form $R(x;x)$, which
is not io-disjoint.  We are forced to conclude that $\beta$
cannot mention $R$ at all.  Such an expression, however, can
never be a correct rewriting of $\alpha$.  Indeed, let $D$ be an
instance for which $\sem\alpha$ is
nonempty.  Hence $\sem\beta$ is nonempty as well.
Now let $D'$ be the instance with $D'(S)=D(S)$ but
$D'(R)=\emptyset$.  Then $\semm\alpha {D'}$ becomes clearly
empty, but $\semm\beta{D'}=\sem\beta$ remains nonempty
since $\beta$ does not mention $R$.

\subsubsection{Variable renaming}

In the proof of our theorem we need a rigorous way of
renaming variables in FLIF expressions.  The following lemma
allows us to do this.  It confirms that expressions behave under
variable renamings as expected.  The proof by structural
induction is straightforward.

As to the notation used in the lemma, recall that $\VarUniv$ is defined 
to be the universe of variables.  For a permutation $\theta$ of $\VarUniv$, and
an expression $\alpha$, we use $\theta(\alpha)$ for the
expression obtained from $\alpha$ by replacing every occurrence
of any variable $x$ by $\theta(x)$.

\begin{lem}[Renaming Lemma]\label{thm:subs}
Let $\alpha$ be an FLIF expression and let $\theta$ be a
permutation of $\VarUniv$. Then for every instance $D$, we have
\[
(\nu_1, \nu_2) \in \sem{\alpha} \iff
(\nu_1 \fcomp \theta, \nu_2 \fcomp \theta) 
\in \sem{\theta(\alpha)}.
\]
\end{lem}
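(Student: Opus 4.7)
The plan is to prove the Renaming Lemma by structural induction on the FLIF expression $\alpha$. The overall strategy is that composing any valuation with $\theta$ is the ``dual'' of applying $\theta$ syntactically to an expression, so the two operations should commute through the semantics. Since $\theta$ is a permutation of $\VarUniv$, the map $\nu \mapsto \nu \circ \theta$ is a bijection on $\SVars$-valuations (for any $\theta$-invariant $\SVars$, which we may assume without loss of generality since valuations are defined on all variables here), and this fact will carry the inductive steps for the operators.

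For the five atomic base cases, I would unfold the definitions from Definition~\ref{def:flif1}. The crucial and representative case is $\alpha = R(\bar x; \bar y)$. Here $\theta(\alpha) = R(\theta(\bar x); \theta(\bar y))$. The tuple condition reduces to the identity $(\nu_i \circ \theta)(\theta(\bar z)) = \nu_i(\bar z)$, so $\nu_1(\bar x) \conc \nu_2(\bar y) \in D(R)$ iff $(\nu_1 \circ \theta)(\theta(\bar x)) \conc (\nu_2 \circ \theta)(\theta(\bar y)) \in D(R)$. For the inertia side condition (agreement outside $\bar y$ vs.\ outside $\theta(\bar y)$), I would substitute $y = \theta(x)$ and observe that $y \notin \bar y$ iff $x \notin \theta(\bar y)$, since $\theta$ is a permutation. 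The assignment cases $(x := y)$ and $(x := c)$ work by exactly the same substitution trick applied to the condition $\nu_2 = \nu_1[x := \nu_1(y)]$; and the equality tests $(x=y)$ and $(x=c)$ are immediate, since they only test a single equality on values.

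For the inductive cases I would proceed as follows. Union and difference are purely set-theoretic, so they follow directly from the induction hypothesis applied to each side (using that $\theta$ commutes with $\cup$ and $-$ under the syntactic replacement). The only step requiring care is composition: $(\nu_1, \nu_2) \in \sem{\alpha_1 \comp \alpha_2}$ iff there exists an intermediate valuation $\nu$ with $(\nu_1, \nu) \in \sem{\alpha_1}$ and $(\nu, \nu_2) \in \sem{\alpha_2}$. By the induction hypothesis this is equivalent to the existence of a $\nu$ with $(\nu_1 \circ \theta, \nu \circ \theta) \in \sem{\theta(\alpha_1)}$ and $(\nu \circ \theta, \nu_2 \circ \theta) \in \sem{\theta(\alpha_2)}$. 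To convert this into the existence of an arbitrary intermediate valuation $\nu'$ witnessing $(\nu_1 \circ \theta, \nu_2 \circ \theta) \in \sem{\theta(\alpha_1) \comp \theta(\alpha_2)}$, I invoke the bijectivity of $\nu \mapsto \nu \circ \theta$: any witness $\nu'$ on the right can be written as $\nu \circ \theta$ for a unique $\nu$, and vice versa.

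The main (modest) obstacle is the bookkeeping in the atomic case $R(\bar x; \bar y)$, where one must carefully separate the two conjuncts of its semantics (the tuple membership in $D(R)$ and the inertia condition) and verify that each is preserved under the variable substitution induced by $\theta$. Once that is done, the inductive steps are essentially mechanical, with the permutation property of $\theta$ used only once, in the composition case, to re-existentialize over the intermediate valuation.
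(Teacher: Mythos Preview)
Your structural-induction approach is exactly what the paper intends (its entire proof is the sentence ``The proof by structural induction is straightforward''). However, there is a genuine gap in your base case: the identity $(\nu_i \circ \theta)(\theta(\bar z)) = \nu_i(\bar z)$ is false for an arbitrary permutation $\theta$. Under the paper's convention $g \circ f : x \mapsto g(f(x))$, the left-hand side unfolds to $\nu_i(\theta(\theta(\bar z))) = \nu_i(\theta^2(\bar z))$, which equals $\nu_i(\bar z)$ only when $\theta$ is an involution on the relevant variables. The same defect infects your inertia argument: with $y = \theta(x)$, the claimed equivalence ``$y \notin \bar y$ iff $x \notin \theta(\bar y)$'' also fails in general (try $\theta$ a $3$-cycle on $\{a,b,c\}$ and $Y=\{a\}$).

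In fact, a concrete counterexample (take $\theta$ the $3$-cycle $a\mapsto b\mapsto c\mapsto a$, $\alpha=(a=c_0)$, and $\nu$ with $\nu(a)=c_0$, $\nu(b)=\nu(c)\neq c_0$) shows the lemma \emph{as stated} already fails for non-involutive $\theta$: the correct formulation is $(\nu_1,\nu_2)\in\sem{\alpha}\iff(\nu_1\circ\theta^{-1},\nu_2\circ\theta^{-1})\in\sem{\theta(\alpha)}$, for which your base-case identity becomes the valid $(\nu_i\circ\theta^{-1})(\theta(\bar z))=\nu_i(\bar z)$. The paper only ever applies the lemma with an involution $\theta$ (it explicitly records $\theta^{-1}=\theta$ when invoking it), so the distinction is harmless there; but your argument as written does not establish the general statement. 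Once you correct the identity (or restrict to involutions), the rest of your plan---the set-theoretic inductive steps for $\cup$ and $-$, and using bijectivity of $\nu\mapsto\nu\circ\theta$ to re-existentialize the intermediate witness for composition---goes through unchanged.
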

\subsubsection{Rewriting procedure} \label{method}

In order to be able to give a constructive proof of
Theorem~\ref{main} by
structural induction, a stronger induction hypothesis is needed.
Specifically,
to avoid clashes, we introduce a set $W$ of forbidden variables.
So we will actually prove the following statement:

\begin{lem} \label{lem}
Let $\alpha$ be an FLIF expression, let $W$ be a set of
variables, and let $\rho$ be a bijection
from $O(\alpha)$ to a set of variables disjoint from $\var(\alpha)$.  There exists an $\flio$ expression $\beta$
such that
\begin{enumerate}
\item
$I(\beta) = I(\alpha)$;
\item
$O(\beta) \supseteq \rho(O(\alpha))$ and $O(\beta) -
\rho(O(\alpha))$ is disjoint from $W$;
\item
for every instance $D$ and
every valuation $\nu_1$, we have
\[
\{\nu_2|_{O(\alpha)} \mid (\nu_1,\nu_2) \in \sem \alpha\} =
\{\nu_2 \circ \rho \mid (\nu_1,\nu_2) \in \sem \beta\}.
\]
\end{enumerate}
\end{lem}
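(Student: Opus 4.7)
The plan is to prove the strengthened statement by structural induction on $\alpha$, using the set $W$ of forbidden variables to keep auxiliary variables introduced in subexpressions from clashing with one another or with variables already in play. At every step we freely invoke the Renaming Lemma (Lemma~\ref{thm:subs}) to shift intermediate variables to fresh names outside $W \cup \var(\alpha) \cup \rho(O(\alpha))$, so that the inductive hypotheses on subexpressions can be stitched together without collision. Semantic correctness of each construction will be checked against Definition~\ref{def:flif1} together with the inertia, free-variable, and input-output determinacy properties (Propositions~\ref{propinertia}, \ref{propfv}, and~\ref{propindet}), which guarantee that the renamed outputs carry exactly the information the original expression computed.

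For the atomic cases $R(\bar x;\bar y)$, $(x := y)$, and $(x := c)$, we simply rename the output occurrences via $\rho$, for instance taking $\beta = R(\bar x;\rho(\bar y))$; io-disjointness follows from Proposition~\ref{checkio} because $\rho$ maps into variables disjoint from $\var(\alpha)$. Equality tests have empty output, so $\beta = \alpha$ suffices. For the union case $\alpha_1 \cup \alpha_2$, we apply induction to each side with $\rho$ restricted to each $O(\alpha_i)$ and with $W$ enlarged by the fresh names arising on the opposite side; since io-disjoint unions must have matching output sets by Proposition~\ref{checkio}, we pad each side with dummy assignments of the form $(\rho(v) := v)$ for every $v$ output by the other side, using the still-inert original $v$ carried over by inertia.

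The first substantive case is composition $\alpha_1 \comp \alpha_2$. Outputs of $\alpha_1$ that are inputs of $\alpha_2$ must be threaded through under renamed names. We apply induction to $\alpha_1$ using a renaming $\rho_1$ that agrees with $\rho$ on $O(\alpha_1) - O(\alpha_2)$ and sends $O(\alpha_1) \cap (O(\alpha_2) \cup I(\alpha_2))$ to fresh names. We then rename the corresponding input positions of $\alpha_2$ via the Renaming Lemma, and apply induction to this renamed $\alpha_2$ with a renaming whose image matches $\rho$ on $O(\alpha_2)$. The io-disjointness condition $I(\beta_1) \cap O(\beta_2) = \emptyset$ from Proposition~\ref{checkio} then follows from freshness of all introduced names.

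The main obstacle is the difference case $\alpha_1 - \alpha_2$: set difference is sensitive to the values of every variable, so the two rewritten subexpressions must agree on every auxiliary variable arising on just one side. We apply induction separately to $\alpha_1$ and $\alpha_2$ (with $W$ sets chosen so that the fresh auxiliaries on each side are disjoint from those on the other), obtaining io-disjoint $\beta_1$ and $\beta_2$ with extra auxiliary output sets $A_1$ and $A_2$. We then pad $\beta_1$ with a composition of dummy assignments $(a := c)$ for every $a \in A_2 - A_1$ and a fixed constant $c$, and symmetrically pad $\beta_2$, so that after padding both sides share identical auxiliary outputs with identical values on them. Correctness of $\beta_1 - \beta_2$ reduces, via inertia and input-output determinacy, to correctness of each $\beta_i$ on its core output $\rho(O(\alpha))$, which is exactly what the inductive hypothesis provides.
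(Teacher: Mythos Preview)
Your overall induction matches the paper's, but the difference case has a real gap. You pad $\beta_1$ and $\beta_2$ only with constant assignments for the \emph{auxiliary} outputs $A_2 \setminus A_1$ and $A_1 \setminus A_2$, ignoring the renamed outputs $\rho_2(O(\alpha_2) \setminus O(\alpha_1))$ and $\rho_1(O(\alpha_1) \setminus O(\alpha_2))$. After your padding the left side has output set $\rho_1(O(\alpha_1)) \cup A_1 \cup A_2$ while the right side has $\rho_2(O(\alpha_2)) \cup A_1 \cup A_2$; when $O(\alpha_1) \neq O(\alpha_2)$ these differ, so by Proposition~\ref{checkio} the difference is not io-disjoint, and worse, the fresh variables in the symmetric difference of these output sets get pulled into $I(\beta)$, destroying $I(\beta) = I(\alpha)$. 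The semantic reduction you sketch also breaks: take $y \in O(\alpha_2) \setminus O(\alpha_1)$ and suppose $(\nu_1,\nu) \in \sem{\alpha_2}$. Induction gives $(\nu_1,\mu) \in \sem{\beta_2}$ with $\mu(\rho_2(y)) = \nu(y) = \nu_1(y)$ (the last by inertia). But on the $\beta_1$ side $\rho_2(y)$ is not an output at all, so for $(\nu_1,\nu_2) \in \sem{\beta_1'}$ inertia gives $\nu_2(\rho_2(y)) = \nu_1(\rho_2(y))$, and nothing forces $\nu_1(y) = \nu_1(\rho_2(y))$. Hence the padded $\mu'$ need not equal $\nu_2$ and your contradiction argument fails. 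The paper's fix is to compose each $\beta_i$ additionally with $\gamma_i$, the assignments $(\rho_j(y) := y)$ for $y \in O(\alpha_j) \setminus O(\alpha_i)$, which copy the inertial original $y$ into its renamed slot; crucially these have $y$ as an \emph{input}, which is exactly how the term $O(\alpha_1) \symdif O(\alpha_2)$ appears in $I(\beta)$.

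A smaller instance of the same oversight appears in your union case: the padding $(\rho(v) := v)$ equalizes the renamed $O(\alpha_i)$ parts of the two sides, but the auxiliary sets $A_1$ and $A_2$ are (by your own choice of $W$) disjoint and generally nonempty, so the padded sides still have different output sets and the union is not io-disjoint. You need a second layer of padding (the paper's $\eta_i$) that sets each side's missing auxiliaries to some common value; the paper does this with a variable assignment $(a := z)$ for a carefully chosen $z$, and proving that a suitable $z$ always exists is itself part of the argument.
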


We proceed to formally describe an inductive rewriting procedure to produce
$\beta$ from $\alpha$ as prescribed by the above lemma.
The procedure formalizes and generalizes the situations
encountered in the examples discussed in the previous section.
The correctness of the method is proven in Section~\ref{secproof}.

\paragraph*{Terminology}
A bijection from a set of variables $X$ to another set of
variables is henceforth called a \emph{renaming of $X$}.

\paragraph*{Relation atom} If $\alpha$ is of the form $R(\bar
x;\bar y)$, then $\beta$ equals $R(\bar x;\rho(\bar y))$.

\paragraph*{Variable assignment} If $\alpha$ is of the form
$x := t$, then $\beta$ equals $\rho(x) := t$.

\paragraph*{Equality test} If $\alpha$ is an equality test,
we can take $\beta$ equal to $\alpha$.

\paragraph*{Nullary expressions} An expression $\alpha$ is
called \emph{nullary} if it contains no variables, i.e., $\var(\alpha)$ is empty.  Trivially, for nullary $\alpha$, the desired
$\beta$ can be taken to be $\alpha$ itself.  We will consider
this to be an extra base case for the induction.

\intersection{
\paragraph*{Intersection}
If $\alpha$ is of the form $\alpha_1 \cap \alpha_2$
then $\beta$ equals $ \beta_1 \comp \gamma_1 \comp \beta_2 \comp \gamma_2
\comp \eta, $ where the constituent expressions are defined as
follows.
\begin{itemize}
\item
Let $W_1 = W \cup \var(\alpha)$ and let $\rho_1$ be a renaming of
$\Out(\alpha_1)$ that is an extension of $\rho$ such that the
image of $\rho_1-\rho$ is disjoint from $W_1$.  By induction,
there exists an io-disjoint rewriting of $\alpha_1$
for $W_1$ and $\rho_1$; this yields $\beta_1$.
\item
Let $W_2 = W_1 \cup \Out(\beta_1)$ and let $\rho_2$ be a bijection
from $O(\alpha_2)$ to a set of variables that is disjoint from
$W_2$.  By induction, there exists an io-disjoint rewriting of $\alpha_2$
for $W_2$ and $\rho_2$; this yields $\beta_2$.
\item $\gamma_1$ is the composition of all $(\rho_1(y) = y)$ for
$y \in \Out(\alpha_1) - \Out(\alpha_2)$. If
$O(\alpha_1)-O(\alpha_2)$ is empty, $\gamma_1$ can be dropped
from the expression; this qualification applies to
similar situations below.
\item
$\gamma_2$ is defined symmetrically.
\item $\eta$ is the composition of all $(\rho_1(y) = \rho_2(y))$
for $y \in O(\alpha)$.
\end{itemize}
}

\paragraph*{Composition} If $\alpha$ is of the form $\alpha_1
\comp \alpha_2$ then $\beta$ equals $\beta_1 \comp
\theta(\beta_2)$, where the constituents are defined as follows.
\begin{itemize}
\item
Let $W_2 = W \cup \var(\alpha) \cup \rho(O(\alpha_1))$, and let
$\rho_2$ be the restriction of $\rho$ to $O(\alpha_2)$.  By
induction, there exists an io-disjoint rewriting of $\alpha_2$
for $W_2$ and $\rho_2$; this yields $\beta_2$.
\item
Let $W_1 = W \cup \var(\alpha)$, and let $\rho_1$
be a renaming of $O(\alpha_1)$ such that
\begin{itemize}
\item
on $O(\alpha_1) \cap O(\alpha_2) \cap I(\alpha_2)$, the image of
$\rho_1$ is disjoint from $\var(\alpha) \cup O(\beta_2)$ as well as
from the image of $\rho$;
\item
elsewhere, $\rho_1$ agrees with $\rho$.
\end{itemize}
By induction, there exists an io-disjoint renaming of $\alpha_1$
for $W_1$ and $\rho_1$; this yields $\beta_1$.
\item
$\theta$ is the permutation of $\VarUniv$ defined as follows.  For
every $y \in I(\alpha_2) \cap O(\alpha_1)$, we have \[
\theta(y)=\rho_1(y) \text{ and } \theta(\theta(y)) = y. \]
Elsewhere, $\theta$ is the identity.
\end{itemize}

\paragraph*{Union} If $\alpha$ is of the form $\alpha_1 \cup
\alpha_2$ then $\beta$ equals $ (\beta_1 \comp \gamma_1 \comp
\eta_1) \cup (\beta_2 \comp \gamma_2 \comp \eta_2) $ where the
constituent expressions are defined as follows.
\begin{itemize}
\item
Let $W_1 = W \cup \var(\alpha) \cup \rho(O(\alpha_2))$ and let
$\rho_1$ be the restriction of $\rho$ on $O(\alpha_1)$.  By
induction, there exists an io-disjoint rewriting of $\alpha_1$
for $W_1$ and $\rho_1$; this yields $\beta_1$.
\item
Let $W_2 = W \cup \var(\alpha) \cup O(\beta_1)$ and let $\rho_2$
be the restriction of $\rho$ on $O(\alpha_2)$.  By
induction, there exists an io-disjoint rewriting of $\alpha_2$
for $W_2$ and $\rho_2$; this yields $\beta_2$.
\item
$\gamma_1$ is the composition of all $(\rho(y):=y)$ for $y\in
O(\alpha_2)-O(\alpha_1)$, and $\gamma_2$ is defined
symmetrically.
\item
If $O(\beta_2)-\rho_2(O(\alpha_2))$ (the set of ``intermediate''
variables in $\beta_2$) is empty, $\eta_1$ can be dropped from the
expression.  Otherwise,
$\eta_1$ is the composition of all $(y := z)$ for $y \in
O(\beta_2) - \rho(O(\alpha_2))$, with $z$ a fixed variable
chosen as follows.
\begin{enumerate}
\item[(a)]
If $O(\beta_1)$ is nonempty, take $z$ arbitrarily from there.
\item[(b)]
Otherwise, take $z$ arbitrarily from $\var(\alpha_2)$.
We know $\var(\alpha_2)$ is nonempty, since
otherwise $\alpha_2$ would be nullary, so $\beta_2$ would equal
$\alpha_2$, and then $O(\beta_2)$ would be empty as well (extra
base case), which is not the case.
\end{enumerate}
\item
$\eta_2$ is defined symmetrically.
\end{itemize}

\paragraph*{Difference}
If $\alpha$ is of the form $\alpha_1 - \alpha_2$ then $\beta$
equals $ (\beta_1 \comp \gamma_1 \comp \eta_1 \comp \eta_2) -
(\beta_2 \comp \gamma_2 \comp \eta_1 \comp \eta_2) $ where the
constituent expressions are defined as follows.
\begin{itemize}
\item
Let $W_1 = W \cup \var(\alpha)$ and let $\rho_1=\rho$.
By induction, there exists an
io-disjoint rewriting of $\alpha_1$ for
$W_1$ and $\rho$; this yields $\beta_1$.
\item
Let $W_2 = W_1 \cup O(\beta_1)$ and
let $\rho_2$ be a renaming of $O(\alpha_2)$ that agrees with
$\rho$ on $O(\alpha_1) \cap O(\alpha_2)$, such that
the image of $\rho_2-\rho_1$ is disjoint from
$W_2$.  By induction, there exists an io-disjoint
rewriting of $\alpha_2$ for $W_2$ and $\rho_2$; this yields
$\beta_2$.
\item
$\gamma_1$ is the composition of all $(\rho_2(y):=y)$ for $y \in
O(\alpha_2) - O(\alpha_1)$, and $\gamma_2$ is defined
symmetrically.
\item
If $O(\beta_2) - \rho_2(O(\alpha_2))$ is empty,
$\eta_1$ can be dropped from the
expression. Otherwise, $\eta_1$ is the composition of all $(y:=z)$
for $y \in O(\beta_2) - \rho_2(O(\alpha_2))$, with $z$ a fixed
variable chosen as follows.
\begin{enumerate}
\item[(a)]
If $O(\alpha_1) \cap O(\alpha_2)$ is nonempty, take $z$
arbitrarily from $\rho(O(\alpha_1) \cap O(\alpha_2))$.
\item[(b)]
Otherwise, take $z$ arbitrarily from $\var(\alpha_2)$ (which is
nonempty by the same reasoning as given for the union case).
\end{enumerate}
\item
$\eta_2$ is defined symmetrically.
\end{itemize}

\subsubsection{Necessity of variable assignment}

Our rewriting procedure intensively uses variable assignment.  Is
this really necessary?  More precisely, suppose $\alpha$ itself
does not use variable assignment.  Can we still always find an
io-disjoint rewriting $\beta$ such that $\beta$ does not use
variable assignment either?  Below, we answer this question
negatively; in other words, the ability to do variable assignment
is crucial for io-disjoint rewriting.

For our counterexample we
work over the schema consisting of a nullary relation name
$S$ and a binary relation name $T$ of input arity one.
Let $\alpha$ be the expression $S(;) \cup T(x;x)$ and let $\rho$
rename $x$ to $x_1$.  Note that our rewriting procedure would
produce the rewriting \[S(;) \comp (x_1:=x) \: \cup \: T(x;x_1),
\]indeed using a variable assignment ($x_1:=x$) to ensure an io-disjoint
expression.

For the sake of contradiction, assume there
exists an expression $\beta$ according to Theorem~\ref{main} that
does not use variable assignment.  Fix $D$ to the instance where
$S$ is nonempty but $T$ is empty.  Then $\sem\alpha$ consists of
all identical pairs of valuations.  Take any valuation $\nu$ with
$\nu(x)\neq\nu(x_1)$.  Since $(\nu,\nu) \in \sem\alpha$,
there should exist a valuation $\nu'$ with $\nu'(x_1)=\nu(x)$
such that $(\nu,\nu') \in \sem\beta$.  Note that $\nu'\neq\nu$,
since $\nu(x_1)\neq\nu(x)$.  However, this contradicts the
following two observations.  Both
observations are readily verified by induction.
(Recall that $D$ is fixed as defined above.)
\begin{enumerate}
\item
For every expression $\beta$ without variable assignments,
either $\sem \beta$ is empty, or $\sem \beta = \sem \gamma$ for
some expression $\gamma$ that does not mention $T$ and that has no
variable assignments.
\item
For every expression $\gamma$ that does not mention $T$ and that
has no variable assignments, and any $(\nu_1,\nu_2) \in
\sem\gamma$, we have $\nu_1=\nu_2$.
\end{enumerate}

\subsection{Improved translation from io-disjoint FLIF to Executable FO} \label{secimproved}

We now turn to the translation from $\flio$ to executable FO.  
Here, a rather straightforward equivalence is possible, since 
executable FO has an explicit quantification operation 
which is lacking in FLIF.
Recall the evaluation problem for executable FO
(Definition~\ref{defevalexfo}, 
and the evaluation problem for $\alpha$ 
(Definition~\ref{defevalalphainput}).

\begin{thm} \label{flio2exfo}
Let $\alpha$ be an $\flio$ expression over a schema $\Sch$.  There
exists an $I(\alpha)$-executable FO formula $\varphi_\alpha$ over
$\Sch$, with $\FV(\varphi_\alpha)=\var(\alpha)$,
such that for every $D$ and $\nuin$, we have
$\EvalL \alpha D \nuin = \EvalV {\varphi_\alpha} {I(\alpha)} D \nuin$.
The length of $\varphi_\alpha$ is linear in the length of
$\alpha$.
\end{thm}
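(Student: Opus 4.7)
The approach is structural induction on $\alpha$, with a strengthened induction hypothesis that gives io-disjoint FLIF a \emph{static}, single-valuation semantics: for every valuation $\nu$ defined on (a superset of) $\var(\alpha)$, one has $\sat{D}{\varphi_\alpha}{\nu}\iff(\nu,\nu)\in\sem{\alpha}$. Granting this, the theorem's evaluation equivalence follows at once: the identity property (Proposition~\ref{ip}) turns any $\nuout\in\EvalL{\alpha}{D}{\nuin}$ into a pair $(\nuout,\nuout)\in\sem{\alpha}$ witnessing membership in $\EvalV{\varphi_\alpha}{I(\alpha)}{D}{\nuin}$, while inertia combined with $I(\alpha)\cap O(\alpha)=\emptyset$ guarantees that such $\nuout$ agrees with $\nuin$ on $I(\alpha)$.

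I would define $\varphi_\alpha$ as follows. In atomic cases, relation atoms and equality tests map to themselves, while $(x:=y)$ maps to $(x=y)$ and $(x:=c)$ to $(x=c)$ (they have the same single-valuation semantics). Union becomes disjunction and difference $\alpha_1-\alpha_2$ becomes $\varphi_{\alpha_1}\wedge\neg\varphi_{\alpha_2}$. Composition is the nontrivial case: letting $\bar y$ enumerate $O(\alpha_1)\cap O(\alpha_2)$ and $\bar y'$ be fresh variables, set $\varphi_{\alpha_1\comp\alpha_2} = \exists\bar y'\,(\varphi_{\alpha_1}[\bar y\to\bar y']\wedge\varphi_{\alpha_2})$. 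The existential is required exactly for the ``lost'' intermediate values that $\alpha_1$ writes and $\alpha_2$ then overwrites; io-disjointness of $\alpha_2$ ensures $\bar y\cap I(\alpha_2)=\emptyset$, so these intermediates do not feed back into $\alpha_2$. In each case $\FV(\varphi_\alpha)=\var(\alpha)$ by inspection, and the translation is additive in size, yielding the linear length bound.

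Verifying $I(\alpha)$-executability reduces to simple set inclusions obtained from the alternative characterization of io-disjointness (Proposition~\ref{checkio}). For union, $O(\alpha_1)=O(\alpha_2)$ forces $\FV(\varphi_{\alpha_1})\symdif \FV(\varphi_{\alpha_2})=I(\alpha_1)\symdif I(\alpha_2)\subseteq I(\alpha)$. For difference, the hypothesis $O(\alpha_1)\subseteq O(\alpha_2)$ combined with $(O(\alpha_2)-O(\alpha_1))\subseteq I(\alpha)$ yields $\var(\alpha_2)\subseteq I(\alpha)\cup\var(\alpha_1)$, which is precisely the condition that makes $\neg\varphi_{\alpha_2}$ executable inside the conjunction with $\varphi_{\alpha_1}$. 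For composition, since $\bar y\cap I(\alpha_2)=\emptyset$, the renaming leaves $\varphi_{\alpha_2}$'s input variables still available in $I(\alpha)\cup \FV(\varphi_{\alpha_1}[\bar y\to\bar y'])$, and the fresh variables $\bar y'$ drop out of $I(\alpha)\setminus\bar y'$ automatically.

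The main obstacle is the semantic correspondence for composition. One must establish the equivalence: there exists $\mu$ with $(\nu,\mu)\in\sem{\alpha_1}$ and $(\mu,\nu)\in\sem{\alpha_2}$ iff there exists a tuple $m$ with $(\nu[\bar y := m],\nu[\bar y := m])\in\sem{\alpha_1}$ and $(\nu,\nu)\in\sem{\alpha_2}$. For the forward direction, inertia applied to the two pairs shows that $\mu$ and $\nu$ can differ only on $O(\alpha_1)\cap O(\alpha_2)=\bar y$; setting $m=\mu(\bar y)$ and applying the identity property (Proposition~\ref{ip}) to each component yields the required witnesses. For the reverse direction, set $\mu=\nu[\bar y := m]$; since $\bar y\subseteq O(\alpha_1)$, the alternative form of input-output determinacy (Lemma~\ref{altidlemma}) allows changing the input of $(\mu,\mu)\in\sem{\alpha_1}$ to $\nu$, giving $(\nu,\mu)\in\sem{\alpha_1}$; symmetrically, since $\bar y\subseteq O(\alpha_2)$ and $I(\alpha_2)\cap O(\alpha_2)=\emptyset$, we obtain $(\mu,\nu)\in\sem{\alpha_2}$ from $(\nu,\nu)\in\sem{\alpha_2}$.
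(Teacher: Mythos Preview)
Your proposal is correct and follows essentially the same route as the paper: the same strengthened induction hypothesis $(\nu,\nu)\in\sem\alpha \iff \sat{D}{\varphi_\alpha}{\nu}$, the same reduction to the theorem via the identity property, and the same translation table for all cases except a cosmetic variant in composition. The paper writes the composition case as $(\exists\bar y\,\varphi_{\alpha_1})\wedge\varphi_{\alpha_2}$, quantifying the overwritten variables directly rather than renaming them to fresh $\bar y'$ first; the two formulas are $\alpha$-equivalent and the executability and correctness arguments are the same, so there is no substantive difference.
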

\begin{exa}
To illustrate the proof,
consider the $\flio$ expression $R(x;y,u) \comp S(x;z,u)$.  
Procedurally, to evaluate the first expression, we 
retrieve values for the variables $y$ and $u$ that match the value
given for the variable $x$ in the relation $R$.
We proceed to retrieve a $(z,u)$-binding from $S$ for the given
$x$, effectively overwriting the previous binding for $u$.  Thus,
a correct translation into executable FO is $(\exists u\,
R(x;y,u)) \land S(x;z,u)$.

Interpreting relations as functions, this example can be likened to the following piece of code in Python:
\begin{verbatim}
    y,u = R(x) ; z,u = S(x)
\end{verbatim}
Indeed, formalisms such as FLIF, as well as its mother framework LIF \cite{lif_frocos}, dynamic logic \cite{dynamiclogicbook}, and dynamic predicate logic \cite{dplogic} provide logical foundations for such programming constructs (and even natural language constructs).

For another example, consider the assignment $(x:=y)$.
This translates to $x=y$ considered as a $\{y\}$-executable
formula.  The equality test $(x=y)$ also translates to $x=y$, but
considered as an $\{x,y\}$-executable formula.
\end{exa}
\begin{proof}[Proof Sketch of Theorem~\ref{flio2exfo}]
Table~\ref{tabtrans} shows
the translation, which is almost an isomorphic embedding,
except for the case of composition. The correctness of the
translation for composition again hinges on inertia and input-output determinacy.
The formal correctness proof, including the
verification that $\varphi_\alpha$ is indeed
$I(\alpha)$-executable, is given in Section~\ref{proofflio2exfo}.
\end{proof}

\begin{table}
\caption{Translation showing how $\flio$ embeds in
executable FO\@.  In the table, $\varphi_i$
abbreviates $\varphi_{\alpha_i}$ for $i=1,2$.}
\label{tabtrans}
\centering
\begin{tabular}{c|l}
\toprule
$\alpha$ & $\varphi_\alpha$ \\
\toprule
$R(\bar{x};\bar{y})$ & $R(\bar{x};\bar{y})$ \\
\midrule
$(x = y)$ & $ x = y$ \\
\midrule
$(x := y)$ & $ x = y$ \\
\midrule
$x = c$ & $x = c$ \\
\midrule
$x := c$ & $x = c$ \\
\midrule
$\alpha_1 ; \alpha_2$ & $(\exists x_1 \dots \exists x_k \, \varphi_1)
\land \varphi_2$ where $\{x_1,\dots,x_k\}=O(\alpha_1)\cap
O(\alpha_2)$\\
\midrule
$\alpha_1 \cup \alpha_2$ & $\varphi_1 \lor \varphi_2$ \\
\midrule
\intersection{$\alpha_1 \cap \alpha_2$ & $\varphi_1 \land \varphi_2$ \\
\midrule}
$\alpha_1 \setminus \alpha_2$ & $\varphi_1 \land \lnot \varphi_2$ \\
\bottomrule
\end{tabular}
\end{table}

\intersection{
Notably, in the proof of Theorem~\ref{exfo2flio},
we do not need
the intersection operation.  Hence, by translating $\flio$ to
executable FO and then back to $\flio$, we obtain that
intersection is redundant in $\flio$, in the following sense:

\begin{cor}\label{cor:noInt}
For every $\flio$ expression $\alpha$ there exists an $\flio$
expression $\alpha'$ with the following properties:
\begin{enumerate}
\item
$\alpha'$ does not use the intersection operation.
\item
$I(\alpha')=I(\alpha)$.
\item
$O(\alpha')\supseteq O(\alpha)$.
\item
For every $D$ and $\nuin$, we have
$ \EvalL \alpha D\nuin = \pi_{\var(\alpha)}(
\EvalL {\alpha'} D\nuin)$.
\end{enumerate}
\end{cor}

\begin{rem}
One may wonder whether
the above corollary directly follows from the equivalence
between
$\alpha_1 \cap \alpha_2$ and $\alpha_1 - (\alpha_1 -
\alpha_2)$.  While these two expressions are semantically
equivalent and have the same input variables, they do not have
the same output variables, so a simple inductive proof
eliminating intersection while preserving the guarantees of the
above corollary does not work.  Moreover, the
corollary continues to hold for
the positive fragment of $\flio$ (without the difference
operation).  Indeed, positive $\flio$ can be translated into
executable FO without negation, which can then be
translated into positive $\flio$ without intersection.
\end{rem}
From the translations between $\flio$ and executable FO, we can 
see that intersection can be expressed in terms of composition
and equality tests.  As an example, consider the $\flio$ 
expression $\alpha$ which is defined to be 
$R_1(x; y, z) \cap R_2(u; v, y)$.  We can see that 
$I(\alpha) = \{x,u,z,v\}$ and $O(\alpha)=\{y\}$.
Using our translation for Theorem~\ref{flio2exfo}, we see 
that $\varphi_\alpha$ is a $\{x,u,z,v\}$-executable formula
and can simply be taken to be $R_1(x; y,z) \land R_2(u; v, y)$.
Now, from our translation for Theorem~\ref{exfo2flio}, we can 
take $\alpha'$ to be 
\[
R_1(x; y, z_1)\comp (z_1 = z)\comp (u = u)\comp (v = v)\comp
R_2(u; v_2, y_2)\comp (v_2 = v)\comp (y_2 = y)\comp (x = x)\comp
(z = z).
\]
We can verify that $I(\alpha') = \{x, z, u, v\} = I(\alpha)$ and
$O(\alpha') = \{y, z_1, v_2, y_2\} \supseteq O(\alpha)$.  Formally 
verifying the last condition of Corollary~\ref{cor:noInt}
is tedious, nonetheless, we can at least intuitively see that it
holds for $\alpha'$.
}

\section{Correctness Proofs of Translation Theorems}\label{sec:proofs2}

\subsection{From Executable FO to FLIF}
\label{proofexfo2flif}

In this section we prove Theorem~\ref{exfo2flif}, which is
reformulated below for convenience.

\newtheorem*{T1}{Theorem~\ref{exfo2flif}}
\begin{T1} \label{exfo2flifHer}
Let $\varphi$ be a $\VarSet$-executable formula over a schema $\Sch$.
There exists an FLIF expression $\alpha$ over $\Sch$ 
{and a set of variables $\SVars \supseteq \FV(\varphi)\cup \VarSet$} 
such that for every $D$, valuation $\nuin$ on $\VarSet$, 
and valuation $\nuin'$ {on $\SVars$}
with $\nuin' \supseteq \nuin$, we have
\[
\{ \nu|_{\FV(\varphi) \cup \VarSet} \mid \nuin \subseteq \nu \text{ and } 
\satD{\varphi}{\nu}\} 
= 
\{ \nuout|_{\FV(\varphi) \cup \VarSet} \mid (\nuin', \nuout) \in {\semV{\alpha}}\}.
\]
\end{T1}
\begin{proof}
By structural induction.
The containment from left to right is referred to as 
\emph{completeness}, and the containment from right to left
as \emph{soundness}. 

{In the proof, we will omit the explicit definition of the set
$\SVars$ and we take it to be the set of all variables mentioned in the
constructed expression $\alpha$.} 
It is {also} worth noting that it follows from the statement of the theorem
that $\alpha$ cannot change the values of the variables in $\VarSet$.
Precisely, for every $\nu_1$, $\nu_2$ such that 
$(\nu_1, \nu_2) \in \sem{\alpha}$, it must be the case that 
$\nu_1$ agrees with $\nu_2$ on $\VarSet$.

\paragraph{Atoms}
If $\varphi$ is a relation atom $R(\bar x;\bar y)$, then $\alpha$
is $R(\bar x;\bar z) \comp \xi$, where $\bar z$ is obtained from 
$\bar y$ by replacing each variable from $\VarSet$ by a
fresh variable.  The expression $\xi$ consists of the composition
of all equalities $(y_i = z_i)$ where $y_i$ is a variable from
$\bar y$ that is in $\VarSet$ and $z_i$ is the corresponding fresh
variable.  In what follows, let $X$, $Y$, and $Z$ be the
variables in $\bar{x}$, $\bar{y}$, and $\bar{z}$; respectively. 
Moreover, take $\nuin$ to be an arbitrary valuation on $\VarSet$, and
$\nuin'$ to be any valuation such that $\nuin' \supseteq \nuin$. 

We first prove completeness.  Let $\nu$ to be a valuation on 
$Y \cup \VarSet$ such that $\nuin \subseteq \nu$.  Now, suppose that 
$\satD{\varphi}{\nu}$.  We want to verify that there exists a valuation 
$\nuout$ such that $(\nuin', \nuout) \in \sem{\alpha}$ and 
$\nuout|_{\FV(\varphi) \cup \VarSet} = \nu$, which is clear when
taking $\nuout$ to be the valuation that agrees with $\nuin$ on $\VarSet$,
agrees with $\nu$ on $\FV(\varphi) - \VarSet$, agrees with $\nuin'$
outside $\var(\alpha)$, and satisfies $\nuout(z) = \nu(y)$ for every 
$y \in (Y \cap \VarSet)$ and its corresponding $z \in Z$. 

To show soundness, suppose that there exists a valuation $\nuout$ 
such that $(\nuin', \nuout) \in \sem{\alpha}$.  We want to verify that 
$\satD{\varphi}{\nuout|_{\FV(\varphi) \cup \VarSet}}$, which is clear
given the semantics of $\alpha$.

The cases where $\varphi$ is of the form $x=y$ or $(x=c)$ are
handled as already shown in the previous section; correctness is 
clear.

\paragraph{Conjunction}
If $\varphi$ is $\varphi_1 \land \varphi_2$, then by induction we
have an expression $\alpha_1$ for 
$\varphi_1$ and $\VarSet$, and an expression
$\alpha_2$ for $\varphi_2$ and $\VarSet \cup
\FV(\varphi_1)$ (since $\varphi_2$ is $\VarSet \cup
\FV(\varphi_1)$-executable).
We show that $\alpha$ can be taken to be $\alpha_1 \comp \alpha_2$.
Take $\nuin$ to be an arbitrary valuation on $\VarSet$, and
$\nuin'$ to be any valuation such that $\nuin' \supseteq \nuin$. 

We first prove completeness.  Let $\nu$ be a valuation on 
$\FV(\varphi) \cup \VarSet$ such that $\nuin \subseteq \nu$. Now, 
suppose that $\satD{\varphi}{\nu}$.  We want to verify that there 
exists a valuation $\nuout$ such that $(\nuin', \nuout) \in \sem{\alpha}$ 
and $\nuout|_{\FV(\varphi) \cup \VarSet} = \nu$. 
Clearly, $\satD {\varphi_1} \nu$ and $\satD {\varphi_2} \nu$.  
By induction, there exists $\nu_1$ such that 
$(\nuin',\nu_1) \in \sem{\alpha_1}$ and $\nu_1=\nu$ on
$\FV(\varphi_1)\cup\VarSet$.  From the last equality and 
also from induction, there exists 
$\nuout$ such that $(\nu_1,\nuout)\in\sem{\alpha_2}$ and 
$\nuout=\nu_1=\nu$ on $\FV(\varphi_2) \cup \VarSet \cup
\FV(\varphi_1)=\FV(\varphi)\cup\VarSet$. 

We next show soundness.  Suppose that there exists a valuation 
$\nuout$ such that $(\nuin', \nuout) \in \sem{\alpha}$.  
We want to verify that 
$\satD{\varphi}{\nuout|_{\FV(\varphi) \cup \VarSet}}$, and
that $\nuout \supseteq \nuin$. 
Clearly, there exists a valuation $\nu$ such that
$(\nuin',\nu)\in\sem{\alpha_1}$ and $(\nu,\nuout)\in\sem{\alpha_2}$.
By induction, $\satD {\varphi_1} {\nu|_{\FV(\varphi_1) \cup \VarSet}}$
and $\nu \supseteq \nuin$.  Also by induction, 
$\satD {\varphi_2} {\nuout|_{\FV(\varphi_2) \cup \FV(\varphi_1) \cup \VarSet}}$
and $\nuout \supseteq \nu|_{\FV(\varphi_1) \cup \VarSet}$.
From the latter, we obtain $\satD{\varphi}{\nuout|_{\FV(\varphi) \cup \VarSet}}$.
Showing that $\nuout \supseteq \nuin$ is clear.

\paragraph{Disjunction}
If $\varphi$ is $\varphi_1 \lor \varphi_2$, then by induction we
have an expression $\alpha_i$ for $\varphi_i$ and 
$(\FV(\varphi_1) \symdif \FV(\varphi_2)) \cup \VarSet$ for $i = 1,2$
(since $\FV(\varphi_1) \symdif \FV(\varphi_2) \subseteq \VarSet$).
We show that $\alpha$ can be taken to be $\alpha_1 \cup \alpha_2$.
Take $\nuin$ to be an arbitrary valuation on $\VarSet$, and
$\nuin'$ to be any valuation such that $\nuin' \supseteq \nuin$. 

We first prove completeness.  Let $\nu$ be a valuation on 
$\FV(\varphi) \cup \VarSet$ such that $\nuin \subseteq \nu$. Now, 
suppose that $\satD{\varphi}{\nu}$.  We want to verify that there 
exists a valuation $\nuout$ such that $(\nuin', \nuout) \in \sem{\alpha}$ 
and $\nuout|_{\FV(\varphi) \cup \VarSet} = \nu$.  We only consider the case 
when $\satD{\varphi_1}{\nu}$; the other is symmetric.  By induction, there exists 
a valuation $\nuout$ such that $(\nuin', \nuout) \in \sem{\alpha_1}$ and 
$\nuout|_{\FV(\varphi_1) \cup \VarSet} = \nu$.  
Clearly, $\FV(\varphi_1) \cup \VarSet = \FV(\varphi) \cup \VarSet$ 
from the conditions on $\VarSet$, and we are done.

To show soundness, suppose that there exists a valuation 
$\nuout$ such that $(\nuin', \nuout) \in \sem{\alpha}$.  
We want to verify that 
$\satD{\varphi}{\nuout|_{\FV(\varphi) \cup \VarSet}}$, and
that $\nuout \supseteq \nuin$.  Again, we only consider the case 
when $(\nuin', \nuout) \in \sem{\alpha_1}$; the other is symmetric.
By induction, $\satD {\varphi_1} {\nuout|_{\FV(\varphi_1) \cup \VarSet}}$
and $\nuout \supseteq \nuin$.  Again, 
$\FV(\varphi_1) \cup \VarSet = \FV(\varphi) \cup \VarSet$ 
from the conditions on $\VarSet$, and we are done.

\paragraph{Existential Quantification}
If $\varphi$ is $\exists x \, \varphi_1$,
then without loss of generality we may assume that $x \notin V$.
By induction, we have an expression $\alpha_1$ for $\varphi_1$ and 
$\VarSet$.  We show that this expression also works for $\varphi$.
Take $\nuin$ to be an arbitrary valuation on $\VarSet$, and
$\nuin'$ to be any valuation such that $\nuin' \supseteq \nuin$. 

We first prove completeness.  Let $\nu$ be a valuation on 
$\FV(\varphi) \cup \VarSet$ such that $\nuin \subseteq \nu$. Now, 
suppose that $\satD{\varphi}{\nu}$, and hence, 
$\satD{\varphi_1}{\nu \cup \nu_x}$ where $\nu_x$ is a valuation on $\{x\}$.  
We want to verify that there exists a valuation $\nuout$ such that 
$(\nuin', \nuout) \in \sem{\alpha_1}$ and 
$\nuout|_{\FV(\varphi) \cup \VarSet} = \nu$.
By induction, we know that such $\nuout$ exists but with 
$\nuout|_{\FV(\varphi_1) \cup \VarSet} = \nu \cup \nu_x$.
Since $x$ belongs neither to $\FV(\varphi)$ nor to $\VarSet$,
we easily obtain $\nuout|_{\FV(\varphi) \cup \VarSet} = \nu$, 
and we are done.

To show soundness, let $(\nuin',\nuout)\in\sem{\alpha_1}$.  By
induction, $\satD {\varphi_1} {\nuout|_{\FV(\varphi_1) \cup \VarSet}}$, 
so certainly $\satD \varphi {\nuout|_{\FV(\varphi) \cup \VarSet}}$. 
What remains to show is that $\nuout \supseteq \nuin$ which is clear
from the induction step.

\paragraph{Negation}
Finally, if $\varphi$ is $\neg \varphi_1$, then by induction we
have an expression $\alpha_1$ for $\varphi_1$ and $\VarSet$.  Fix an
arbitrary constant $c$, and a fresh variable $u$.  Let $Z$ denote 
$(\var(\alpha_1) \setminus \VarSet)\cup\{u\}$, and let $\xi$ be
the composition of all expressions $(z:=c)$ for $z\in Z$.  
We show that $\alpha$ can be taken to be $\xi \; - \; \alpha_1\comp\xi$.  
Note that $\FV(\varphi)=\FV(\varphi_1)\subseteq\VarSet$ 
(by the $\VarSet$-executability of $\varphi$).

We first prove completeness.  Suppose that $\satD \varphi \nuin$.
We want to verify that there 
exists a valuation $\nuout$ such that $(\nuin', \nuout) \in \sem{\alpha}$ 
and $\nuout|_{\VarSet} = \nuin$.
Take $\nuout'$ to be the valuation that agrees with $\nuin'$ outside $Z$
(and hence, $\nuout'|_{\VarSet} = \nuin$), 
and moreover, it assigns the value $c$ for every $z\in Z$.  It is clear
that $(\nuin', \nuout') \in \sem{\xi}$.  For the sake of contradiction,
suppose $(\nuin',\nuout')\in\sem{\alpha_1\comp\xi}$.  Then, by definition,
there exists $\nu$ such that $(\nuin', \nu) \in \sem{\alpha_1}$.
By induction, we know that $\satD {\varphi_1} {\nu|_{\VarSet}}$ 
and $\nu|_{\VarSet} = \nuin$.  It follows that $\nsatD \varphi \nuin$,
which is a contradiction.  Thus, $(\nuin',\nuout')\not\in\sem{\alpha_1\comp\xi}$, 
whence, $(\nuin',\nuout')\in\sem{\alpha}$, as desired. 

To show soundness, suppose that there exists a valuation 
$\nuout$ such that $(\nuin', \nuout) \in \sem{\alpha}$.  
We want to verify that $\satD{\varphi}{\nuin}$. 
By the semantics of $\alpha$, we obtain that $(\nuin', \nuout) \in \sem{\xi}$, 
and $(\nuin', \nuout) \not \in \sem{\alpha_1\comp\xi}$.  From the former,
we obtain that $\nuout =  \nuin'$ outside $Z$ which is disjoint from 
$\VarSet$.  For the sake of contradiction, assume that 
$\satD{\varphi_1}{\nuin}$.  Then, by induction, there is a 
$\nuout'$ such that $(\nuin', \nuout') \in \sem{\alpha_1}$ 
and $\nuout' = \nuin' = \nuout$ on $\VarSet$.  What remains to show is
that $(\nuout', \nuout) \in \sem{\xi}$ yielding the contradiction.
It is not hard to see that $\nuin' = \nuout'$ outside $\var(\alpha_1) - \VarSet$
which contains all the set of variables outside $Z$.
Thus, $\nuout = \nuout'$ outside $Z$, whence, $(\nuout', \nuout) \in \sem{\xi}$
as desired.
\end{proof}

\subsection{From FLIF to io-disjoint FLIF}
\label{secproof}

We prove that $\beta$ constructed by the method described in
Section~\ref{method} satisfies the statement of
Lemma~\ref{lem}.  The base cases are straightforwardly verified.
For every inductive case, we need to verify several things:
\begin{description}
\item[Inputs] $I(\beta)=I(\alpha)$.
\item[io-disjointness] Every subexpression of $\beta$, including
$\beta$ itself, must have disjoint inputs and outputs.
\item[Outputs] $O(\beta) \supseteq \rho(O(\alpha))$.
\item[No clashes] $O(\beta) - \rho(O(\alpha))$ is disjoint from $W$.
\item[Completeness] For any instance $D$ and
$(\nu_1,\nu_2) \in \sem \alpha$, we want to find $\nu$ such that
$(\nu_1,\nu) \in \sem \beta$ and $\nu(\rho(y))=\nu_2(y)$ for $y
\in O(\alpha)$.
\item[Soundness] For any $(\nu_1,\nu_2) \in \sem \beta$, we want
to find $\nu$ such that $(\nu_1,\nu) \in \sem \alpha$ and
$\nu(y)=\nu_2(\rho(y))$ for $y \in O(\alpha)$.
\end{description}

\intersection{
\subsection{Intersection}

\subsubsection*{Inputs} Let $\{i,j\}=\{1,2\}$.  We begin by
analyzing $\gamma_i$.  By definition
(Table~\ref{tabio}) an equality test has no outputs, so a
composition of equality tests has no outputs and has the union of
the inputs as inputs:
\begin{align*}
I(\gamma_i) & = (O(\alpha_i)-O(\alpha_j)) \cup
(\rho_i(O(\alpha_i)) - O(\alpha_j)) \\
O(\gamma_i) & = \emptyset
\end{align*}
Since the part of $I(\gamma_i)$ that is contained in
$\rho_i(O(\alpha_i))$ is a subset of $O(\beta_i)$, and
$I(\beta_i)=I(\alpha_i)$ by induction, we obtain:
\begin{align*}
I(\beta_i \comp \gamma_i) &= I(\alpha_i) \cup
(O(\alpha_i)-O(\alpha_j)) \\
O(\beta_i \comp \gamma_i) &= O(\beta_i)
\end{align*}
Since $O(\beta_1 \comp \gamma_1) = O(\beta_1)$ is disjoint from
$\var(\alpha)$ and thus from $I(\beta_2 \comp \gamma_2)$, the
inputs for the composition of $\beta_1 \comp \gamma_1$ and
$\beta_2 \comp \gamma_2$ is simply the union of the inputs, which
equals $I(\alpha_1) \cup I(\alpha_2) \cup (O(\alpha_1) \symdif
O(\alpha_2))$, which equals $I(\alpha)$:
\begin{align*}
I(\beta_1 \comp \gamma_1 \comp \beta_2 \comp \gamma_2) &=
I(\alpha) \\
O(\beta_1 \comp \gamma_1 \comp \beta_2 \comp \gamma_2) &=
O(\beta_1) \cup O(\beta_2)
\end{align*}
Looking at $\eta$, and noting that $O(\alpha)=O(\alpha_1)\cap
O(\alpha_2)$, we see that $I(\eta) = \rho_1(O(\alpha)) \cup
\rho_2(O(\alpha))$, which is entirely contained in $O(\beta_1)
\cup O(\beta_2)$ by induction.  We conclude that
$I(\beta)=I(\alpha)$ as desired.

\subsubsection*{Outputs} We have $O(\beta)=O(\beta_1) \cup
O(\beta_2)$ which includes $\rho_1(O(\alpha_1)) \cup
\rho_2(O(\alpha_2))$ by induction, which includes
$\rho(O(\alpha))$ as desired.

\subsubsection*{io-disjointness} Let $i=1$ or $2$.  The
subexpressions $\gamma_i$ and $\eta$ are trivially io-disjoint.  The
subexpression $\beta_i \circ \gamma_i$ is io-disjoint since, by
construction, $O(\beta_i)$ is disjoint from $\var(\alpha)$.  The
subexpression $\beta_1 \comp \gamma_1 \comp \beta_2 \comp
\gamma_2$, and the final expression $\beta$,
are io-disjoint for the same reason.  

\subsubsection*{No clashes}
Note that $\rho_1$ agrees with $\rho$ on $O(\alpha)$.
Furthermore, by induction and definition of $\rho_2$, the set
$O(\beta_2)$ is disjoint from $W_2$ and thus from $O(\beta_1)$,
which in turn includes $\rho_1(O(\alpha_1))$ by induction.
Hence:
\begin{align*}
O(\beta)-\rho(O(\alpha)) &= (O(\beta_1) \cup O(\beta_2)) -
\rho_1(O(\alpha_1) \cap O(\alpha_2)) \\
&= (O(\beta_1) - \rho_1(O(\alpha))) \cup O(\beta_2) \\
&\subseteq 
(O(\beta_1) - \rho_1(O(\alpha_1))) \cup
\rho_1(O(\alpha_1)-O(\alpha)) \cup O(\beta_2).
\end{align*}
The three terms of the last union are disjoint from $W$ by
construction and the induction hypothesis.

\subsubsection*{Completeness} Since $(\nu_1,\nu_2) \in
\sem{\alpha_1} \cap \sem{\alpha_2}$, by induction, we have
$\nu_3$ and $\nu_4$ such that 
\begin{itemize}
\item
$(\nu_1,\nu_3) \in \sem{\beta_1}$ and $\nu_3(\rho_1(y))=\nu_2(y)$
for $y \in O(\alpha_1)$;
\item
$(\nu_1,\nu_4) \in \sem{\beta_2}$ and $\nu_4(\rho_2(y))=\nu_2(y)$
for $y \in O(\alpha_2)$.
\end{itemize}

Since, by inertia, $\nu_1$ agrees with $\nu_2$ outside
$\Out(\alpha_1) \cap \Out(\alpha_2)$ and $\nu_1$ agrees with
$\nu_3$ outside $\Out(\beta_1)$ which is disjoint from
$\var(\alpha)$, we have $\nu_3(\rho_1(y)) = \nu_2(y) = \nu_1(y) =
\nu_3(y)$ for every $y \in \Out(\alpha_1) - \Out(\alpha_2)$,
whence, $(\nu_3, \nu_3) \in \sem{\gamma_1}$.  Consequently,
$(\nu_1, \nu_3) \in \sem{\beta_1 \comp \gamma_1}$. 

Reasoning similarly, we also have $(\nu_1, \nu_4) \in
\sem{\beta_2 \comp \gamma_2}$.  As noted in the previous
paragraph, $\nu_1$ and $\nu_3$ agree on $\var(\alpha)$ so in
particular on $I(\beta_2\comp\gamma_2)$. Hence, by input
determinacy, there exists $\nu_5$ such that $(\nu_3,\nu_5) \in
\sem{\beta_2 \comp \gamma_2}$ and $\nu_5$ agrees with $\nu_4$ on
$O(\beta_2\comp\gamma_2)=O(\beta_2)$.

We next check that $(\nu_5,\nu_5) \in \sem\eta$.  By inertia,
$\nu_5$ agrees with $\nu_3$ outside $O(\beta_2)$, which by
construction is disjoint from $O(\beta_1)$, so $\nu_5$ agrees
with $\nu_3$ on $O(\beta_1)$. Let $v \in \Out(\alpha_1) \cap
\Out(\alpha_2)$.  By induction, $\rho_2(v) \in O(\beta_2)$ and
$\rho_1(v) \in O(\beta_1)$.  Hence we have $\nu_5(\rho_2(v)) =
\nu_4(\rho_2(v)) = \nu_2(v) = \nu_3(\rho_1(v)) =
\nu_5(\rho_1(v))$ as desired.
 
All in all, we have $(\nu_1,\nu_5) \in
\sem{\beta_1\comp\gamma_1\comp\beta_2\comp\gamma_2}$, and
$\nu_5(\rho(y))=\nu_5(\rho_1(y))=\nu_2(y)$ for $y \in O(\alpha)$
was already noted in the previous paragraph.  Thus $\nu_5$ yields
the valuation needed to establish completeness.

\subsubsection*{Soundness}
Since $(\nu_1,\nu_2) \in \sem \beta$, we know that $\nu_2 \models
\eta$.\footnote{Here and below, for a composition $\eta$ of equality
tests, and a valuation $\nu$, we use $\nu \models \eta$ to
indicate that $\nu$ satisfies every equality from $\eta$.}  We
also know there exists $\nu_3$ such that $(\nu_1,\nu_3) \in
\sem{\beta_1\comp\gamma_1}$ and $(\nu_3,\nu_2) \in
\sem{\beta_2\comp \gamma_2}$.  Note that $(\nu_1,\nu_3) \in
\sem{\beta_1}$ and $\nu_3 \models \gamma_1$.

Since, by inertia, $\nu_3$ agrees with $\nu_1$ outside
$O(\beta_1\comp \gamma_1)=O(\beta_1)$ which is disjoint from
$\var(\alpha)$, we know that $\nu_1$ agrees with $\nu_3$ on
$\var(\alpha)$ so in particular on $I(\beta_2\comp\gamma_2)$.  
Hence, by input-output determinacy, there exists $\nu_4$ such that
$(\nu_1,\nu_4) \in \sem{\beta_2\comp \gamma_2}$ and $\nu_4$
agrees with $\nu_2$ on $O(\beta_2\comp \gamma_2)=O(\beta_2)$.
Note that $(\nu_1,\nu_4) \in \sem{\beta_2}$ and $\nu_4 \models \gamma_2$.

By induction, there exists $\nu$ such
that $(\nu_1,\nu) \in \sem{\alpha_1}$ and
$\nu(y)=\nu_3(\rho_1(y))$ for $y \in O(\alpha_1)$.  Similarly,
there exists $\nu'$ such that $(\nu_1,\nu') \in \sem{\alpha_2}$
and $\nu'(y)=\nu_4(\rho_2(y))$ for $y \in O(\alpha_2)$.  We are
going to show that $\nu=\nu'$.  By inertia, outside
$O(\alpha_1)\cup O(\alpha_2)$, both $\nu$ and $\nu'$ agree with
$\nu_1$, so we can focus on $O(\alpha_1) \cup O(\alpha_2)$.  We
consider three cases: $O(\alpha_1) \cap O(\alpha_2)$;
$O(\alpha_1)-O(\alpha_2)$; and $O(\alpha_2)-O(\alpha_1)$.

First, consider $y \in O(\alpha_1) \cap O(\alpha_2)$.  Then
\[
\nu(y)=\nu_3(\rho_1(y))=\nu_2(\rho_1(y)) 
{} =
\nu_2(\rho_2(y))=\nu_4(\rho_2(y))=\nu'(y).\]
The first and last equalities are by definition of $\nu$ and
$\nu'$.  The second equality holds because $(\nu_3,\nu_2) \in
\sem{\beta_2}$, so by inertia $\nu_2$ and $\nu_3$ agree outside
$O(\beta_2)$ which is disjoint from $O(\beta_1)$, which contains
$\rho_1(y)$. The third equality holds because $\nu_2 \models
\eta$. The fourth equality holds because $\nu_4$ agrees with
$\nu_2$ on $O(\beta_2)$ which contains $\rho_2(y)$.

Second, consider $y \in O(\alpha_1) - O(\alpha_2)$. Then
\[\nu(y) = \nu_3(\rho_1(y)) = \nu_3(y) = \nu_1(y) = \nu'(y). \]
The second equality holds because $\nu_3 \models \gamma_1$.  The
third equality holds because $(\nu_1,\nu_3) \in \sem{\beta_1}$,
so by inertia $\nu_1$ and $\nu_3$ agree outside $O(\beta_1)$
which is disjoint from $\var(\alpha)$, which contains $y$. The final
equality holds because $(\nu_1,\nu') \in \sem{\alpha_2}$, so by
inertia $\nu_1$ and $\nu'$ agree outside $O(\alpha_2)$ and $y
\notin O(\alpha_2)$.

The third case is similar to the second, so $\nu = \nu'$.
We conclude that $(\nu_1,\nu) \in \sem{\alpha_1} \cap
\sem{\alpha_2}$.  Moreover, in the discussion of the first case
above we already noted
that $\nu(y)=\nu_2(\rho_1(y))$ for $y
\in O(\alpha)=O(\alpha_1)\cap O(\alpha_2)$.  Since
$\rho_1(y)=\rho(y)$ for such $y$, we are done.
}

\subsubsection{Composition}

Henceforth, for any expression $\delta$, we will use the notation
\[\nu_1 \xrightarrow \delta \nu_2 \]to indicate that
$(\nu_1,\nu_2) \in \sem\delta$.

\paragraph{Inputs} We first analyze inputs and outputs for
$\theta(\beta_2)$.  Inputs pose no difficulty (note that
$I(\beta_2)=I(\alpha_2)$).  As to outputs,
$\theta$ only changes variables in $I(\alpha_2)$
and $\beta_2$ is io-disjoint by induction, so $\theta$
has no effect on $O(\beta_2)$.  Hence:
\begin{align*}
I(\theta(\beta_2)) &= (I(\alpha_2) - O(\alpha_1)) \cup
\rho_1(I(\alpha_2) \cap O(\alpha_1)) \\
O(\theta(\beta_2)) &= O(\beta_2)
\end{align*}
Calculating $I(\beta)$, the part of $I(\theta(\beta_2))$ that
is contained in $\rho_1(O(\alpha_1))$ disappears, because
$\rho_1(O(\alpha_1))$ is contained in $O(\beta_1)$.
Also, $I(\beta_1)=I(\alpha_1)$ by induction.
Thus $I(\beta) = I(\alpha_1) \cup (I(\alpha_2)-O(\alpha_1)) =
I(\alpha)$ as desired.

\paragraph{Outputs} We verify:
\begin{align*}
\rho(O(\alpha)) &= \rho(O(\alpha_1)) \cup \rho(O(\alpha_2)) \\
&= \rho(O(\alpha_1)-O(\alpha_2)) \cup \rho(O(\alpha_2)) \\
&= \rho_1(O(\alpha_1)-O(\alpha_2)) \cup \rho_2(O(\alpha_2)) \\
&\subseteq O(\beta_1) \cup O(\beta_2) \\
&= O(\beta).
\end{align*}

\paragraph{io-disjointness} Expression $\beta$ is
io-disjoint since $O(\beta_1)$ and $O(\beta_2)$ are disjoint from
$\var(\alpha)$ by construction.  For subexpression
$\theta(\beta_2)$, recall $I(\theta(\beta_2))$ and
$O(\theta(\beta_2))$ as calculated above.  The part contained in
$I(\alpha_2)$ is disjoint from $O(\beta_2)$ since
$I(\alpha_2)=I(\beta_2)$ and $\beta_2$ is io-disjoint by
induction.  We write the other part as $\rho_1(I(\alpha_2) \cap
O(\alpha_1) \cap O(\alpha_2)) \cup \rho((I(\alpha_2) \cap
O(\alpha_1))-O(\alpha_2))$.  The first term is disjoint from
$O(\beta_2)$ by definition of $\rho_1$.

The second term is
dealt with by the more general claim that
\emph{$\rho(O(\alpha_1)-O(\alpha_2))$ is disjoint from
$O(\beta_2)$.} Towards a proof, let $y \in O(\alpha_1)-O(\alpha_2)$ and
assume for the sake of contradiction that $\rho(y) \in
O(\beta_2)$.  Then $\rho(y) \in O(\beta_2) - \rho(O(\alpha_2))$,
which by induction is disjoint from $W_2$, which includes
$\rho(O(\alpha_1))$.  However, since $y \in O(\alpha_1)$, this is
a contradiction.

\paragraph{No clashes}  We have
\begin{align*}
O(\beta) - \rho(O(\alpha)) &= (O(\beta_1) \cup O(\beta_2)) -
\rho(O(\alpha_1) \cup O(\alpha_2)) \\
&\subseteq (O(\beta_1) - \rho(O(\alpha_1)) \cup (O(\beta_2) -
\rho(O(\alpha_2)).
\end{align*}
By induction,
the latter two terms are disjoint from $W_1 \supseteq W$ and $W_2
\supseteq W$, respectively.

\paragraph{Completeness}  Since $(\nu_1,\nu_2) \in
\sem\alpha$, there exists $\nu$ such that \[\nu_1
\xrightarrow{\alpha_1} \nu \xrightarrow{\alpha_2} \nu_2. \]
By induction, there exists $\nu_3$ such that $(\nu_1,\nu_3) \in
\sem{\beta_1}$ and $\nu_3(\rho_1(y))=\nu(y)$ for $y \in
O(\alpha_1)$.  Also by induction, there exists $\nu_4$ such that
$(\nu,\nu_4) \in \sem{\beta_2}$ and $\nu_4(\rho_2(y))=\nu_2(y)$ for
$y \in O(\alpha_2)$.  By the Renaming Lemma (\ref{thm:subs}), we
have $(\nu \circ \theta,\nu_4 \circ \theta) \in
\sem{\theta(\beta_2)}$.

We claim that $\nu_3$ agrees with $\nu \circ \theta$ on
$I(\theta(\beta))$.  Recalling that the latter equals
$(I(\alpha_2) - O(\alpha_1)) \cup \rho_1(I(\alpha_2) \cap
O(\alpha_1))$, we verify this claim as follows.
\begin{itemize}
\item
We begin by verifying that $\theta$ is the identity on $I(\alpha_2)
- O(\alpha_1)$.  Indeed, let $u \in I(\alpha_2)-O(\alpha_1)$.
Note that $\theta$ is the identity outside
$(I(\alpha_2) \cap O(\alpha_1)) \cup 
\rho_1(I(\alpha_2) \cap O(\alpha_1))$.  Clearly $u$ does not
belong to the first term.  Also $u$ does not belong to the second
term, since the image of $\rho_1$ is disjoint from $\var(\alpha)$.
\item
Now let $u \in I(\alpha_2) - O(\alpha_1)$.  Then $\theta(u)=u$, so
$(\nu \circ \theta)(u)=\nu(u)$.  Since $\nu_1
\xrightarrow{\alpha_1} \nu$ and $u$ does not belong to $O(\alpha_1)$, we have
$\nu(u)=\nu_1(u)$.  Also,  $\nu_1 \xrightarrow{\beta_1} \nu_3$ and
$u$ does not belongs to $O(\beta_1)$ since $O(\beta_1)$ is disjoint
from $\var(\alpha)$.  Hence, $\nu_1(u)=\nu_3(u)$ so we get $\nu(u)=\nu_3(u)$.
\item
Let $u \in I(\alpha_2) \cap O(\alpha_1)$.  Then $(\nu \circ
\theta)(\rho_1(u)) = \nu(\theta(\theta(u))) \allowbreak = \nu(u)$. The latter
equals $\nu_3(\rho_1(u))$ by definition of $\nu_3$.
\end{itemize}

We can now apply input-output determinacy and obtain $\nu_5$ such that
$(\nu_3,\nu_5) \in \sem{\theta(\beta_2)}$
and $\nu_5$ agrees with $\nu_4 \circ \theta$ on
$O(\beta_2)$.  It follows that $(\nu_1,\nu_5) \in \sem\beta$, so
we are done if we can show that $\nu_5(\rho(y))=\nu_2(y)$ for $y
\in O(\alpha)$.  We distinguish two cases.

First, assume $y \in O(\alpha_2)$.  Then $\nu_5(\rho(y)) =
\nu_5(\rho_2(y)) = (\nu_4 \circ \theta)(\rho_2(y))$ by definition
of $\nu_5$.  Now observe that $\theta(\rho_2(y)) = \rho_2(y)$.
Indeed, $\rho_2(y)$ belongs to $O(\beta_2)$, while $\theta$ is
the identity outside $(I(\alpha_2) \cap O(\alpha_1)) \cup 
\rho_1(I(\alpha_2) \cap O(\alpha_1))$.  The first term is
disjoint from $O(\beta_2)$ since $O(\beta_2)$ is disjoint from
$\var(\alpha)$.  The second term is disjoint from $O(\beta_2)$ as
already shown in the io-disjointness proof.  So, we obtain
$\nu_4(\rho_2(y))$, which equals $\nu_2(y)$ by definition of
$\nu_4$.

Second, assume $y \in O(\alpha_1) - O(\alpha_2)$.  Since
$(\nu_3,\nu_5) \in \sem{\theta(\beta_2)}$ and
$O(\theta(\beta_2))=O(\beta_2)$ is disjoint from
$\rho(O(\alpha_1)-O(\alpha_2))$ as seen in the disjointness
proof, $\nu_5(\rho(y))=\nu_3(\rho(y))$.  Since $y \notin
O(\alpha_2)$, we have $\nu_3(\rho(y)) = \nu_3(\rho_1(y))$, which equals
$\nu(y)$ by definition of $\nu_3$. Now $\nu(y)=\nu_2(y)$ since
$(\nu,\nu_2) \in \sem{\alpha_2}$ and $y \notin O(\alpha_2)$.

\paragraph{Soundness}  The proof for soundness is remarkably
symmetrical to that for completeness.  Such symmetry is not
present in the proofs for the other operators.  We cannot yet
explain well why the symmetry is present onlu for composition.

Since $(\nu_1,\nu_2) \in \sem\beta$, there exists $\nu$ such that
\[\nu_1 \xrightarrow{\beta_1} \nu \xrightarrow{\theta(\beta_2)}
\nu_2. \]By induction, there exists $\nu_3$ such that
$(\nu_1,\nu_3) \in \sem{\alpha_1}$ and $\nu_3(y)=\nu(\rho_1(y))$
for $y \in O(\alpha_1)$.  By the Renaming Lemma, we have $(\nu
\circ \theta,\nu_2 \circ \theta) \in \sem{\beta_2}$ (note that
$\theta^{-1} = \theta$).  By induction, there exists $\nu_4$ such
that $(\nu \circ \theta,\nu_4) \in \sem{\alpha_2}$ and
$\nu_4(y)=(\nu_2 \circ \theta)(\rho_2(y))$ for $y \in
O(\alpha_2)$.

Using analogous reasoning as in the completeness proof, it can be
verified that $\nu_3$ agrees with $\nu \circ \theta$ on
$I(\alpha_2)$.  Hence, by input-output determinacy, there exists $\nu_5$
such that $(\nu_3,\nu_5) \in \sem{\alpha_2}$ and $\nu_5$ agrees
with $\nu_4$ on $O(\alpha_2)$.  It follows that $(\nu_1,\nu_5)
\in \sem\alpha$, so we are done if we can show that
$\nu_5(y)=\nu_2(\rho(y))$ for $y \in O(\alpha)$.  This is shown by
analogous reasoning as in the completeness proof. 
\newpage
\subsubsection{Union} \label{subunion}

\paragraph{Inputs} Let $\{i,j\}=\{1,2\}$. We begin by
noting:
\begin{align*}
I(\gamma_i) &= O(\alpha_j) - O(\alpha_i) \\
O(\gamma_i) &= \rho(O(\alpha_j) - O(\alpha_i)) 
\end{align*}
Note that $I(\gamma_i)$, being a subset of $\var(\alpha)$,
is disjoint from $O(\beta_i)$, so $I(\beta_i \comp \gamma_i)$ is
simply $I(\beta_i) \cup I(\gamma_i)$. By induction,
$I(\beta_i)=I(\alpha_i)$ and $O(\beta_i)$ contains
$\rho(O(\alpha_i))$.  Hence:
\begin{align*}
I(\beta_i \comp \gamma_i) &= I(\alpha_i) \cup
(O(\alpha_j)-O(\alpha_i)) \\
O(\beta_i \comp \gamma_i) &= O(\beta_i) \cup \rho(O(\alpha_j))
\end{align*}

We next analyze $\eta_i$. Recall that this expression was defined
by two cases.
\begin{enumerate}
\item[(a)] If $O(\beta_i)$ is nonempty, $I(\eta_i) \subseteq
O(\beta_i)$.
\item[(b)] Otherwise, $I(\eta_i) \subseteq \var(\alpha_j)$.
However, if $O(\beta_i)$ is empty then $O(\alpha_i)$ is too, so
that $I(\alpha) = I(\alpha_i) \cup I(\alpha_j) \cup O(\alpha_j) =
I(\alpha_i) \cup \var(\alpha_j)$. Hence, in this case, $I(\eta_i)
\subseteq I(\alpha)$.
\end{enumerate}
The output is the same in both cases:
\[O(\eta_i) = O(\beta_j) - \rho(O(\alpha_j)) \]

Composing $\beta_i \comp \gamma_i$ with $\eta_i$, we continue
with the two above cases.
\begin{enumerate}
\item[(a)] In this case $I(\eta_i)$ is contained in
$O(\beta_i\comp \gamma_i)$, so $I(\beta_i\comp \gamma_i \comp
\eta_i) = I(\beta_i \comp \gamma_i)$.
\item[(b)] In this case $I(\eta_i)$ is disjoint from
$O(\beta_i\comp \gamma_i)$, and $I(\beta_i \comp \gamma_i \comp
\eta_i)$ equals $I(\beta_i \comp \gamma_i)$ to which some element
of $I(\alpha)$ is added.
\end{enumerate}
In both cases, we can state that \[I(\alpha_i) \cup
(O(\alpha_j) - O(\alpha_i)) \subseteq I(\beta_i \comp \gamma_i
\comp \eta_1) \subseteq I(\alpha). \]
For outputs, we have \[O(\beta_i \comp \gamma_i \comp \eta_i) =
O(\beta_1) \cup O(\beta_2). \]

The set of inputs of the final expression
$\beta = (\beta_1 \comp \gamma_1 \comp \eta_1) \cup
(\beta_2 \comp \gamma_2 \comp \eta_2)$ equals the
union of inputs of the two top-level subexpressions, since these
two subexpressions have the same outputs ($O(\beta_1)\cup
O(\beta_2)$).  Hence
\[I(\alpha_1) \cup I(\alpha_2) \cup (O(\alpha_1) \symdif
O(\alpha_2)) \subseteq I(\beta) \subseteq I(\alpha). \]Since the
left expression equals $I(\alpha)$ by definition, we obtain that
$I(\beta)=I(\alpha)$ as desired.

\paragraph{Outputs} From the above we have
$O(\beta)=O(\beta_1) \cup O(\beta_2)$. Since $O(\beta_i)
\supseteq \rho(O(\alpha_i))$ by induction, we obtain $O(\beta)
\supseteq \rho(O(\alpha_1) \cup O(\alpha_2)) = \rho(O(\alpha))$
as desired.

\paragraph{io-disjointness} Let $i=1,2$.  Expression
$\gamma_i$ is io-disjoint since the image of $\rho$ is disjoint
from $\var(\alpha)$.  Then $\beta_i \comp \gamma_i$ is io-disjoint
because both $O(\beta_i)$ and the image of $\rho$ are disjoint
from $\var(\alpha)$.  For the same reason, $\beta_i \comp \gamma_i
\comp \eta_i$ and $\beta$ are io-disjoint.  We still need to look
at $\eta_i$.  In case (b), $I(\eta_i) \subseteq I(\alpha)$ so
io-disjointness follows again because $O(\beta_j)$ is disjoint
from $\var(\alpha)$. In case (a), we look at $i=1$ and $i=2$
separately.  For $i=1$ we observe that
$O(\eta_1)=O(\beta_2)-\rho(O(\alpha_2))$ is disjoint from $W_2$,
which includes $O(\beta_1)$.  For $i=2$ we write $O(\beta_2) =
\rho(O(\alpha_2)) \cup (O(\beta_2)-\rho(O(\alpha_2)))$.  The first
term is disjoint from $O(\eta_2)=O(\beta_1) - \rho(O(\alpha_1))$
since the latter is disjoint from $W_1$ which includes
$\rho(O(\alpha_2))$.  The second term is disjoint from
$O(\beta_1)$ as we have just seen.

\paragraph{No clashes} We verify:
\begin{align*}
O(\beta) - \rho(O(\alpha)) &= (O(\beta_1) \cup O(\beta_2)) -
\rho(O(\alpha_1) \cup O(\alpha_2)) \\
&\subseteq (O(\beta_1)-\rho(O(\alpha_1))) \cup
(O(\beta_2)-\rho(O(\alpha_2))). 
\end{align*}
By induction, both of the latter terms are disjoint from $W$, which confirms
that there are no clashes.

\paragraph{Completeness}

Assume $(\nu_1,\nu_2) \in \sem{\alpha_1}$; the reasoning for
$\alpha_2$ is analogous.  By induction, there exists $\nu_3$ such
that $(\nu_1,\nu_3) \in \sem{\beta_1}$ and
$\nu_3(\rho(y))=\nu_2(y)$ for $y \in O(\alpha_1)$.

Note that each of the expressions $\gamma_i$ and $\eta_i$ for
$i=1,2$ is a composition of variable assignments. For any such
expression $\delta$ and any valuation $\nu$ there always exists a
unique $\nu'$ such that $(\nu,\nu') \in \sem\delta$ (even
independently of $D$).

Now let \[\nu_3 \xrightarrow{\gamma_1} \nu_4
\xrightarrow{\eta_1} \nu_5,
\]so that $(\nu_1,\nu_5) \in \sem\beta$.  If we can show that
$\nu_5(\rho(y))=\nu_2(y)$ for $y \in O(\alpha)$ we are done.
Thereto, first note that $\eta_1$ does not change variables in
$\rho(O(\alpha))$.  Indeed, for $\rho(O(\alpha_2))$ this is
obvious from $O(\eta_1)=O(\beta_2)-\rho(O(\alpha_2))$;
for $\rho(O(\alpha_1))$ this follows because by induction,
$O(\beta_2) - \rho(O(\alpha_2))$ is disjoint from $W_2$, which
includes $O(\beta_1)$, which includes $\rho(O(\alpha_1))$.
So, by $\nu_4 \xrightarrow{\eta_1} \nu_5$ we are down to showing that
$\nu_4(\rho(y))=\nu_2(y)$ for $y \in O(\alpha)$. We distinguish
two cases.

If $y \in O(\alpha_1)$, since $\nu_3 \xrightarrow{\gamma_1}
\nu_4$ and $\gamma_1$ does not change variables in
$\rho(O(\alpha_1))$, we have $\nu_4(\rho(y))=\nu_3(\rho(y))$,
which equals $\nu_2(y)$ by definition of $\nu_3$.

If $y \in O(\alpha_2) - O(\alpha_1)$, then $\nu_4(\rho(y)) =
\nu_3(y)$ by $\nu_3 \xrightarrow{\gamma_1} \nu_4$.  Now since
\[\nu_3 \xleftarrow{\beta_1} \nu_1 \xrightarrow{\alpha_1} \nu_2
\]and $y \notin O(\beta_1) \cup O(\alpha_1)$, we get
$\nu_3(y)=\nu_2(y)$ as desired.  (The reason for $y \notin
O(\beta_1)$ is that by induction, $O(\beta_1)$ is disjoint from
$W_1$ which includes $\var(\alpha)$.)

\paragraph{Soundness} Assume $(\nu_1,\nu_2) \in \sem{\beta_1
\comp \gamma_1 \comp \eta_1}$; the reasoning for $\beta_2 \comp
\gamma_2 \comp \eta_2$ is analogous.  Then there exist $\nu_3$ and
$\nu_4$ such that \[\nu_1 \xrightarrow{\beta_1} \nu_3
\xrightarrow{\gamma_1} \nu_4 \xrightarrow{\eta_1} \nu_2. \eqno
(*) \]By induction, there exists $\nu$ such that $(\nu_1,\nu)
\in \sem{\alpha_1} \subseteq \sem\alpha$
and $\nu(y)=\nu_3(\rho(y))$ for $y \in
O(\alpha_1)$.  As observed in the completeness proof, $\gamma_1$
and $\eta_1$ do not touch variables in $\rho(O(\alpha_1))$.
Since $(*)$ shows that $\gamma_1$ followed by $\eta_1$ maps $\nu_3$ to $\nu_2$,
also $\nu(y)=\nu_2(\rho(y))$ for $y \in O(\alpha_1)$.

If we can show the same for $y \in O(\alpha_2)-O(\alpha_1)$, we
have covered all $y \in O(\alpha)$ and we are done.  This is
verified as follows.  By inertia, we have
$\nu(y)=\nu_1(y)=\nu_3(y)$, the latter equality because
$O(\beta_1)$ is disjoint from $\var(\alpha)$.  From $\nu_3
\xrightarrow{\gamma_1} \nu_4$ we have $\nu_3(y) =
\nu_4(\rho(y))$.  Now the latter equals $\nu_2(\rho(y))$ since
$\nu_4 \xrightarrow{\eta_1} \nu_2$ and $\eta_1$ does not touch
variables in $\rho(O(\alpha_2))$. 
\newpage
\subsubsection{Difference} \label{subdifference}

\paragraph{Inputs} Let $\{i,j\}=\{1,2\}$.  We begin by
noting:
\begin{align*}
I(\gamma_i) &= O(\alpha_j) - O(\alpha_i) \\
O(\gamma_i) &= \rho_j(O(\alpha_j)-O(\alpha_i))
\end{align*}
Slightly adapting the calculation of inputs in the proof for union
(Section~\ref{subunion}),
we next note:
\begin{align*}
I(\beta_i \comp \gamma_i) &= I(\alpha_i) \cup (O(\alpha_j) -
O(\alpha_i)) \\
O(\beta_i \comp \gamma_i) &=
O(\beta_i) \cup \rho_j(O(\alpha_j) - O(\alpha_i))
\end{align*}

We next analyze $\eta_1$. Recall that this expression was defined
by two cases.
\begin{enumerate}
\item[(a)]
If $O(\alpha_1)$ and $O(\alpha_2)$
intersect, $I(\eta_1) \subseteq \rho(O(\alpha_1) \cap O(\alpha_2))$.
\item[(b)]
Otherwise, $I(\eta_1) \subseteq \var(\alpha_2)$.  However, note in
this case that $I(\alpha)=\var(\alpha)$, so that $I(\eta_1)
\subseteq I(\alpha)$.
\end{enumerate}
Regardless of the case, \[O(\eta_1) = O(\beta_2) -
\rho_2(O(\alpha_2)). \]

Composing $\beta_i \comp \gamma_i$ with $\eta_1$, we
continue with
the above two cases.
\begin{enumerate}
\item[(a)]
By induction, $O(\beta_i)$
contains $\rho_i(O(\alpha_i))$, and $\rho$ agrees
$\rho_i$ on $O(\alpha_1) \cap
O(\alpha_2)$.  Hence $I(\eta_1) \subseteq O(\beta_i \comp
\gamma_i)$ and thus
\begin{align*}
I(\beta_i \comp \gamma_i \comp \eta_1) 
= I(\beta_i \comp
\gamma_i) 
= I(\alpha_i) \cup (O(\alpha_j) - O(\alpha_i)).
\end{align*}
\item[(b)]
In this case $I(\eta_1) \subseteq I(\alpha)$
which is disjoint from $O(\beta_i \comp \gamma_i)$.  Note that
also $I(\beta_i \comp \gamma_i) \subseteq I(\alpha)$.
\end{enumerate}
In both cases, we can state that \[I(\alpha_i) \cup
(O(\alpha_j) - O(\alpha_i)) \subseteq I(\beta_i \comp \gamma_i
\comp \eta_1) \subseteq I(\alpha). \]
For outputs, we have
\[
O(\beta_i \comp \gamma_i \comp \eta_1) = {} 
O(\beta_i) \cup \rho_j(O(\alpha_j) - O(\alpha_i)) \cup
(O(\beta_2) - \rho_2(O(\alpha_2))).
\]

Composing further with $\eta_2$, which is defined similarly to
$\eta_1$, we can reason similarly and still state that
\[I(\alpha_i) \cup
(O(\alpha_j) - O(\alpha_i)) \subseteq I(\beta_i \comp \gamma_i
\comp \eta_1 \comp \eta_2) \subseteq I(\alpha). \]
For outputs, note that
$O(\eta_2)=O(\beta_1)-\rho_1(O(\alpha_1))$.  Uniting this to the
expression for $O(\beta_i \comp \gamma_i \comp \eta_1)$ above, we
obtain \[O(\beta_i \comp \gamma_i \comp \eta_1 \comp \eta_2) =
O(\beta_1) \cup O(\beta_2). \] Indeed, the only part of
$O(\beta_1) \cup O(\beta_2)$ that is not obviously there is
$\rho_j(O(\alpha_j)\cap O(\alpha_i))$.  However,
that part is contained in $\rho(O(\alpha_i))$, because
$\rho_j$ agrees with $\rho$ on $O(\alpha_1)\cap O(\alpha_2)$.
Since $\rho(O(\alpha_i)) \subseteq O(\beta_i)$, the part is
included after all.

With the above results we can reason exactly as in the
proof for union and obtain that $I(\beta)=I(\alpha)$ as desired.

\paragraph{Outputs} From the above we have
$O(\beta)=O(\beta_1)\cup O(\beta_2)$. Since $O(\beta_1) \supseteq
\rho_1(O(\alpha_1))$ by induction, and $\rho_1=\rho$ and
$O(\alpha)=O(\alpha_1)$, we obtain $O(\beta) \supseteq
\rho(O(\alpha))$ as desired.

\paragraph{io-disjointness} Let $i=1,2$. Expression
$\gamma_i$ is io-disjoint by the choice of $\rho_j$.  Then
$\beta_i\comp \gamma_i$ is io-disjoint because both $O(\beta_i)$ and
the image of $\rho_j$ are disjoint from $\var(\alpha)$.  Regarding
$\eta_1$, we have seen that either (a) $I(\eta_1) \subseteq
\rho(O(\alpha_1)\cap O(\alpha_2)) \subseteq \rho_2(O(\alpha_2))$,
or (b) $I(\eta_1) \subseteq I(\alpha)$. In case (a) $I(\eta_1)$
is clearly disjoint from $O(\eta_1)=O(\beta_2) -
\rho_2(O(\alpha_2))$. Also in case (b) $\eta_1$ is io-disjoint
because $O(\beta_2)$ is disjoint from $\var(\alpha)$.
Using similar reasoning, the expressions $\beta_i \comp \gamma_i
\comp \eta_1$, $\eta_2$, $\beta_i \comp \gamma_i \comp \eta_1
\comp \eta_2$, and finally $\beta$, are seen to be io-disjoint.

\paragraph{No clashes} Note that $\rho(O(\alpha)) =
\rho_1(O(\alpha_1))$, and recall that $O(\beta)=O(\beta_1) \cup
O(\beta_2)$.  Hence we can write $O(\beta) -
\rho(O(\alpha))$ as \[(O(\beta_1)-\rho_1(O(\alpha_1))) \cup
(O(\beta_2) - \rho_1(O(\alpha_1))). \]The first term is disjoint
from $W$ by construction and induction.  For the second term,
note that $O(\beta_2)$ can be written as a disjoint union
\[\rho_2(O(\alpha_2) \cap O(\alpha_1)) \cup
\rho_2(O(\alpha_2)-O(\alpha_1)) \cup (O(\beta_2) -
\rho_2(O(\alpha_2))). \]Again by construction and induction,
the second and third terms are disjoint from $W_2$, which
includes $O(\beta_1)$, which includes $\rho_1(O(\alpha_1))$.
On the other hand, the first term is included in
$\rho_1(O(\alpha_1))$ since $\rho_1$ and $\rho_2$ agree on
$O(\alpha_1)\cap O(\alpha_2)$.
Hence, $O(\beta_2)-\rho_1(O(\alpha_1))$ reduces to the union of
the second and third terms, which are disjoint from $W_2$, which
includes $W$, as desired.

\paragraph{Completeness}

Since $(\nu_1,\nu_2) \in \sem{\alpha_1-\alpha_2}$, in particular
$(\nu_1,\nu_2) \in \sem{\alpha_1}$, so by induction there exists
$\nu_3$ such that $(\nu_1,\nu_3) \in \sem{\beta_1}$ and
$\nu_3(\rho_1(y)) = \nu_2(y)$ for $y \in O(\alpha_1)$.

Recall the output variables of $\gamma_i$
and $\eta_i$ for $i=1,2$:
\begin{align*}
O(\gamma_1) &= \rho_2(O(\alpha_2) - O(\alpha_1)) \\
O(\gamma_2) &= \rho_1(O(\alpha_1) - O(\alpha_2)) \\
O(\eta_1) &= O(\beta_2) - \rho_2(O(\alpha_2)) \\
O(\eta_2) &= O(\beta_1) - \rho_1(O(\alpha_1))
\end{align*}
We observe:
\begin{enumerate}
\item\label{obs:item1}
\emph{None of the assignments in $\gamma_1$, $\eta_1$ or $\eta_2$
affect variables in $\rho_1(O(\alpha_1))$.}

This claim is clear for $\eta_2$.  For $\gamma_1$ it holds since
$\rho_2$ was chosen such that its image on
$O(\alpha_2)-O(\alpha_1)$ is disjoint from $W_2$,
which includes $O(\beta_1)$, which includes $\rho_1(O(\alpha_1))$.
For $\eta_1$ the claim
holds because, by induction, $O(\eta_1)$ is again disjoint from $W_2$. 
\item\label{obs:item2}
\emph{None of the assignments in $\gamma_2$, $\eta_1$ or $\eta_2$
affect variables in $\rho_2(O(\alpha_2))$.}

This claim is clear for $\eta_1$. Next consider $\gamma_2$.
On $O(\alpha_2) - O(\alpha_1))$, we just noted that the image of
$\rho_2$ is disjoint from $\rho_1(O(\alpha_1))$. Now let $y \in
O(\alpha_2) \cap O(\alpha_1)$.  Then $\rho_2(y)=\rho_1(y)$ and
clearly $\rho_1(y) \notin \rho_1(O(\alpha_1)-O(\alpha_2))$.
Finally, consider $\eta_2$.  On $O(\alpha_2)-O(\alpha_1)$, we
again use that the image of $\rho_2$ is disjoint from
$O(\beta_1)$.  On $O(\alpha_1)\cap O(\alpha_2)$, again the image
of $\rho_2$ agrees with the image of $\rho_1$, which clearly
is disjoint from $O(\eta_2)$.
\end{enumerate}

Now, using the notation introduced in the completeness proof for
union (Section~\ref{subunion}), let 
\[
\nu_3 \xrightarrow{\gamma_1} \nu_4
\xrightarrow{\eta_1} \nu_5 \xrightarrow{\eta_2} \nu_6
\]
so that $(\nu_1,\nu_6) \in \sem{\beta_1 \comp \gamma_1 \comp
\eta_1 \comp \eta_2}$.  By Observation~(\ref{obs:item1}), for $y
\in O(\alpha)=O(\alpha_1)$, we still have
$\nu_6(\rho_1(y))=\nu_3(\rho_1(y))=\nu_2(y)$.  Thus, completeness is
proved provided we can show that $(\nu_1,\nu_6) \notin
\sem{\beta_2 \comp \gamma_2 \comp \eta_1 \comp \eta_2}$.

For the sake of contradiction, assume $(\nu_1,\nu_6) \in
\sem{\beta_2 \comp \gamma_2 \comp \eta_1 \comp \eta_2}$.  By the
identity property (Proposition~\ref{ip}), also $(\nu_6,\nu_6) \in
\sem{\beta_2 \comp \gamma_2 \comp \eta_1 \comp \eta_2}$.  Hence,
there exists $\nu_7$ such that $(\nu_6,\nu_7) \in \sem{\beta_2}$
and $(\nu_7,\nu_6) \in \sem{\gamma_2 \comp \eta_1 \comp \eta_2}$.
By inertia, $\nu_6$ and $\nu_7$ can differ only on $O(\beta_2)$,
and among $\gamma_2$, $\eta_1$ and $\eta_2$, only $\eta_1$
can change variables in $O(\beta_2)$.  Hence we have 
\[
\nu_7 \xrightarrow{\gamma_2} \nu_7 \xrightarrow{\eta_1} 
\nu_6 \xrightarrow{\eta_2} \nu_6.
\]

Since $(\nu_6,\nu_7) \in \sem{\beta_2}$, by induction there
exists $\nu_8$ such that $(\nu_6,\nu_8) \in \sem{\alpha_2}$ and
$\nu_8(y)=\nu_7(\rho_2(y))$ for $y \in O(\alpha_2)$.  Recall that
$(\nu_1,\nu_6) \in \sem{\beta_1\comp \gamma_1 \comp \eta_1 \comp
\eta_2}$, so $\nu_1$ and $\nu_6$ agree outside $O(\beta_1) \cup
O(\beta_2)$, which is disjoint from $\var(\alpha)$ which includes
$I(\alpha_2)$.  Hence we can apply input-output determinacy, yielding
a valuation $\nu$ such that $(\nu_1,\nu) \in \sem{\alpha_2}$ and
$\nu$ agrees with $\nu_8$ on $O(\alpha_2)$.  If we can show that
$\nu = \nu_2$ we have arrived at a contradiction, since
$(\nu_1,\nu_2) \notin \sem{\alpha_2}$.

By inertia, $\nu$ and $\nu_1$ agree outside $O(\alpha_2)$, and
$\nu_1$ and $\nu_2$ agree outside $O(\alpha_1)$.  Thus 
$\nu$ and $\nu_2$ already agree outside
$O(\alpha_1) \cup O(\alpha_2)$ and we can focus on
that set of variables.  We distinguish three cases.

First, let $y \in O(\alpha_1) \cap O(\alpha_2)$.  Note that
$\rho(y)=\rho_1(y)=\rho_2(y)$.  By definition of $\nu$ and
$\nu_8$, we have $\nu(y)=\nu_8(y)=\nu_7(\rho(y))$.  Since 
\begin{equation}
\nu_3 \xrightarrow{\gamma_1} \nu_4 \xrightarrow{\eta_1} \nu_5
\xrightarrow{\eta_2} \nu_6 \xleftarrow{\eta_1} \nu_7, \label{eq:star}    
\end{equation}
by Observation~(\ref{obs:item1}), we have $\nu_7(\rho(y))=\nu_3(\rho(y))$.
The latter indeed equals $\nu_2(y)$, by definition of $\nu_3$.

Second, let $y \in O(\alpha_2)-O(\alpha_1)$.  As before we have
$\nu(y)=\nu_7(\rho_2(y))$.  By~(\ref{eq:star}) and Observation~(\ref{obs:item2}),
$\nu_7(\rho_2(y))=\nu_4(\rho_2(y))$.  The latter equals
$\nu_3(y)$ since $\nu_3 \xrightarrow{\gamma_1} \nu_4$.  Now since
$(\nu_1,\nu_3) \in \sem{\beta_1}$ and $(\nu_1,\nu_2) \in
\sem{\alpha_1}$ and $y$ is neither in $O(\beta_1)$ nor in
$O(\alpha_1)$, we get $\nu_3(y)=\nu_1(y)=\nu_2(y)$.

Third, let $y \in O(\alpha_1)-O(\alpha_2)$.
Since $(\nu_1,\nu) \in \sem{\alpha_2}$ and $y \notin
O(\alpha_2)$, by inertia $\nu(y)=\nu_1(y)$.  Likewise, since \[\nu_1
\xrightarrow{\beta_1 \comp \gamma_1 \comp \eta_1 \comp \eta_2}
\nu_6 \xrightarrow{\beta_2} \nu_7 \]and $y \notin O(\beta_1) \cup
O(\beta_2)$, we get $\nu_1(y)=\nu_6(y)=\nu_7(y)$.  Since $\nu_7
\xrightarrow{\gamma_2} \nu_7$ we have
$\nu_7(y)=\nu_7(\rho_1(y))$.  In the first case we already noted
that $\nu_7(\rho_1(y))=\nu_3(\rho_1(y))$.
Now the latter equals $\nu_2(y)$ by definition of $\nu_3$, and we
are done.

\paragraph{Soundness} Since $(\nu_1,\nu_2) \in \sem \beta$,
we have $(\nu_1,\nu_2) \in
\sem{\beta_1 \comp \gamma_1 \comp \eta_1 \comp \eta_2}$.
By the identity property, also
$(\nu_2,\nu_2) \in 
\sem{\beta_1 \comp \gamma_1 \comp \eta_1 \comp \eta_2}$.
Hence there exists $\nu_3$ such that $(\nu_2,\nu_3) \in
\sem{\beta_1}$ and $(\nu_3,\nu_2) \in \sem{\gamma_1 \comp \eta_1
\comp \eta_2}$.  By inertia, $\nu_2$ and $\nu_3$ can differ only
on $O(\beta_1)$, and among $\gamma_1$, $\eta_1$ and $\eta_2$,
only $\eta_2$ can change variables in $O(\beta_1)$.  Hence we
have 
\[
\nu_3 \xrightarrow{\gamma_1} \nu_3 \xrightarrow{\eta_1}
\nu_3 \xrightarrow{\eta_2} \nu_2.
\]

Since $(\nu_2,\nu_3) \in \sem{\beta_1}$, by induction there
exists $\nu_4$ such that $(\nu_2,\nu_4) \in \sem{\alpha_1}$ and
$\nu_4(y)=\nu_3(\rho_1(y))$ for $y \in O(\alpha_1)$.  Note that
$\nu_1$ and $\nu_2$ agree outside $O(\beta)$ which is disjoint
from $I(\alpha_1)$.  Hence we can apply input-output determinacy,
yielding a valuation $\nu$ such that $(\nu_1,\nu) \in
\sem{\alpha_1}$ and $\nu$ agrees with $\nu_4$ on $O(\alpha_1)$.
Our goal is to show that $(\nu_1,\nu) \notin \sem{\alpha_2}$.

For the sake of contradiction, assume $(\nu_1,\nu) \in
\sem{\alpha_2}$.  Then by induction, there exists $\nu_5$ such
that $(\nu_1,\nu_5) \in \sem{\beta_2}$ and $\nu_5(\rho_2(y)) =
\nu(y)$ for $y \in O(\alpha_2)$.  Let 
\begin{equation}
\nu_5 \xrightarrow{\gamma_2} \nu_6 \xrightarrow{\eta_1} \nu_7
\xrightarrow{\eta_2} \nu_8 \label{eq:dag}
\end{equation}
so that $(\nu_1,\nu_8)
\in \sem{\beta_2 \comp \gamma_2 \comp \eta_1 \comp \eta_2}$.  If
we can show that $\nu_8=\nu_2$, we have arrived at the desired
contradiction since $(\nu_1,\nu_2) \notin
\sem{\beta_2 \comp \gamma_2 \comp \eta_1 \comp \eta_2}$.

By inertia, $\nu_8$ and $\nu_1$, and $\nu_1$ and $\nu_2$,
agree outside $O(\beta_1) \cup O(\beta_2)$.  Thus $\nu_8$ and
$\nu_2$ already agree outside $O(\beta_1) \cup O(\beta_2)$ and we
can focus on that set of variables.  Note that $O(\beta_1)$ contains
$\rho_1(O(\alpha_1))$ and $O(\beta_2)$ contains
$\rho_2(O(\alpha_2))$.  Accordingly, we distinguish five cases.
\begin{enumerate}
\item\label{item:case1} $\rho(O(\alpha_1) \cap O(\alpha_2))$.
Let $y \in O(\alpha_1) \cap O(\alpha_2)$.  By~(\ref{eq:dag}) and
Observation~(\ref{obs:item2}), $\nu_8(\rho(y)) = \nu_5(\rho(y))$.  
By definition of $\nu_5$, $\nu$ and $\nu_4$ respectively,
$\nu_5(\rho(y))=\nu(y)=\nu_4(y)=\nu_3(\rho(y))$.  The latter 
equals $\nu_2(\rho(y))$ since $\nu_3 \xrightarrow{\eta_2} \nu_2$.

\item\label{item:case2} $\rho_2(O(\alpha_2)-O(\alpha_1))$.
Let $y \in O(\alpha_2)-O(\alpha_1)$.  As in case~(\ref{item:case1}),
$\nu_8(\rho_2(y))=\nu(y)$.  Since \[\nu \xleftarrow{\alpha_1}
\nu_1 \xrightarrow{\beta_1 \comp \gamma_1 \comp \eta_1 \comp
\eta_2} \nu_2 \xrightarrow{\beta_1} \nu_3 \]and $y \notin
O(\alpha_1) \cup O(\beta_1) \cup O(\beta_2)$, we have
$\nu(y)=\nu_3(y)$.  The latter equals $\nu_3(\rho_2(y))$ by 
$\nu_3 \xrightarrow{\gamma_1} \nu_3$.
Now by $\nu_3 \xrightarrow{\eta_2} \nu_2$ and 
Observation~(\ref{obs:item2}) we
get $\nu_3(\rho_2(y))=\nu_2(\rho_2(y))$.

\item\label{item:case3} $\rho_1(O(\alpha_1)-O(\alpha_2))$. 
Let $y \in O(\alpha_1)-O(\alpha_2)$.  By~(\ref{eq:dag}) and 
Observation~(\ref{obs:item1}), $\nu_8(\rho_1(y))=\nu_6(\rho_1(y))$.  
The latter equals $\nu_5(y)$ by $\nu_5 \xrightarrow{\gamma_2} \nu_6$. 
Since 
\[ \nu_5 \xleftarrow{\beta_2} \nu_1 \xrightarrow{\alpha_2} \nu \]
and $y \notin O(\beta_2) \cup O(\alpha_2)$, we get
$\nu_5(y) = \nu(y)$.  The latter equals $\nu_2(\rho_1(y))$ as in
case~(\ref{item:case1}).

\item\label{item:case4} $O(\beta_2) - \rho_2(O(\alpha_2))$.
Let $y \in O(\beta_2) - \rho_2(O(\alpha_2))$.
We distinguish two further cases following the definition of
$\eta_1$, which involves the choice of a variable $z$.
\begin{enumerate}
\item $z = \rho(x)$ for some $x \in O(\alpha_1) \cap
O(\alpha_2)$.
Since $\nu_7 \xrightarrow{\eta_2} \nu_8$, we have
$\nu_8(y) = \nu_7(\rho(x))$. The latter equals $\nu_3(\rho(x))$
as in case~(\ref{item:case1}). Now $\nu_3 \xrightarrow{\eta_2} \nu_2$ 
yields $\nu_3(\rho(x))=\nu_2(y)$.
\item In this case $z \in I(\alpha)$ (see the Inputs part 
of this proof). 
Since $\nu_7 \xrightarrow{\eta_2} \nu_8$, we have
$\nu_8(y)=\nu_7(z)$.  Since 
\[
\nu_2 \xleftarrow{\beta_1 \comp
\gamma_1 \comp \eta_1 \comp \eta_2} \nu_1 \xrightarrow{\beta_2
\comp \gamma_2 \comp \eta_1} \nu_7 \]
and $z$, being in $I(\alpha)$, is not an output variable 
of the involved expressions, we have $\nu_7(z)=\nu_2(z)$.  
Since $\nu_3 \xrightarrow{\eta_2} \nu_2$ with $\eta_2$ not 
touching $z$, we obtain $\nu_2(z)=\nu_3(z)=\nu_2(y)$.
\end{enumerate}

\item\label{item:case5} $O(\beta_1) - \rho_1(O(\alpha_1))$. 
This case is symmetrical to the previous one.
\end{enumerate}

The above five cases confirm $\nu_8=\nu_2$ which gives the
contradiction, showing $(\nu_1,\nu) \notin \sem{\alpha_2}$ 
whence $(\nu_1,\nu) \in \sem\alpha$.  In case~(\ref{item:case1}) 
and case~(\ref{item:case3}) we already observed that 
$\nu(y)=\nu_2(\rho_1(y))$ for $y \in O(\alpha_1)=O(\alpha)$.  
Thus, soundness is proved.

\subsection{From io-disjoint FLIF to Executable FO}
\label{proofflio2exfo}

In this section we prove Theorem~\ref{flio2exfo}.
Recall the translation given in Table~\ref{tabtrans}.
In order to prove Theorem~\ref{flio2exfo}, using Lemma~\ref{ip},
it suffices to prove the following:

\begin{clm}
For each $\alpha$, the formula $\varphi_\alpha$ is
$I(\alpha)$-executable and $\FV(\varphi_\alpha)=\var(\alpha)$.
Moreover, for any instance $D$ and any
valuation $\nu$, we have 
\[(\nu,\nu) \in \sem \alpha \iff
\satD{\varphi}{\nu}.\]
\end{clm}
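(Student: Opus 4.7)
The proof will be by structural induction on $\alpha$, establishing all three requirements at once: $I(\alpha)$-executability, $\FV(\varphi_\alpha)=\var(\alpha)$, and the equivalence $(\nu,\nu)\in\sem\alpha\iff\satD{\varphi_\alpha}\nu$. The five atomic cases are immediate unpackings of the definitions; for instance, $(\nu,\nu)\in\sem{(x:=y)}$ amounts to $\nu(x)=\nu(y)$, matching $\satD{x=y}\nu$, while the $I(\alpha)$-executability of each atomic translation is immediate from the FO atom rules together with the definitions of $I$ on atomic FLIF expressions.

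For the boolean cases, Proposition~\ref{checkio} supplies the extra structural constraints imposed by io-disjointness. For $\alpha_1\cup\alpha_2$, io-disjointness forces $O(\alpha_1)=O(\alpha_2)$, so writing $\var(\alpha_i)=I(\alpha_i)\cup O(\alpha_i)$ with $I(\alpha_i)$ disjoint from $O(\alpha_i)$ yields $\FV(\varphi_1)\symdif\FV(\varphi_2)=I(\alpha_1)\symdif I(\alpha_2)\subseteq I(\alpha)$, which is the condition the disjunction rule demands. For $\alpha_1\setminus\alpha_2$, a direct computation from the formulas for $I$ and $O$ gives $\var(\alpha_2)\subseteq I(\alpha)\cup\var(\alpha_1)=I(\alpha)\cup\FV(\varphi_1)$, so $\lnot\varphi_2$ is $(I(\alpha)\cup\FV(\varphi_1))$-executable. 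The semantic equivalences in both cases fall out directly from the pointwise correspondences between $\lor$ and $\cup$, and between $\land\lnot$ and $\setminus$.

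The composition case is the only one requiring real work. Let $K:=O(\alpha_1)\cap O(\alpha_2)=\{x_1,\ldots,x_k\}$. For executability, io-disjointness of $\alpha_1$ makes $K$ disjoint from $I(\alpha_1)$, so $\exists x_1\ldots\exists x_k\,\varphi_1$ remains $I(\alpha_1)$-executable and hence $I(\alpha)$-executable. For the outer conjunct, the only part of $I(\alpha_2)$ not already in $I(\alpha)$ lies in $I(\alpha_2)\cap O(\alpha_1)$; io-disjointness of $\alpha_2$ makes this set disjoint from $K$, hence contained in $\FV(\varphi_1)\setminus K$, so $\varphi_2$ is $(I(\alpha)\cup(\FV(\varphi_1)\setminus K))$-executable. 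The free-variable equation $\FV(\varphi_\alpha)=\var(\alpha)$ then follows because $K\subseteq O(\alpha_2)\subseteq\var(\alpha_2)$ replenishes the variables removed from $\FV(\varphi_1)$ by quantification.

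The main obstacle, and the payoff, is the semantic equivalence for composition. The forward direction uses the identity property (Proposition~\ref{ip}): if $(\nu,\nu)\in\sem{\alpha_1\comp\alpha_2}$ with intermediate witness $\nu'$, then two applications of inertia give $\nu'=\nu$ outside $K$, and Proposition~\ref{ip} upgrades $(\nu,\nu')$ and $(\nu',\nu)$ to $(\nu',\nu')\in\sem{\alpha_1}$ and $(\nu,\nu)\in\sem{\alpha_2}$, to which the induction hypothesis applies. Conversely, given $\nu_*$ with $\nu_*=\nu$ outside $K$, $(\nu_*,\nu_*)\in\sem{\alpha_1}$, and $(\nu,\nu)\in\sem{\alpha_2}$, I will apply input--output determinacy to the first, using $\nu=\nu_*$ on $I(\alpha_1)$ (since $I(\alpha_1)$ is disjoint from $K\subseteq O(\alpha_1)$), to produce $\nu'$ with $(\nu,\nu')\in\sem{\alpha_1}$ and $\nu'=\nu_*$ on $O(\alpha_1)$. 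It then remains to derive $(\nu',\nu)\in\sem{\alpha_2}$; by Lemma~\ref{altidlemma} this reduces to checking that $\nu'=\nu$ on $I(\alpha_2)$ and outside $O(\alpha_2)$. A short case split on membership in $O(\alpha_1)$, using inertia to handle variables outside $O(\alpha_1)$ and io-disjointness of $\alpha_2$ to keep $I(\alpha_2)\cap O(\alpha_1)$ clear of $K$, yields both conditions and closes the induction.
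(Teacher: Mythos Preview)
Your proposal is correct and follows essentially the same route as the paper's proof: structural induction, with the atomic and boolean cases handled by unpacking definitions plus Proposition~\ref{checkio}, and the composition case carried by inertia, the identity property (Proposition~\ref{ip}), and input--output determinacy. The only cosmetic difference is in the soundness direction for composition: the paper applies Lemma~\ref{altidlemma} twice (once to go from $(\nu_*,\nu_*)\in\sem{\alpha_1}$ to $(\nu,\nu_*)\in\sem{\alpha_1}$, and once to go from $(\nu,\nu)\in\sem{\alpha_2}$ to $(\nu_*,\nu)\in\sem{\alpha_2}$), whereas you first invoke Proposition~\ref{propindet} to produce a $\nu'$ and then Lemma~\ref{altidlemma}; by inertia your $\nu'$ coincides with $\nu_*$, so the arguments are interchangeable.
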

\begin{proof}
By structural induction.  The implication from left to right is
referred to as \emph{completeness}, and the other implication as
\emph{soundness}.

\paragraph{Atomic expressions}
If $\alpha$ is $R(\bar x;\bar y)$, only soundness is not
immediate.  If $\satD{\varphi}{\nu}$, then $\nu(\bar{x}) \conc
\nu(\bar{y}) \in \inst(R)$. Hence, $(\nu, \nu) \in \sem{\alpha}$,
since two identical valuations agree trivially outside
$\Out(\alpha)$.
The cases where $\alpha$ is of the form $(x=y)$, $(x:=y)$,
$(x=c)$, or $(x:=c)$, are immediate.

Next, we verify the inductive cases.  In each step of the induction, 
we refer to $\varphi_\alpha$ simply as $\varphi$.

\paragraph{Composition}
Consider $\alpha$ of the form $\alpha_1 \comp \alpha_2$.
We begin by checking that $\varphi_\alpha$ is
$I(\alpha)$-executable.   By Proposition~\ref{checkio},
$I(\alpha_1)$ is disjoint from both $O(\alpha_1)$ and
$O(\alpha_2)$.

Let $\varphi'_1 = \exists_{\Out(\alpha_1) \cap \Out(\alpha_2)} \varphi_1$. Then 
$\FV(\varphi'_1) = \In(\alpha_1) \cup (\Out(\alpha_1) \setminus \Out(\alpha_2))$. 
Indeed,  
		\begin{align*}
		\FV(\varphi'_1) &  = \FV(\varphi_1) \setminus (\Out(\alpha_1) \cap \Out(\alpha_2)) \\ 
		& = \var(\alpha_1) \setminus (\Out(\alpha_1) \cap \Out(\alpha_2)) \\ 
		& = (\In(\alpha_1) \cup \Out(\alpha_1)) \setminus
		(\Out(\alpha_1) \cap \Out(\alpha_2))  \\
		& = \In(\alpha_1) \cup (\Out(\alpha_1) \setminus \Out(\alpha_2))
		\end{align*}
By induction, $\varphi_1$ is $\In(\alpha_1)$-executable and
$\varphi_2$ is $\In(\alpha_2)$-executable. Let $\VarSet =
\In(\alpha) = \In(\alpha_1) \cup (\In(\alpha_2)
\setminus \Out(\alpha_1))$. For $\varphi$ to be \vex{}, it must
be the case that:
\begin{itemize}
\item $\varphi'_1$ is \vex{}, which means that $\varphi_1$ should be 
\vex{\setminus (\Out(\alpha_1) \cap \Out(\alpha_2))}.
Since $\In(\alpha_1) \cap \Out(\alpha_1) = \emptyset$, we have 
$\In(\alpha_1) \cap (\Out(\alpha_1) \cap \Out(\alpha_2)) = \emptyset$.
This shows that 
$\In(\alpha_1) \subseteq$ $\VarSet \setminus (\Out(\alpha_1) \cap \Out(\alpha_2))$. 
Consequently, $\varphi'_1$ is \vex{}.
			
\item $\varphi_2$ is \vex{\cup \FV(\varphi'_1)}, which means that
$\varphi_2$ should be \vex{\cup (\In(\alpha_1) \cup
(\Out(\alpha_1) \setminus \Out(\alpha_2)) )}.
We know that 
\begin{align*}
\VarSet \cup \FV(\varphi_1') 
& = \VarSet \cup (\In(\alpha_1) \cup (\Out(\alpha_1) \setminus \Out(\alpha_2)) ) \\
& = \In(\alpha_1) \cup (\In(\alpha_2) \setminus \Out(\alpha_1)) \cup (\Out(\alpha_1) \setminus \Out(\alpha_2)) \\
& = \In(\alpha_1) \cup ((\In(\alpha_2) \cup \Out(\alpha_1) )\setminus \Out(\alpha_2))
\end{align*}
Since $\disj{\In(\alpha_2)}{\Out(\alpha_2)}$, we have $\In(\alpha_2) \subseteq \VarSet \cup \FV(\varphi'_1)$. Hence, $\varphi_2$ is \vex{\cup \FV(\varphi'_1)}.
\end{itemize}

We next prove completeness.
To this end, assume that
$(\nu, \nu) \in \sem{\alpha}$. Then there
exists a valuation $\nu'$ such that
	\begin{enumerate}
		\item\label{it:citem1} $(\nu, \nu') \in \sem{\alpha_1}$;
		\item\label{it:citem2} $(\nu', \nu) \in \sem{\alpha_2}$.
	\end{enumerate}
Since $\In(\alpha_2) \cap \Out(\alpha_2) = \emptyset$,
Lemma~\ref{ip} implies
$(\nu, \nu) \in
\sem{\alpha_2}$. Thus, by induction,
$\satD{\varphi_2}{\nu}$. Similarly from~(\ref{it:citem1}), we know that
$\satD{\varphi_1}{\nu'}$. Consequently,
\begin{equation}
\satD{\varphi_1'}{\nu'} \label{eq:c1}
\end{equation}
Additionally, we know from~(\ref{it:citem1}) and~(\ref{it:citem2}) and the law of inertia
$\nu = \nu'$ outside ${\Out(\alpha_1)}$ and outside
${\Out(\alpha_2)}$. Hence, 
\begin{equation}
\nu = \nu' \text{ outside } {\Out(\alpha_1) \cap \Out(\alpha_2)} \label{eq:c2}
\end{equation}
From~(\ref{eq:c1}) and~(\ref{eq:c2}), we obtain $\satD{\varphi_1'}{\nu}$, 
whence $\satD{\varphi}{\nu}$.
	
To show soundness, assume $\satD{\varphi}{\nu}$.  Then
\begin{enumerate}
\item\label{it:citem3} $\satD{\varphi_1'}{\nu}$, which means that there exists $\nu'$ such that
\begin{enumerate}[label=(\roman*)]
\item\label{it:citem31} $\nu' = \nu$ outside
${\Out(\alpha_1) \cap \Out(\alpha_2)}$;
\item\label{it:citem32} $\satD{\varphi_1}{\nu'}$.
\end{enumerate}
\item\label{it:citem4} $\satD{\varphi_2}{\nu}$.
\end{enumerate}
By induction from~\ref{it:citem32}, we know that $(\nu', \nu') \in
\sem{\alpha_1}$. Since $\disj{\In(\alpha_1)}{\Out(\alpha_1)}$,
we know
from~\ref{it:citem31} that $\nu$ agrees with $\nu'$ on $\In(\alpha_1)$ and
outside
${\Out(\alpha_1)}$. Hence, we know by Lemma~\ref{altidlemma}
that $(\nu, \nu') \in \sem{\alpha_1}$. 
Similarly, from~(\ref{it:citem4}), $(\nu', \nu) \in \sem{\alpha_2}$. 
Consequently, $(\nu, \nu) \in \sem{\alpha}$.

\intersection{
\paragraph{Intersection}
For the remaining cases we only check
$I(\alpha)$-executability, as soundness and
completeness are immediate.

By induction, we know that $\varphi_1$ is
$\In(\alpha_1)$-executable and
$\varphi_2$ is $\In(\alpha_2)$-executable. Let $\VarSet=
\In(\alpha) = \In(\alpha_1) \cup \In(\alpha_2)
\cup (\Out(\alpha_1) \symdif \Out(\alpha_2))$. For $\varphi$ to
be \vex{}, we need $\varphi_1$ to be \vex{} and
$\varphi_2$ to be \vex{\cup \FV(\varphi_1)}, which both hold
since $\In(\alpha_i) \subseteq \VarSet$ for $i \in \{1,2\}$.
}

\paragraph{Difference}
By induction, we know that $\varphi_1$ is
$\In(\alpha_1)$-executable and
$\varphi_2$ is $\In(\alpha_2)$-executable. Let $\VarSet=
\In(\alpha)=\In(\alpha_1) \cup \In(\alpha_2)
\cup ( \Out(\alpha_1) \symdif \Out(\alpha_2))$.  By 
Proposition~\ref{checkio}, we have
$\Out(\alpha_1) \subseteq \Out(\alpha_2)$, so
$\VarSet=\In(\alpha)=\In(\alpha_1) \cup \In(\alpha_2) \cup
(O(\alpha_2)-O(\alpha_1))$.

For $\varphi$ to
be \vex{}, we must verify the following:
\begin{itemize}
\item $\varphi_1$ is \vex{} and $\varphi_2$ is \vex{\cup \FV(\varphi_1)}.
This holds since $\In(\alpha_i) \subseteq \VarSet$ for $i \in \{1,2\}$.

\item $\FV(\varphi_2) \subseteq \VarSet \cup \FV(\varphi_1)$.
We verify this as follows.
\begin{align*}
\FV(\varphi_2) & = I(\alpha_2) \cup O(\alpha_2) \\
& \subseteq I(\alpha_1) \cup I(\alpha_2) \cup
(O(\alpha_2)-O(\alpha_1)) \cup O(\alpha_1) \\
& = \VarSet \cup I(\alpha_1) \cup O(\alpha_1)
= \VarSet \cup \FV(\varphi_1).
\end{align*}
\end{itemize}

\paragraph{Union}

By induction, we know that $\varphi_1$ is
$\In(\alpha_1)$-executable and
$\varphi_2$ is $\In(\alpha_2)$-executable. Let $\VarSet=
\In(\alpha) = \In(\alpha_1) \cup \In(\alpha_2)
\cup (\Out(\alpha_1) \symdif \Out(\alpha_2))$.  By
Proposition~\ref{checkio} we have $O(\alpha_1)=O(\alpha_2)$,
so $\VarSet= \In(\alpha_1) \cup \In(\alpha_2)$.

For $\varphi$ to
be \vex{}, we must verify the following:
\begin{itemize}
\item $\varphi_1$ is \vex{} and $\varphi_2$ is \vex{}. This holds
since $\In(\alpha_i) \subseteq \VarSet$ for $i \in \{1,2\}$.
\item $\FV(\varphi_1) \symdif \FV(\varphi_2) \subseteq \VarSet$.
We verify this as follows.
Since $\Out(\alpha_1) = \Out(\alpha_2)$ and
$\In(\alpha_i) \cap \Out(\alpha_i) = \emptyset$ for $i =
1,2$, we can reason as follows: (we use $O$ to abbreviate
$O(\alpha_1)$)
\begin{align*}
\FV(\varphi_1) \symdif \FV(\varphi_2) &=
(\In(\alpha_1) \cup O) \symdif
(\In(\alpha_2) \cup O) 
=\In(\alpha_1) \symdif \In(\alpha_2) \\
&\subseteq I(\alpha_1) \cup \In(\alpha_2)
=\VarSet.\qedhere
\end{align*}
\end{itemize}
\end{proof}

\section{Relational algebra plans for io-disjoint FLIF}
\label{secplan}

In this section we show how the evaluation problem for $\flio$
expressions can be solved in a very direct manner, using a
translation into a particularly simple form of relational algebra
plans.

We generalize the evaluation problem so that it can take a set of
valuations as input, rather than just a single valuation.
Formally, for an $\flio$ expression $\alpha$ over a database schema 
$\Sch$, an instance $D$ of $\Sch$, and a set $N$ of valuations on
$I(\alpha)$, we want to compute 
\[\EvalL \alpha D N :=
\bigcup \{\EvalL \alpha D \nuin \mid \nuin \in N\}.\]

Viewing variables as attributes, we can view a set of
valuations on a finite set of variables $Z$, like the set $N$
above, as a relation with relation schema $Z$.  Consequently, it
is convenient to use the named perspective of the relational
algebra \cite{ahv_book}, where every expression has an output
relation schema (a finite set of attributes; variables in our
case).  We briefly review the well-known operators of the
relational algebra and their behavior on the relation schema
level:
\begin{itemize}
\item
Union and difference are allowed only on relations with the same
relation schema.
\item
Natural join ($\bowtie$) can be applied on two relations with relation
schemas $Z_1$ and $Z_2$, and produces a relation with relation
schema $Z_1 \cup Z_2$.
\item
Projection ($\pi$) produces a relation with a relation schema that is
a subset of the input relation schema.
\item
Selection ($\sigma$) does not change the schema.
\item
Renaming will not be needed.  Instead, however, to accommodate
the assignment expressions present in $\flio$, we will need the
generalized projection operator that adds a new
attribute with the same value as an existing attribute, or a
constant.  Let $N$ be a relation with relation schema $Z$, let
$y\in Z$, and let $x$ be a variable not in $Z$. Then
\begin{align*}
& \pi_{Z,x:=y}(N) = \{\nu[x:=\nu(y)] \mid \nu \in N\} \\
& \pi_{Z,x:=c}(N) = \{\nu[x:=c] \mid \nu \in N\}
\end{align*}

\end{itemize}

Plans are based on \emph{access methods}, which have the
following syntax and semantics.  Let $R(\bar x;\bar y)$ 
be an atomic $\flio$-expression. Let $X$ be the set of variables
in $\bar x$ and let $Y$ be the set of variables in $\bar y$ (in
particular, $X$ and $Y$ are disjoint).
Let $N$ be a relation with a relation schema $Z$ that contains
$X$ but is disjoint from $Y$.
Let $D$ be a database instance.  We define the result of the
\emph{access join} of $N$ with $R(\bar x;\bar y)$,
evaluated on $D$, to be the
following relation with relation schema $Z \cup Y$:
\[N \access R(\bar x;\bar y) := 
\{\text{$\nu$ valuation on $Z \cup Y$}
\mid \nu|_Z \in N \text{ and } \nu(\bar x) \conc
\nu(\bar y) \in D(R)\} \]
This result relation can clearly be computed respecting the limited access
pattern on $R$.  Indeed, we iterate through the valuations in $N$, feed
their $X$-values to the source $R$, and extend the valuations
with the obtained $Y$-values.

Formally, over any database schema $\Sch$ and
for any finite set of variables $I$, we define a
\emph{plan over $\Sch$ with input variables $I$}
as an expression that can be built up as follows:
\begin{itemize}
\item
The special relation name $\MyIn$, with relation schema $I$, is a plan.
\item
If $R(\bar x;\bar y)$ is an atomic $\flio$ 
expression over $\Sch$, with sets of variables $X$ and $Y$ as
above, and $E$ is a plan with output relation schema $Z$ as above, then
also $E \access R(\bar x;\bar y)$ is a
plan, with output relation schema $Z \cup Y$.
\item
Plans are closed under union, difference, natural join, and
projection.
\end{itemize}

Given a database instance $D$, a set $N$ of valuations on $I$,
and a plan $E$ with input variables $I$, we can instantiate the
relation name $\MyIn$ by $N$ and evaluate $E$ on $(D,N)$ in the
obvious manner.  We denote the result by $E(D,N)$.

We establish:

\begin{thm} \label{theorplan}
For every $\flio$ expression $\alpha$ over a database schema
$\Sch$ there exists a plan
$E_\alpha$ over $\Sch$ with input variables $I(\alpha)$,
such that $\EvalL \alpha D N = E_\alpha(D,N)$, for every instance
$D$ of $\Sch$ and set $N$ of valuations on $I(\alpha)$.
\end{thm}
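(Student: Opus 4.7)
The plan is to proceed by structural induction on $\alpha$, arranging that $E_\alpha$ has output relation schema exactly $\var(\alpha) = I(\alpha) \cup O(\alpha)$. Correctness will rest on the characterizations of io-disjoint expressions established earlier---the inertia property (Proposition~\ref{propinertia}), input--output determinacy in its alternative form (Lemma~\ref{altidlemma}), the identity property (Proposition~\ref{ip}), and the simple syntactic criterion for io-disjointness propagating to subexpressions (Proposition~\ref{checkio}).

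For the base cases, the translations are immediate. The atom $R(\bar x;\bar y)$ becomes $\MyIn \access R(\bar x;\bar y)$; the tests $(x=y)$ and $(x=c)$ become selections $\sigma_{x=y}(\MyIn)$ and $\sigma_{x=c}(\MyIn)$; and the assignments $(x:=y)$ and $(x:=c)$ become generalized projections $\pi_{y,\, x:=y}(\MyIn)$ and $\pi_{x:=c}(\MyIn)$ respectively. In each case the output schema clearly matches $\var(\alpha)$. For the composition $\alpha_1 \comp \alpha_2$, I would first evaluate $E_{\alpha_1}$ on $\pi_{I(\alpha_1)}(\MyIn)$ to obtain a relation $M_1$ with schema $\var(\alpha_1)$, natural-join with $\MyIn$ to obtain $M_1'$ with schema $\var(\alpha_1) \cup I(\alpha)$, project onto $I(\alpha_2)$ (which is contained in that schema thanks to the definition of $I(\alpha)$), feed the result into $E_{\alpha_2}$ to obtain $M_2$ with schema $\var(\alpha_2)$, and finally form $M_1' \bowtie M_2$. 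The io-disjoint condition $I(\alpha_1) \cap O(\alpha_2) = \emptyset$ from Proposition~\ref{checkio} is precisely what keeps the natural joins well-behaved.

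For the set operators, the crucial observation is that io-disjointness collapses the schemas so that relational union and difference apply directly. For union we have $O(\alpha_1) = O(\alpha_2)$, and for difference $O(\alpha_1) \subseteq O(\alpha_2)$. In both cases, evaluating $E_{\alpha_i}$ on $\pi_{I(\alpha_i)}(\MyIn)$ and then natural-joining the result with $\MyIn$ produces a relation whose schema is exactly $\var(\alpha)$---a short calculation with the clauses in Table~\ref{tabio} confirms that $\var(\alpha_1) \cup I(\alpha) = \var(\alpha_2) \cup I(\alpha) = \var(\alpha)$ in both cases. The final plan is then the union or difference of these two expanded relations.

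The main obstacle will be the correctness argument for difference. Set difference in FLIF is defined on pairs of full $\SVars$-valuations, whereas the plan manipulates only restrictions to $\var(\alpha)$. To reconcile the two views I would invoke the free variable property (Proposition~\ref{propfv}) to argue that membership of an extended pair in $\sem{\alpha_2}$ depends only on the $\var(\alpha_2)$-restriction, and then combine this with Lemma~\ref{altidlemma} and inertia to show that a plan row gets subtracted precisely when \emph{every} extension of the input valuation yields a pair in $\sem{\alpha_2}$. A lighter application of the same machinery will justify the composition case, where intermediate valuations produced by independent runs of $E_{\alpha_1}$ and $E_{\alpha_2}$ must be glued into a single witness for $\sem{\alpha_1 \comp \alpha_2}$.
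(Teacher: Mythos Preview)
Your treatment of the composition case has a genuine gap. The final join $M_1' \bowtie M_2$ joins on all common attributes, and these include not just $I(\alpha_2)$ but also $O(\alpha_1)\cap O(\alpha_2)$: nothing in io-disjointness forces that intersection to be empty (the condition $I(\alpha_1)\cap O(\alpha_2)=\emptyset$ you cite is orthogonal). For a variable $u\in O(\alpha_1)\cap O(\alpha_2)$, the column $u$ in $M_1'$ carries $\alpha_1$'s output value while in $M_2$ it carries $\alpha_2$'s output value, and your join forces them to coincide---whereas the semantics of composition has $\alpha_2$ overwrite $\alpha_1$. Concretely, take $\alpha = R(x;u)\comp S(x;u)$: it is io-disjoint, and on input $x=a$ it should return every $u$ with $(a,u)\in S$ provided some $(a,u')\in R$ exists; your plan returns only those $u$ with $(a,u)\in R\cap S$. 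The fix is easy---project $M_1'$ onto $\var(\alpha_1)\cup I(\alpha)\setminus(O(\alpha_1)\cap O(\alpha_2))$ before the final join---but your proposal as written does not do it, and your justification sentence suggests you had not noticed the issue.

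Apart from this, your route differs from the paper's mainly in the induction hypothesis. You keep the output schema of $E_\alpha$ exactly $\var(\alpha)$ and therefore must project down to $I(\alpha_i)$ before each recursive call and re-join with $\MyIn$ afterward. The paper instead strengthens the hypothesis to allow the input relation schema $Z$ to be any superset of $I(\alpha)$ disjoint from $O(\alpha)$; the plan then carries the extra columns through, so composition becomes simply a substitution $E_{\alpha_2}(\pi_{Z-O(\alpha_2)}(E_{\alpha_1}))$ (note the projection that handles the overwriting), and union is a bare $E_{\alpha_1}\cup E_{\alpha_2}$ with no re-joins needed. Your project-then-rejoin pattern is a legitimate alternative once the composition case is repaired, but the paper's strengthened hypothesis makes both the construction and the correctness argument somewhat tidier.
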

\begin{exa}\hfill
\begin{itemize}
\item Let $\alpha$ be $R(x;y) \comp S(y;z)$.  Recall that 
$I(\alpha) = \{x\}$.  A plan for $\alpha$ can be taken to be 
\[(\MyIn \access R(x;y)) \access S(y;z).\]
\item Let $\alpha$ be $R(x_1;y,u) \comp S(x_2,y; z,u)$.
Recall that $I(\alpha) = \{x_1,x_2\}$.  A plan for $\alpha$
can be taken to be
\[\pi_{x_1,x_2,y}(\MyIn \access
R(x_1;y,u)) \access S(x_2,y; z,u).\]
\item
Recall the expression
$R(x;y_1) \cup S(x;y_2)$ from Example~\ref{exio}, which has
input variables $\{x,y_1,y_2\}$ and no output variables.  A plan
for this expression is
\[(\pi_{x,y_2}(\MyIn) \access R(x;y_1)) \bowtie \MyIn
\; \cup \;
(\pi_{x,y_1}(\MyIn) \access S(x;y_2)) \bowtie \MyIn.\]
The joins with $\MyIn$ ensure that the produced output values
are equal to the given input values, which may be needed in case
$N$ has multiple tuples.
\end{itemize}
\end{exa}
\begin{proof}
To prove the theorem we need a stronger induction hypothesis,
where we allow $N$ to have a larger relation schema 
$Z \supseteq I(\alpha)$, while still being disjoint with $O(\alpha)$.  
The claim then is that 
\[
E_\alpha(D,N) = \{\nu \text{ on } Z\cup
O(\alpha) \mid \nu|_{\var(\alpha)} \in \EvalL \alpha D
{\nu|_{I(\alpha)}}\}.
\]

The base cases are clear.
If $\alpha$ is $R(\bar x;\bar y)$, then $E_\alpha$ is 
$\MyIn \access R(\bar x;\bar y)$.  If $\alpha$ is $(x=y)$, 
then $E_\alpha$ is the selection $\sigma_{x=y}(\MyIn)$.  
If $\alpha$ is $(x:=y)$, then $E_\alpha$ is the generalized 
projection $\pi_{y,x:=y}(\MyIn)$.

In what follows we use the following notation.  Let $P$ 
and $Q$ be plans.  By $Q(P)$ we mean the plan obtained 
from $Q$ by substituting $P$ for $\MyIn$.

Suppose $\alpha$ is $\alpha_1 \comp \alpha_2$.  Plan
$E_{\alpha_1}$, obtained by induction, assumes an input 
relation schema that contains $I(\alpha_1)$ and is disjoint 
from $O(\alpha_1)$.  Since $I(\alpha)=I(\alpha_1) \cup
(I(\alpha_2)-O(\alpha_1))$, $I(\alpha_1)\cap O(\alpha_1)=
\emptyset$, and $Z$ is disjoint from $O(\alpha)=O(\alpha_1)\cup
O(\alpha_2)$, we can apply $E_{\alpha_1}$ with input relation
schema $Z$.  Let $P_1$ be the plan
$\pi_{Z-O(\alpha_2)}(E_{\alpha_1})$.  Then $E_\alpha$ is the 
plan $E_{\alpha_2}(P_1)$.  (One can again verify that this
is a legal plan.)

Next, suppose $\alpha$ is $\alpha_1 \cup \alpha_2$.  Then
$I(\alpha)=I(\alpha_1)\cup I(\alpha_2)$, which is disjoint from
$O(\alpha_1)=O(\alpha_2)$ (compare Proposition~\ref{checkio}).
Hence, for $E_\alpha$ we can simply take the plan 
$E_{\alpha_1} \cup E_{\alpha_2}$.

\intersection{Next, suppose $\alpha$ is $\alpha_1 \cap \alpha_2$.
Note that $I(\alpha)=I(\alpha_1)\cup I(\alpha_2)\cup(O(\alpha_1)
\symdif O(\alpha_2))$. Now $E_\alpha$ is 
\[ 
E_{\alpha_1}(\pi_{I(\alpha)-O(\alpha_1)}(\MyIn)) \bowtie \MyIn 
\; \cap \;
E_{\alpha_2}(\pi_{I(\alpha)-O(\alpha_2)}(\MyIn)) \bowtie \MyIn.
\]
}

Finally, suppose $\alpha$ is $\alpha_1 - \alpha_2$.  Then
$E_\alpha$ is 
\[
E_{\alpha_1} -
(E_{\alpha_2}(\pi_{I(\alpha)-O(\alpha_2)}(\MyIn)) \bowtie \MyIn.
\] 

In general, in the above translations, we follow the principle
that the result of a subplan $E_{\alpha_i}$ must be joined with
$\MyIn$ whenever $O(\alpha_i)$ may intersect with $I(\alpha)$.
\end{proof}
\begin{rem}
When we extend plans with assignment statements such that 
common expressions can be given a name \cite{benedikt_book},
the translation given in the above proof leads to a plan 
$E_\alpha$ of size linear of the length of $\alpha$.  Each 
time we do a substitution of a subexpression for $\MyIn$ 
in the proof, we first assign a name to the
subexpression and only substitute the name.
\end{rem}

\begin{exa}\label{exp12}
Recall the query from Example~\ref{friends1} expressed in 
$\flio$ slightly differently as follows:
\[
F(x;y_1) \comp F(x;y_2) \comp (F(y_1;z) \cap 
F(y_2;z)) \comp (y_1 \neq y_2)
\]
The plan equivalent to this expression is:
\[
\sigma_{y_1 \neq y_2}((E \bowtie F(y_1;z)) 
\bowtie (E \bowtie F(y_2;z)))
\]
where $E$ is the partial plan 
$\MyIn \bowtie F(x;y_1) \bowtie F(x;y_2)$ with $\MyIn$ a 
relation name over $\{x\}$ providing input values.
\end{exa}

\section{Related Work}\label{secrel}

Much of the work on the topic of information sources with access
limitations has been on processing queries expressed in generic
query languages, such as conjunctive queries, unions of
conjunctive queries, conjunctive queries with negation,
first-order logic (relational calculus), or Datalog.  Here, the
query is written as if the database has no access limitations;
the challenge then is to find a query plan that does respect the
limitations, but produces, ideally, the same answers, or, failing
that, produces only correct answers (also known as sound
rewritings) \cite{dgl_rec_int}.

Query plans could take the form of syntactically ordered
fragments of the query languages that are used, like executable
FO considered in the present paper
\cite{rsu_bindingpatterns,lilimited,nl_accesspatterns}.  Query
plans can also be directly described in relational algebra, like
the plans defined here in Section~\ref{secplan}
\cite{ylgcu_capabilities_lim_med,flms_qo_limited,benedikt_planproofs_tods}.
An alternative approach to query processing under access
limitations is to first retrieve the ``accessible part'' of the
database; after that we can simply execute the original query on that
part, which is a sound strategy for monotone queries.  Computing
the accessible part may require recursion; on the other hand, the
computation can be optimized so as to
contain only information needed for the specific query
\cite{cali_query_limit}.

When the query language used is first-order logic, the planning
and optimization problems mentioned above are, of course,
undecidable.  Yet, a remarkable preservation theorem \cite{benedikt_book}
states that, assuming a given first-order query only depends on
the accessible part of the database (for any database; this is a
semantic and undecidable property), then, that query can actually
be rewritten into an executable FO formula.

Interestingly, a variant of our translation results from FLIF
to executable FO in Proposition~\ref{prop:FLIF-BoundedFO} can be seen to follow from the preservation theorem just mentioned. It would suffice to express a given FLIF expression $\alpha$ by any first-order logic formula $\varphi_\alpha$
in the free variables $\SVars_x \cup \SVars_y$, without taking care that $\varphi_\alpha$ is executable. Indeed, FLIF expressions are readily seen to be access-determined by the variables in $\SVars_x$, so, the preservation theorem would imply that $\varphi_\alpha$ can be equivalently written by an executable formula.  Of course, our result provides a much more direct translation, and moreover, shows a bound on the number of variables (free or bound) needed in $\varphi_\alpha$.

Furthermore, it is natural to expect (although we have not verified it formally) that any FLIF expression $\alpha$ is already access-determined by the set of its input variables $I(\alpha)$. In this manner, also Theorem~\ref{flio2exfo} would be implied by the preservation theorem.  Again, our theorem provides a direct and actually very efficient translation.

Incidentally, in the cited work \cite{benedikt_book}, Benedikt et
al.\ define their own notion of executable FO, 
syntactically rather different from the one we use in the
present paper (which was introduced much earlier by Nash and
Lud\"ascher \cite{nl_accesspatterns}).  We prefer the
language we use for its elegance, and because its treatment of input
variables matches well with input variables for FLIF expressions.
Still, both executable-FO languages are
equivalent in expressive power, as they are both equivalent to
the plans used here in Section~\ref{secplan} and also used by
Benedikt et al.

In a companion paper \cite{lif-tocl}, first presented at the KR
2020 conference, we consider LIF in a broader (but still
first-order) context, independently of access limitations.
The companion paper considers the problem of sensitivity analysis
for general LIF expressions and introduces semantic as well as
syntactic definitions of input and output variables.  the
syntactic definitions were shown to be optimal approximations of
the semantic definitions.
We have adopted the syntactic treatment in this paper, and have shown its
relevance, when applied it to FLIF, to querying information
sources with access limitations.  Propositions \ref{propinertia},
\ref{propfv} and \ref{propindet} are adopted here from our companion
paper (there, numbered Proposition~4.3, Lemma~4.4, and
Lemma~4.6, respectively); the proofs can be found there.

\section{Conclusion}\label{seconcl}

We have presented a connection between executable queries on
databases with access limitations on the one hand, and first-order
dynamic logic frameworks on the other hand.  Specifically, we
have defined Forward LIF (FLIF), an instantiation of the Logic of
Information Flows (LIF\@).  FLIF presents itself as an XPath-like
language for graphs of valuations, where edges
represent information accesses.  The main novelty of FLIF lies in
its graph-navigational nature (without explicit quantification),
its input-output mechanism, and the law of inertia that it obeys.
Specifically for io-disjoint FLIF expressions, our work also
presents a more transparent alternative to the result by Benedikt
et al.\ on translating (their version of) executable first-order
formulas to plans.  We have also given renewed attention to Nash
and Lud\"ascher's elegant executable FO language, which seemed to have
been overlooked by more recent research in the field.

Figure~\ref{fig:summary} illustrates our main technical results,
which offer translations between various languages.  Most of the
translations are simple in their formulation, although the
rigorous proof of correctness is not always that simple. 

\begin{figure}
    \centering 
    \includegraphics[width=\textwidth]{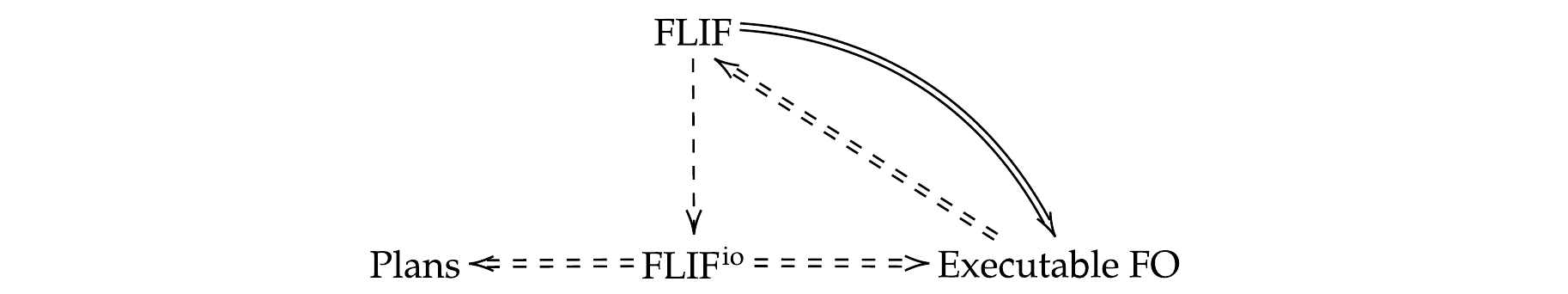}
    \caption{Summary of the translations shown in the paper, where
    a dashed arrow denotes an input-respecting translation, a double arrow 
    denotes a simple translation, and finally, a double dashed arrow denotes both.}
    \label{fig:summary}
\end{figure}

We are not claiming that FLIF is necessarily more user-friendly
than previous languages, or necessarily easier to implement or
optimize.  Both of these aspects should be the topic of further research.
Still we believe it offers a novel perspective.  That FLIF can
express all executable FO queries is something that is not
obvious at first sight.

In closing, we note that querying under limited access patterns
has applicability beyond traditional data or information sources.
For instance in the context of distributed data, when performing
tasks involving the composition of external services, functions,
or modules, limited access patterns are a way for service
providers to protect parts of their data, while still allowing
their services to be integrated seamlessly in other applications.
Limited access patterns also have applications in active
databases, where we like to think of FLIF as an analog of
Active XML \cite{activexml} for the relational data model.

\section*{Acknowledgment}

This research was partially supported by the Flanders AI Research Program. We thank the anonymous reviewers for their critical comments on an earlier version of this paper, which prompted us to significantly improve the presentation of our results.


\bibliographystyle{alphaurl}

\newcommand{\etalchar}[1]{$^{#1}$}


\begin{thebibliography}{YLGMU99}

\bibitem[AAB{\etalchar{+}}17]{angles_survey}
R.~Angles, M.~Arenas, P.~Barcel\'o, A.~Hogan, J.~Reutter, and D.~Vrgo\v{c}.
\newblock Foundations of modern query languages for graph databases.
\newblock {\em ACM Computing Surveys}, 50(5):68:1--68:40, 2017.

\bibitem[ABM08]{activexml}
S.~Abiteboul, O.~Benjelloun, and T.~Milo.
\newblock The {A}ctive {XML} project: an overview.
\newblock {\em The VLDB Journal}, 17(5):1019--1040, 2008.

\bibitem[ABS{\etalchar{+}}20]{flifexfo}
H.~Aamer, B.~Bogaerts, D.~Surinx, E.~Ternovska, and J.~Van~den Bussche.
\newblock Executable first-order queries in the logic of information flows.
\newblock In {\em Proceedings 23rd International Conference on Database
  Theory}, volume 155 of {\em Leibniz International Proceedings in
  Informatics}, pages 4:1--4:14. Schloss Dagstuhl--Leibniz-Zentrum fuer
  Informatik, 2020.

\bibitem[ABS{\etalchar{+}}23]{lif-tocl}
H.~Aamer, B.~Bogaerts, D.~Surinx, E.~Ternovska, and J.~Van~den Bussche.
\newblock Inputs, outputs, and composition in the logic of information flows.
\newblock {\em ACM Transactions on Computational Logic}, 24(4):33:1--33:44,
  2023.

\bibitem[AHV95]{ahv_book}
S.~Abiteboul, R.~Hull, and V.~Vianu.
\newblock {\em Foundations of Databases}.
\newblock Addison-Wesley, 1995.

\bibitem[AVdB21]{AamerIO}
H.~Aamer and J.~Van~den Bussche.
\newblock {Input-Output Disjointness for Forward Expressions in the Logic of
  Information Flows}.
\newblock In Ke~Yi and Zhewei Wei, editors, {\em 24th International Conference
  on Database Theory (ICDT 2021)}, volume 186 of {\em Leibniz International
  Proceedings in Informatics (LIPIcs)}, pages 8:1--8:18, Dagstuhl, Germany,
  2021. Schloss Dagstuhl -- Leibniz-Zentrum f{\"u}r Informatik.

\bibitem[BBB13]{bbb_access_constraints}
V.~B\'ar\'any, M.~Benedikt, and P.~Bourhis.
\newblock Access patterns and integrity constraints revisited.
\newblock In W.-C. Tan et~al., editors, {\em Proceedings 16th International
  Conference on Database Theory}, pages 213--224. ACM, 2013.

\bibitem[BGS11]{bgs_relevance_access}
M.~Benedikt, G.~Gottlob, and P.~Senellart.
\newblock Determining relevance of accesses at runtime.
\newblock In {\em Proceedings 30st ACM Symposium on Principles of Databases},
  pages 211--222. ACM, 2011.

\bibitem[BLT15]{blt_query_access_constraints}
M.~Benedikt, J.~Leblay, and E.~Tsamoura.
\newblock Querying with access patterns and integrity constraints.
\newblock {\em Proceedings of the VLDB Endownment}, 8(6):690--701, 2015.

\bibitem[BLtCT16]{benedikt_book}
M.~Benedikt, J.~Leblay, B.~ten Cate, and E.~Tsamoura.
\newblock {\em Generating Plans from Proofs: The Interpolation-based Approach
  to Query Reformulation}.
\newblock Morgan \& Claypool, 2016.

\bibitem[BtCT16]{benedikt_planproofs_tods}
M.~Benedikt, B.~ten Cate, and E.~Tsamoura.
\newblock Generating plans from proofs.
\newblock {\em ACM Transactions on Database Systems}, 40(4):22:1--22:45, 2016.

\bibitem[CCM09]{cali_opt_accpat}
A.~Cal{\i}, D.~Calvanese, and D.~Martinenghi.
\newblock Dynamic query optimization under access limitations and dependencies.
\newblock {\em Journal of Universal Computer Science}, 15(1):33--62, 2009.

\bibitem[CM08a]{cali_cont_limit}
A.~Cal{\i} and D.~Martinenghi.
\newblock Conjunctive query containment under access limitations.
\newblock In Q.~Li, S.~Spaccapietra, et~al., editors, {\em Proceedings 27th
  International Conference on Conceptual Modeling}, volume 5231 of {\em Lecture
  Notes in Computer Science}, pages 326--340. Springer, 2008.

\bibitem[CM08b]{cali_query_limit}
A.~Cal{\i} and D.~Martinenghi.
\newblock Querying data under access limitations.
\newblock In {\em Proceedings 24th International Conference on Data
  Engineering}, pages 50--59. IEEE Computer Society, 2008.

\bibitem[CMRU17]{cali_deepweb_amw}
A.~Cal{\i}, D.~Martinenghi, I.~Razon, and M.~Ugarte.
\newblock Querying the deep web: Back to the foundations.
\newblock In Reutter and Srivastava \cite{amw2017}.

\bibitem[CU18]{cali_access_comput}
A.~Cal{\i} and M.~Ugarte.
\newblock On the complexity of query answering under access limitations: A
  computational formalism.
\newblock In D.~Olteanu and B.~Poblete, editors, {\em Proceedings 12th Alberto
  Mendelzon International Workshop on Foundations of Data Management}, volume
  2100 of {\em CEUR Workshop Proceedings}, 2018.

\bibitem[DGL00]{dgl_rec_int}
O.M. Duschka, M.R. Genesereth, and A.Y. Levy.
\newblock Recursive query plans for data integration.
\newblock {\em Journal of Logic Programming}, 43(1):49--73, 2000.

\bibitem[DLN07]{nl_access_constraints}
A.~Deutsch, B.~Lud\"ascher, and A.~Nash.
\newblock Rewriting queries using views with access patterns under integrity
  constraints.
\newblock {\em Theoretical Computer Science}, 371(3):200--226, 2007.

\bibitem[FGL{\etalchar{+}}15]{rafragments_ins}
G.H.L. Fletcher, M.~Gyssens, D.~Leinders, D.~Surinx, J.~Van~den Bussche,
  D.~Van~Gucht, S.~Vansummeren, and Y.~Wu.
\newblock Relative expressive power of navigational querying on graphs.
\newblock {\em Information Sciences}, 298:390--406, 2015.

\bibitem[FLMS99]{flms_qo_limited}
D.~Florescu, A.Y. Levy, I.~Manolescu, and D.~Suciu.
\newblock Query optimization in the presence of limited access patterns.
\newblock In SIGMOD99 \cite{sigmod99}, pages 311--322.

\bibitem[GS91]{dplogic}
J.~Groenendijk and M.~Stokhof.
\newblock Dynamic predicate logic.
\newblock {\em Linguistics and Philosophy}, 14:39--100, 1991.

\bibitem[HKT00]{dynamiclogicbook}
D.~Harel, D.~Kozen, and J.~Tiuryn.
\newblock {\em Dynamic Logic}.
\newblock MIT Press, 2000.

\bibitem[IL84]{il_cylindric}
T.~Imielinski and W.~Lipski.
\newblock The relational model of data and cylindric algebras.
\newblock {\em Journal of Computer and System Sciences}, 28:80--102, 1984.

\bibitem[Li03]{lilimited}
C.~Li.
\newblock Computing complete answers to queries in the presence of limited
  access patterns.
\newblock {\em The VLDB Journal}, 12(3):211--227, 2003.

\bibitem[Lib04]{libkin_fmt}
L.~Libkin.
\newblock {\em Elements of Finite Model Theory}.
\newblock Springer, 2004.

\bibitem[LMV13]{gxpath}
L.~Libkin, W.~Martens, and D.~Vrgo\v{c}.
\newblock Quering graph databases with {XPath}.
\newblock In {\em Proceedings 16th International Conference on Database
  Theory}. ACM, 2013.

\bibitem[Mad91]{maddux_originra}
R.D. Maddux.
\newblock The origin of relation algebras in the development and axiomatization
  of the calculus of relations.
\newblock {\em Studia Logica}, 50(3/4):421--455, 1991.

\bibitem[MHF03]{mhf_cont_int}
T.D. Millstein, A.Y. Halevy, and M.T. Friedman.
\newblock Query containment for data integration systems.
\newblock {\em Journal of Computer and System Sciences}, 66(1):20--39, 2003.

\bibitem[NL04]{nl_accesspatterns}
A.~Nash and B.~Lud\"ascher.
\newblock Processing first-order queries under limited access patterns.
\newblock In {\em Proceedings 23th ACM Symposium on Principles of Database
  Systems}, pages 307--318, 2004.

\bibitem[PAG10]{nsparql}
J.~P\'erez, M.~Arenas, and C.~Gutierrez.
\newblock {nSPARQL}: A navigational language for {RDF}.
\newblock {\em Journal of Web Semantics}, 8(4):255--270, 2010.

\bibitem[Pra92]{pratt_relcalc}
V.~Pratt.
\newblock Origins of the calculus of binary relations.
\newblock In {\em Proceedings 7th Annual IEEE Symposium on Logic in Computer
  Science}, pages 248--254, 1992.

\bibitem[RS17]{amw2017}
J.L. Reutter and D.~Srivastava, editors.
\newblock {\em Proceedings 11th Alberto Mendelzon International Workshop on
  Foundations of Data Management}, volume 1912 of {\em CEUR Workshop
  Proceedings}, 2017.

\bibitem[RSU95]{rsu_bindingpatterns}
A.~Rajaraman, Y.~Sagiv, and J.D. Ullman.
\newblock Answering queries using templates with binding patterns.
\newblock In {\em Proceedings Fourteenth {ACM} Symposium on Principles of
  Database Systems}, pages 105--112. ACM Press, 1995.

\bibitem[SFG{\etalchar{+}}15]{nav_with_tc}
D.~Surinx, G.H.L. Fletcher, M.~Gyssens, D.~Leinders, J.~Van~den Bussche,
  D.~Van~Gucht, S.~Vansummeren, and Y.~Wu.
\newblock Relative expressive power of navigational querying on graphs using
  transitive closure.
\newblock {\em Logic Journal of the IGPL}, 23(5):759--788, 2015.

\bibitem[SIG99]{sigmod99}
{\em Proceedings of the 1999 ACM SIGMOD International Conference on Management
  of Data}, volume 28:2 of {\em SIGMOD Record}. ACM Press, 1999.

\bibitem[Tar41]{tarski_relcalc}
A.~Tarski.
\newblock On the calculus of relations.
\newblock {\em Journal of Symbolic Logic}, 6:73--89, 1941.

\bibitem[tCM07]{catemarx_sigmodrecord}
B.~ten Cate and M.~Marx.
\newblock Navigational {XPath}: Calculus and algebra.
\newblock {\em SIGMOD Record}, 36(2):19--26, 2007.

\bibitem[Ter17]{lif_amw}
E.~Ternovska.
\newblock Recent progress on the algebra of modular systems.
\newblock In Reutter and Srivastava \cite{amw2017}.

\bibitem[Ter19]{lif_frocos}
E.~Ternovska.
\newblock An algebra of modular systems: static and dynamic perspectives.
\newblock In A.~Herzig and A.~Popescu, editors, {\em Frontiers of Combining
  Systems: Proceedings 12th FroCos}, volume 11715 of {\em Lecture Notes in
  Artificial Intelligence}, pages 94--111. Springer, 2019.

\bibitem[VdB01]{vdb_tarski}
J.~Van~den Bussche.
\newblock Applications of {Alfred} {Tarski's} ideas in database theory.
\newblock In L.~Fribourg, editor, {\em Computer Science Logic}, volume 2142 of
  {\em Lecture Notes in Computer Science}. Springer, 2001.

\bibitem[YLGMU99]{ylgcu_capabilities_lim_med}
R.~Yerneni, C.~Li, H.~Garcia-Molina, and J.D. Ullman.
\newblock Computing capabilities of mediators.
\newblock In SIGMOD99 \cite{sigmod99}, pages 443--454.

\end{thebibliography}



\end{document}